\def\submitteddate{November 19, 2013}
\newtheorem{thm}              {Theorem}     [section]
\newtheorem{lem}      [thm] {Lemma}
\newtheorem{cor}  [thm] {Corollary}
\theoremstyle{definition}
\newtheorem{defn} [thm] {Definition}
\begin{document}

\newcommand{\creationtime}{\today \ \ @ \theampmtime}

\pagestyle{fancy}
\renewcommand{\headrulewidth}{0cm}
\chead{\footnotesize{Dougherty-Freiling-Zeger}}
\rhead{\footnotesize{\submitteddate}}
\lhead{}
\cfoot{Page \arabic{page} of \pageref{LastPage}} 

\renewcommand{\qedsymbol}{$\blacksquare$} 


\newtheorem{theorem}              {Theorem}    
\newtheorem{corollary}  [theorem] {Corollary}
\newtheorem{proposition}[theorem] {Proposition}
\newtheorem{remark}     [theorem] {Remark}
\newtheorem{algorithm}  [theorem] {Algorithm}
\newtheorem{conjecture} [theorem] {Conjecture}
\newtheorem{example}    [theorem] {Example}

\theoremstyle{definition}
\newtheorem{definition} [theorem] {Definition}
\newtheorem*{claim}  {Claim}
\newtheorem*{notation}  {Notation}
\newcommand{\Dim}{\mbox{dim}}
\newcommand{\Codim}{\mbox{codim}}
\newcommand{\Comment}[1]{& [\mbox{from  #1}]}
\renewcommand{\emptyset}{\varnothing} 
\renewcommand{\subset}{\subseteq}     
\newcommand{\TBA}{*** To Be Added ***}
\newcommand{\Compose}{} 

\begin{titlepage}

\setcounter{page}{0}

\title{Characteristic-Dependent Linear Rank Inequalities with Applications to Network Coding
\thanks{This work was supported by the Institute for Defense Analyses and the
National Science Foundation.\newline
\indent \textbf{R. Dougherty} is with the 
Center for Communications Research,
4320 Westerra Court, 
San Diego, CA 92121-1969 (rdough@ccrwest.org).\newline
\indent \textbf{E. Freiling and K. Zeger} are with the 
Department of Electrical and Computer Engineering, 
University of California, San Diego, 
La Jolla, CA 92093-0407
(efreilin@ucsd.edu, zeger@ucsd.edu).
}}

\author{Randall Dougherty, Eric Freiling, and Kenneth Zeger\\ \ }

\date{
\textit{
IEEE Transactions on Information Theory\\
Submitted: \submitteddate
}}

\maketitle
\begin{abstract}
Two characteristic-dependent linear rank inequalities are given for eight variables.
Specifically, the first inequality holds for all finite fields whose characteristic
is not three and does not in general hold over characteristic three.
The second inequality holds for all finite fields whose characteristic
is three and does not in general hold over characteristics other than three.
Applications of these inequalities to the computation of 
capacity upper bounds in network coding are demonstrated.
\end{abstract}

\thispagestyle{empty}
\end{titlepage}

\newpage

\section{Introduction}

The study of information inequalities is a subfield of information theory that
describes linear constraints on the entropies of 
finite collections of jointly distributed discrete random variables.
Historically, the known information inequalities were orignally all special
cases of Shannon's conditional mutual information inequality $I(X;Y|Z)\ge 0$,
but later were generalized to other types of inequalities, called non-Shannon inequalities.
Information inequalities have been shown to be useful for computing upper bounds
on the network coding capacities of certain networks.

Analagously,
the study of linear rank inequalities is a topic of linear algebra,
which describes linear constraints on the dimensions of collections of subspaces of finite dimensional vector spaces.
In fact, the set of all information inequalities can be viewed as subclass of
the set of all linear rank inequalities.

Information inequalities hold over all collections of a certain number of random variables.
In constrast, linear rank inequalities may hold over only certain vector spaces,
such as those whose scalars have particular field characteristics.

In this paper,
we present two new linear rank inequalities over finite fields,
which are not information inequalities,
and with the peculiar property
that they only hold for certain fields, 
depending on the associated vector space.
The first inequality is shown to hold over all vector spaces when the field characteristic is anything but three
(Theorem~\ref{thm:T8}),
but does not always hold when the field characteristic is three
(Theorem~\ref{thm:T8-char3}).
In contrast,
the second inequality is shown to hold over all vector spaces when the field characteristic is three
(Theorem~\ref{thm:nonT8}),
but does not always hold when the field characteristic is not three
(Theorem~\ref{thm:non-T8-non-char3}).
We also show how these inequalities can be used to obtain bounds on the capacities of
certain networks
(Corollaries~\ref{cor:T8-capacity} and \ref{cor:nonT8-capacity}).

It will be assumed that the reader has familiarity with linear algebra,
finite fields, information theory, and network coding.
Nevertheless, we will give some brief tutorial descriptions of these topics for completeness.

\subsection{Background}

In 2000, Ahlswede, Cai, Li, and Yeung introduced the field of Network Coding~\cite{networkcoding1} 
and showed that coding can outperform routing in 
directed acyclic networks.%
\footnote{In what follows, by ``network'' we shall always mean a directed acyclic network.}
There are presently no known 
algorithms to determine the capacity or the linear capacity of a given network.
In fact, it is not even known if such algorithms exist.

Information inequalities are linear inequalities that hold for all jointly
distributed random variables,
and Shannon inequalities are information
inequalities of a certain form~\cite{shannon}. 
Both are defined in Section~\ref{section:LRI}. 
It is known~\cite{Yeung-book} that all information
inequalities containing three or fewer variables are Shannon inequalities. 
The first ``non-Shannon'' information inequality was of four variables and was
published in 1998 by Zhang and Yeung~\cite{Zhang-Yeung}. 
Since 1998, various other non-Shannon inequalities have been found,
for example, by
Ln\v{e}ni\v{c}ka~\cite{Lnenicka}, 
Makarychev, Makarychev, Romashchenko, and Vereshchagin~\cite{MMRV}, 
Zhang~\cite{Zhang13}, 
Zhang and Yeung~\cite{Zhang-Yeung14}, 
Dougherty, Freiling, and Zeger~\cite{DFZ2}, and
Mat\'{u}\v{s}~\cite{Matus}. 
Additionally, in 2007, Mat\'{u}\v{s} demonstrated an infinite collection of independent
non-Shannon information inequalities~\cite{Matus}
and there were necessarily an infinite number of such inequalities.
In 2008, Xu, Wang, and Sun~\cite{XuWangSun} also gave an infinite list of inequalities
but did not establish their necessity.

There is a close connection between information inequalities and network coding~\cite{ChanGrant}. 
Capacities of some networks have been computed by finding
matching lower and upper bounds~\cite{nonshannon}. 
Lower bounds have been found by deriving coding solutions. 
Upper bounds have been found by using
information inequalities and treating the sources as independent random variables 
that are uniformly distributed over the alphabet.
One ``holy grail'' problem of network coding is to develop an algorithm to compute the coding capacity of an arbitrary network.
If such an algorithm exists, 
information inequalities may potentially play a role in the solution.

It has been shown that linear codes are insufficient for network coding in
general~\cite{insufficient}. 
However, linear codes may be desirable to use in practice due to ease of analysis
and implementation.
It has been shown that the coding capacity
is independent of the alphabet size~\cite{Cannons-Dougherty-Freiling-Zeger05}.
However, the linear coding capacity is dependent on alphabet size, 
or more specifically the field characteristic.
In other words, one can potentially achieve a higher rate of linear
communication by choosing one characteristic over another.
To provide 
upper bounds for the linear coding capacity for a particular field one can look
at linear rank inequalities~\cite{rateregions-journal}.
Linear rank inequalities are
linear inequalities that are always satisfied by ranks%
\footnote{Throughout this paper, 
we will use the terminology ``rank'' of a subspace to mean the dimension of the subspace
(i.e. the rank of a matrix whose columns are a basis for the subspace),
in order to parallel the terminology of matroid theory.
}
of subspaces of a vector space.
All information inequalities are linear rank inequalities but not all
linear rank inequalities are information inequalities.
The first example of a
linear rank inequality that is not an information inequality was found by
Ingleton~\cite{ingleton}.
Information inequalities can provide an upper bound
for the capacity of a network, but this upper bound would hold for all alphabets.
Therefore, to determine the linear coding capacity over a certain
characteristic one would have to consider linear rank inequalities.

All linear rank inequalities up to and including five variables are known
and none of these depend on the vector spaces' field characteristics~\cite{5var}.
The set of all linear rank inequalities for six variables has not yet been determined.
Characteristic-dependent linear rank inequalities 
are given, for example, in
~\cite{Blasiak-Kleinberg-Lubetzky}
and ~\cite{rateregions-journal}.

An inequality is given in~\cite{rateregions-journal} which is valid for characteristic two and another inequality is given which is valid
for every characteristic except for two.
These inequalities are then used to
provide upper bounds for the linear coding capacity of two networks.

In the present paper, we give two characteristic-dependent linear rank inequalities on eight variables.
One is valid for characteristic three and the other is valid for
every characteristic except for three.
These inequalities are then used to
provide upper bounds for the linear coding capacity of two networks.

It is our intention that the techniques presented here may prove useful or
otherwise motivate further progress in determining network capacities.

\subsection{Matroids}

In this section a very brief review of matroids is given which
will enable discussion in subsequent sections of a matroid-based
method for constructing a particular network that helps in the
derivation of the linear rank inequalities presented in this paper.

A matroid is an abstract structure that captures a notion of ``independence" that is found in 
finite dimensional vector spaces, graphs, and various other mathematical topics.
We will follow the notation and results of~\cite{matroidBook}.

\begin{defn}
A \textit{matroid}, $M$, is a pair $(E,I)$, where $E$ is a
finite set and $I$ is a set of subsets of $E$ that satisfies the
following properties:
\begin{enumerate}
    \item[(I1)] $\emptyset \in I$.
    \item[(I2)] $\forall A,B\subset E$, if $A \subseteq B \in I$, then $A \in I$.
    \item[(I3)] $\forall A,B\subset E$, 
           if $A,B \in I$ and $|A| > |B|$, 
           then $\exists u \in A\setminus B$ such that $B \cup \{u\} \in I$.
\end{enumerate}
\end{defn}
The sets in $I$ are called \textit{independent sets}.
If a subset of $E$ is not in $I$, then it is called \textit{dependent}.

An example of a matroid is obtained from linear algebra.
Let $F$ be a finite field and let $V(m,F)$ be the vector space of all $m$-dimensional vectors whose
components are elements of $F$.
Suppose $A$ is an $m\times n$ matrix over $F$.
Let $E = \{1,\ldots,n\}$ and $I$ be the set of all $X \subseteq E$ such
that the multiset of columns of $A$ indexed by the elements of $X$ is linearly
independent in the vector space $V(m,F)$.
Then $M = (E,I)$ is a matroid called the \textit{vector matroid} of $A$.

A matroid is said to be \textit{representable} over the field $F$
if it is isomorphic to some vector matroid over $V(m,F)$.

For example, if $F$ is the binary field and
$$A = \bordermatrix{& a & b & c & d & e\cr
                    & 1 & 0 & 0 & 1 & 1\cr
                    & 0 & 1 & 0 & 0 & 1}$$
where $a,b,c,d,e$ denote the columns of $A$ from left to right,
then $M = (E,I)$ is a vector matroid of $A$, 
where $E = \{a,b,c,d,e\}$ and
\begin{align*}
I &=  \{\emptyset, \{a\}, \{b\}, \{d\}, \{e\}, \{a,b\}, \{a,e\}, \{b,d\}, \{b,e\}, \{d,e\}\}.
\end{align*}

A \textit{base} is a maximal independent set.
Let $B(M)$ denote the set of all bases of a matroid $M$.
In our example, 
$$B(M) = \{\{a,b\}, \{a,e\}, \{b,d\}, \{b,e\}, \{d,e\}\}.$$
It is well known that all the bases of a matroid are of the same cardinality.

If we let $X \subseteq E$ and $I|X = \{i \subseteq X: i \in I\}$, 
then it is easy to see that $(X,I|X)$ is a matroid.
The \textit{rank} of $X$, denoted by $r(X)$, is defined to be the cardinality of a base in $M|X$.
In our example, $r(M) = 2$.
A \textit{circuit} is a minimal dependent set.
The circuits in our example are $\{\{c\}, \{a,d\}, \{a,b,e\}, \{b,d,e\}\}$.

\subsection{Information Theory and Linear rank Inequalities}\label{section:LRI}

In this section we will use the information theoretic concepts of entropy and mutual information
to define and use the linear algebraic concept of linear rank inequalities.
Connections between information inequalities and linear rank inequalities is also discussed.

Let $A,B,C$ be collections of discrete random variables over a finite alphabet
$\mathcal{X}$, and let $p$ be the probability mass function of $A$.
The
\textit{entropy} of $A$ is defined by
\begin{align*}
    H(A) &=  -\sum_u p(u)\log_{|\mathcal{X}|}p(u).
\end{align*}
The \textit{conditional entropy} of $A$ given $B$ is
\begin{align}
    H(A|B) &=  H(A,B) - H(B)\label{h1},
\end{align}
the \textit{mutual information} between $A$ and $B$ is
\begin{align}
    I(A;B) &=  H(A) - H(A|B) = H(A) + H(B) - H(A,B),\label{h2}
\end{align}
and the \textit{conditional mutual information} between $A$ and $B$ given $C$ is
\begin{align}
    I(A;B|C) &= H(A|C) - H(A|B,C) = H(A,C) + H(B,C) - H(C) - H(A,B,C)\label{h3}.
\end{align}
We will make use of the following basic information-theoretic facts \cite{Yeung-book}:
\begin{align}
    0 &= H(\emptyset) \label{h4}\\
    0 &\le H(A) = H(A|\emptyset)\label{h5}\\
    0 &\le H(A|B)\label{h6}\\
    0 &\le I(A;B)\label{h7}\\
    H(A,B|C) &\le H(A|C) + H(B|C)\label{h8}\\
    H(A|B,C) &\le H(A|B) \leq H(A,C|B)\label{h9}\\
    I(A;B,C) &= I(A;B|C) + I(A;C). \label{h12}
\end{align}
The equations (\ref{h5})-(\ref{h9}) were originally given by Shannon in 1948
\cite{shannon}, and can all be obtained from the single inequality $I(A;B|C) \geq 0$.
\begin{defn}
Let $q$ be a positive integer, and let $S_1,\ldots,S_k$ be subsets of
$\{1,\ldots,q\}$.
Let $\alpha_i \in \mathbb{R}$ for $1\leq i \leq k$.
A linear
inequality of the form
\begin{align}
    \alpha_1 H(\{A_i:i\in S_1\}) + \dots + \alpha_k H(\{A_i:i\in S_k\}) &\ge 0
   \label{inequality-format}
\end{align}
is called an \textit{information inequality} if it holds for all jointly distributed random variables $A_1,\ldots,A_q$.
\end{defn}
As an example, taking $q = 2$, $S_1 = \{1\}$, $S_2 = \{2\}$, $S_3 = \emptyset$,
$S_4 = \{1,2\}$, $\alpha_1 = \alpha_2 = 1$, $\alpha_4 = -1$, and using
(\ref{h8}) shows that $H(A_1) + H(A_2) - H(A_1,A_2) \ge 0$ is an information
inequality.

\bigskip
A \textit{Shannon information inequality} is any information inequality that can be expressed as a finite sum of the form
\begin{align*}
& \sum_i \alpha_i I(A_i;B_i|C_i) \ge 0
\end{align*}
where each $\alpha_i$ is a nonnegative real number.
Any information inequality that cannot
be expressed in the form above will be called a \textit{non-Shannon information inequality}.

Linear rank inequalities are closely
related to information inequalities.
In fact, in order to describe linear rank inequalities
we will borrow notation from information theory to use in the context of
linear algebra in the following manner.

Suppose $A$ and $B$ are subspaces of a given vector space $V$,
and let $\langle A, B\rangle$ denote the span of $A\cup B$.
We will let $H(A)$ denote the rank of $A$, 
and let $H(A,B)$ denote the rank of $\langle A, B\rangle$.
The meanings of some other information theoretic notation in the context of linear algebra
then follows from~\eqref{h1}-\eqref{h3}.
Specifically, note that
the conditional entropy notation $H(A|B)$ 
denotes the excess rank of subspace $A$ over that of subspace $A\cap B$, 
or equivalently, the codimension of $A\cap B$ in $A$;
and the mutual information notation $I(A;B)$ denotes the rank of $A\cap B$.

A \textit{linear rank inequality} 
over a vector space $V$
is a linear inequality 
of the form in \eqref{inequality-format},
that is satisfied
by every assignment of subspaces of $V$ to the variables $A_1, \dots, A_q$.

All information inequalities are
linear rank inequalities over all finite vector spaces, 
but not all linear rank inequalities are information inequalities.
For background material on these concepts, the reader is referred to
Hammer, Romashchenko, Shen, and Vereshchagin \cite{rank}.

The first known example of a linear rank inequality 
over all finite vector spaces
that is not an information inequality is the \textit{Ingleton inequality}~\cite{ingleton}:
\begin{align*}
I(A;B) &\le  I(A;B|C) + I(A;B|D) + I(C;D).
\end{align*}
To see that the Ingleton inequality is not an information inequality,
let $A,B,C,D$ be binary random variables, and let $X = (A,B,C,D)$ with probabilities:
\begin{align*}
    P(X=0000) &= 1/4\\
    P(X=1111) &= 1/4\\
    P(X=0101) &= 1/4\\
    P(X=0110) &= 1/4.
\end{align*}
Then the Ingleton inequality fails since:
\begin{align*}
\underbrace{I(A;B)}_{(5-\log_2 27)/2}
&- \underbrace{I(A;B|C)}_{0} - \underbrace{I(A;B|D)}_{0} - \underbrace{I(C;D)}_{0}
> 0.
\end{align*}

\subsection{Network Coding}

In this section,
we will briefly
review some concepts of network coding.
This will enable the discussion later in this paper of our construction of linear rank inequalities
using networks constructed from two particular matroids (T8 and non-T8).
For more details on network coding, see \cite{networkcodingbook}.

A \textit{network} is a finite, directed, acyclic multigraph with messages and demands.
Network \textit{messages} are arbitrary vectors of $k$ symbols over a finite alphabet $\mathcal{A}$.
Each network edge carries a vector of $n$ symbols from $\mathcal{A}$.
Each message originates at
a particular node called the \textit{source node} for that message and is
required by one or more \textit{demand nodes}.
When we draw a network, 
a message variable appearing above a node indicates the message is generated by such node%
\footnote{
We note that in Figures~\ref{fig:T8} and \ref{fig:nonT8},
for convenience, 
we label source messages above nodes lying in both the top and bottom layers in each diagram.
This is meant to indicate that there is, in fact, a separate (but hidden) distinct node for each such
source message,
whose out-edges go directly to the nodes labeled by the source message in the top and bottem layers.
},
and 
a message variable appearing below a node indicates the message is demanded by such node,
For a given network, the values of $k$ and $n$ can be chosen in order to implement certain
codes and to obtain certain throughput $k/n$.

The inputs to a network node are the vectors carried on its in-edges as well as
the messages, if any, generated at the node.
The outputs of a network node are the
packets carried on its out-edges as well as any demanded messages at the node.
Each output of a node must be a function only of its inputs.
A \textit{coding solution} for the
network is an assignment of such functions to the network edges.
When the values of $k$ and $n$ need
to be emphasized, the coding solution will be called a $(k,n)$-coding solution.
The \textit{capacity} of a network is defined as:
\begin{align*}
    \mathcal{C} &=  \sup\{ k/n : \exists\mbox{ a $(k,n)$-coding solution}\}.
\end{align*}

A solution is called a \textit{linear solution},
if the alphabet $\mathcal{A}$ is a finite field
and the edge functions are linear
(i.e.  linear combinations of their input vectors where the coefficients are matrices over the field).

The \textit{linear capacity} is defined the same
as the capacity but restricting solutions to be linear.
It is also easily verified that if $x$ is a message, 
then $H(x) = k$, 
and if $x$ is a vector carried by an edge, then $H(x) \leq n$.

\begin{figure}
\begin{center}
\psfrag{n1}{\LARGE $n_{1}$}
\psfrag{n2}{\LARGE $n_{2}$}
\psfrag{n3}{\LARGE $n_{3}$}
\psfrag{n4}{\LARGE $n_{4}$}
\psfrag{n5}{\LARGE $n_{5}$}
\psfrag{n6}{\LARGE $n_{6}$}
\includegraphics[scale=.75]{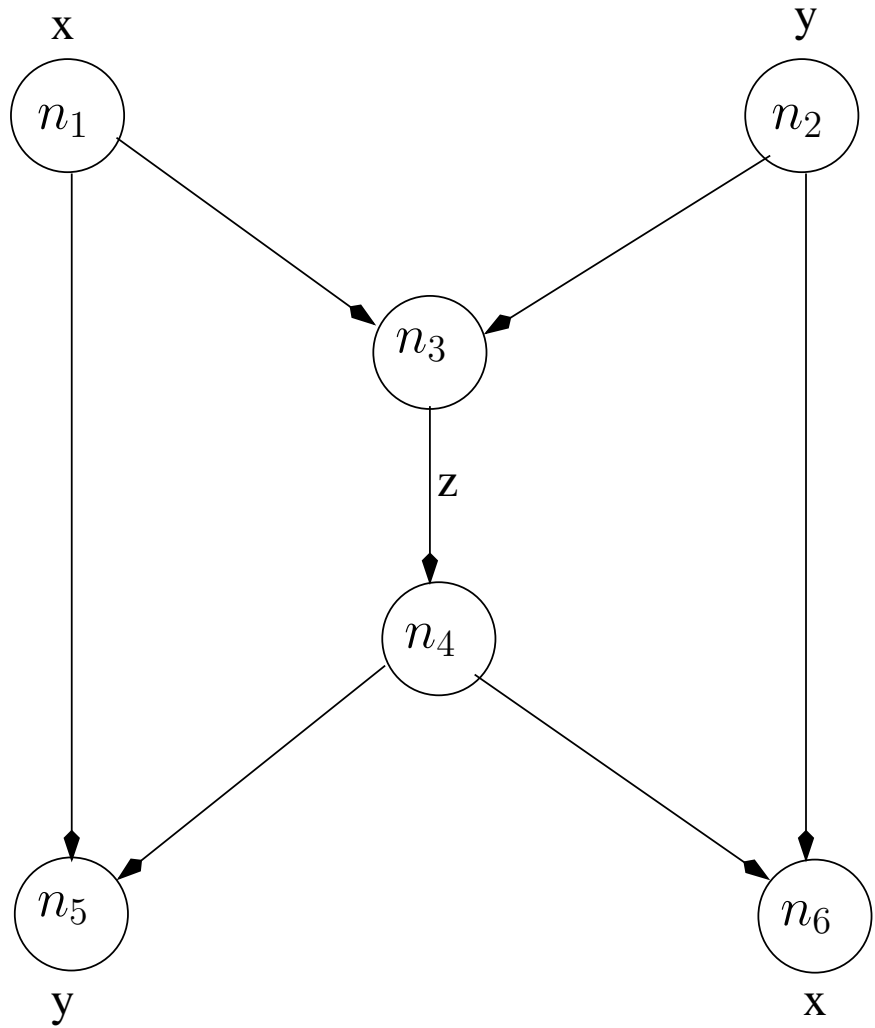}
\end{center}
\caption{\label{fig:butterfly} The Butterfly network with source messages $x$ and $y$,
generated by source nodes $n_1$ and $n_2$, respectively. 
Demand nodes $n_5$ and $n_6$ demand messages $y$ and $x$, respectively.
}
\end{figure}

Let us illustrate a method for finding capacity bounds by examining the well-known Butterfly network,
depicted in Figure~\ref{fig:butterfly}.
We assume the network messages $x$ and $y$
are independent, $k$-dimensional, random vectors with uniformly distributed components.
Then in any solution it must be the case that
\begin{align}
    H(y|x,z) &=  0 \label{butterfly:1}
\end{align}
since $y$ is a function of $x$ and $z$,
and also that
\begin{align}
    2k &= H(x) + H(y)\\
        &= H(x,y) & \text{[from indep. of x and y]}\nonumber\\
        &\leq H(x,y,z) & \text{ [from \eqref{h9}]}\nonumber\\
        &= H(x,z) + H(y|x,z) & \text{[from \eqref{h1}]}\nonumber\\
        &= H(x,z) & \text{[from (\ref{butterfly:1})]}\nonumber\\
        &\le H(x) + H(z) & \text{[from \eqref{h8}]}\nonumber\\
        &\le k + n. \label{rv-inequality}
\end{align}
This implies $2k \le k + n$,
or equivalently $k/n \le 1$.
Since this bound holds for all choices of $k$ and $n$,
the coding capacity must be at most $1$.
On the other hand,
a solution with $k = n = 1$ is obtained by taking
$z = x+y$ over any finite field alphabet,
so the coding capacity is at least $1$.
Thus the coding capacity for the Butterfly network is the same as the linear coding
capacity which is exactly equal to $1$.

The inequalities in \eqref{rv-inequality} were based on random variables $x,y,z$.
Later, in the proofs of Corollaries~\ref{cor:T8-capacity} and~\ref{cor:nonT8-capacity},
we will obtain bounds on the capacities of networks by using linear rank inequalities,
instead of information inequalities.
In those cases, certain vector subspaces will be used instead of random variables,
but the procedure will appear similar.

\newpage

\section{Preliminaries}

In this section, we given some technical lemmas which will be useful for proving the main results of the paper.

If $A$ is a subspace of vector space $V$,
and $\overline{A}$ is a subspace of $A$,
then we will use the notation 
$\Codim_A(\overline{A}) = \Dim(A) - \Dim(\overline{A})$
to represent the codimension 
of $\overline{A}$ in $A$.
We will omit the subscript when
it is obvious from the context which space the codimension is with respect to.

\begin{lem}\label{lemma1}
\cite{rateregions-journal} Let $V$ be a finite dimensional vector space with subspaces
$A$ and $B$.
Then the subspace $A \cap B$ has codimension at most
$\Codim(A) + \Codim(B)$ in $V$.
\end{lem}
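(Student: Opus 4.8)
The plan is to reduce the claim to the standard dimension (inclusion–exclusion) formula for subspaces. Writing codimensions as differences of dimensions, the inequality to be proved,
\[
\Codim_V(A\cap B)\;\le\;\Codim(A)+\Codim(B),
\]
is equivalent to
\[
\Dim(V)-\Dim(A\cap B)\;\le\;\bigl(\Dim(V)-\Dim(A)\bigr)+\bigl(\Dim(V)-\Dim(B)\bigr),
\]
i.e.\ to $\Dim(A)+\Dim(B)\le \Dim(V)+\Dim(A\cap B)$. So it suffices to establish this last inequality.

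First I would invoke the classical identity $\Dim(A)+\Dim(B)=\Dim\langle A,B\rangle+\Dim(A\cap B)$, which in the entropy-style notation of the previous section is just $I(A;B)=H(A)+H(B)-H(A,B)$. Then, since $\langle A,B\rangle$ is a subspace of $V$, we have $\Dim\langle A,B\rangle\le \Dim(V)$, equivalently $H(A,B)\le H(V)$. Substituting this bound gives $\Dim(A)+\Dim(B)=\Dim\langle A,B\rangle+\Dim(A\cap B)\le \Dim(V)+\Dim(A\cap B)$, which is exactly what was needed; rearranging recovers the stated codimension bound. (Alternatively, one could argue directly: $A\cap B$ is the kernel of the natural map $V\to (V/A)\oplus(V/B)$, whose image has dimension at most $\Codim(A)+\Codim(B)$, so $\Codim_V(A\cap B)$ is at most that quantity.)

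I do not expect any real obstacle here: the only thing to be careful about is keeping the direction of the inequality straight when converting between dimensions and codimensions, and making sure the dimension formula is applied to $\langle A,B\rangle$ (the span of $A\cup B$, as defined above) rather than to a set-theoretic union. Everything else is routine finite-dimensional linear algebra, valid over an arbitrary field, so in particular there is no characteristic dependence at this stage.
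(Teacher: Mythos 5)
Your proposal is correct and follows essentially the same route as the paper: both rest on the identity $\Dim(A)+\Dim(B)-\Dim(A\cap B)=\Dim\langle A,B\rangle\le\Dim(V)$ (in the paper's notation, $H(A)+H(B)-I(A;B)=H(A,B)\le H(V)$), followed by a rearrangement into codimensions. The kernel-map argument you mention parenthetically is a fine alternative, but it is not needed.
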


\begin{proof}
We know $H(A) + H(B) - I(A; B) = H( A,B) \le H(V)$.
Then adding $H(V)$ to both
sides of the inequality gives $H(V) - I(A; B) \le H(V)-H(A) + H(V) - H(B)$.
Thus, $\Codim(A\cap B) \le \Codim(A) + \Codim(B)$.
\end{proof}

\begin{lem}\label{lemma2}
\cite{rateregions-journal} Let $A$ and $B$ be vector spaces over the same finite scalar field and with subspaces
$\overline{A}$ and $\overline{B}$, respectively.
Let $f:A \rightarrow B$ be a
linear function such that $f(A\backslash \overline{A}) \subseteq B\backslash \overline{B}$.
Then the codimension of $\overline{A}$ in $A$ is at most the
codimension of $\overline{B}$ in $B$.
\end{lem}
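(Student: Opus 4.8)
The plan is to reduce the inequality to the rank--nullity theorem applied to a single well-chosen linear map. The first step is to rephrase the hypothesis in a more convenient form. The containment $f(A\setminus\overline{A})\subseteq B\setminus\overline{B}$ says exactly that no vector of $A$ lying outside $\overline{A}$ is mapped into $\overline{B}$; taking contrapositives, this is equivalent to $f^{-1}(\overline{B})\subseteq\overline{A}$. In particular $f^{-1}(\overline{B})$ is a subspace of $A$ contained in $\overline{A}$, so $\Dim\bigl(f^{-1}(\overline{B})\bigr)\le\Dim(\overline{A})$.

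Next, let $\pi\colon B\to B/\overline{B}$ be the canonical quotient map and put $g=\pi\circ f\colon A\to B/\overline{B}$. Then $g$ is linear, and since $\ker\pi=\overline{B}$ we have $\ker g=f^{-1}(\overline{B})$, which by the previous step is contained in $\overline{A}$. Writing $g(A)$ for the image of $g$, the rank--nullity theorem gives
$$\Dim(A)=\Dim(\ker g)+\Dim\bigl(g(A)\bigr),$$
hence
$$\Codim_A(\overline{A})=\Dim(A)-\Dim(\overline{A})\le\Dim(A)-\Dim(\ker g)=\Dim\bigl(g(A)\bigr).$$
Since $g(A)$ is a subspace of $B/\overline{B}$, we also have $\Dim\bigl(g(A)\bigr)\le\Dim\bigl(B/\overline{B}\bigr)=\Codim_B(\overline{B})$, and combining the last two inequalities yields $\Codim_A(\overline{A})\le\Codim_B(\overline{B})$, as required. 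If one prefers the entropy notation of Section~\ref{section:LRI}, the same computation goes through with $\ker g$ playing the role of a subspace whose rank bounds $H(\overline{A})$ from below.

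The argument is short and essentially forced once $g$ is introduced, so I do not expect any genuine obstacle. The only point that warrants a moment's care is the opening move: correctly converting the set-theoretic condition $f(A\setminus\overline{A})\subseteq B\setminus\overline{B}$ into the clean subspace inclusion $f^{-1}(\overline{B})\subseteq\overline{A}$, and thereafter keeping each inequality oriented the right way.
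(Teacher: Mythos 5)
Your proof is correct. The key reformulation $f(A\setminus\overline{A})\subseteq B\setminus\overline{B}\iff f^{-1}(\overline{B})\subseteq\overline{A}$ is right, and the chain $\Codim_A(\overline{A})=\Dim(A)-\Dim(\overline{A})\le\Dim(A)-\Dim(\ker g)=\Dim(g(A))\le\Dim(B/\overline{B})=\Codim_B(\overline{B})$ is valid, using rank--nullity for $g=\pi\circ f$ and the inclusion $\ker g=f^{-1}(\overline{B})\subseteq\overline{A}$. Your route differs from the paper's in mechanics: the paper extends a basis of $\overline{A}$ to a basis of $A$ by vectors $a_1,\dots,a_n$, observes that any nonzero combination of the $a_i$ lies in $A\setminus\overline{A}$ and hence maps into $B\setminus\overline{B}$, and concludes that $f(a_1),\dots,f(a_n)$ are linearly independent over $\overline{B}$, giving $n\le\Codim_B(\overline{B})$ directly. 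Your argument is basis-free: it packages the same underlying fact (vectors outside $\overline{A}$ cannot land in $\overline{B}$) into the single inclusion $\ker(\pi\circ f)\subseteq\overline{A}$ and lets rank--nullity do the counting. The paper's version makes the ``each extra direction of $A$ beyond $\overline{A}$ produces an independent direction of $B$ beyond $\overline{B}$'' intuition explicit, while yours avoids any choice of basis and isolates the hypothesis as a clean statement about preimages; both are elementary and of essentially the same length. (The closing remark about entropy notation is unnecessary but harmless, since $\Codim$ is exactly what the lemma asserts.)
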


\begin{proof}
Suppose 
a base for $A$ consists of a base for $\overline{A}$
together with the vectors $a_1, \ldots, a_n$.
Let $\gamma_1,\ldots,\gamma_n$ be field elements which are not all zero.
Then $\gamma_1 a_1+ \cdots + \gamma_n a_n\in A\setminus\overline{A}$,
so 
$\gamma_1f(a_1)+ \cdots + \gamma_nf(a_n) = f(\gamma_1 a_1+ \cdots + \gamma_n a_n) \in B\setminus\overline{B}$.
Thus, the vectors
$f(a_1),\ldots,f(a_n)$ are linearly independent over the subspace $\overline{B}$,
and therefore 
$\Codim_A(\overline{A}) 
= n \le   
\Codim_B(\overline{B})$.
\end{proof}

\begin{lem}\label{lemma3}
\cite{rateregions-journal} 
Let $A$ and $B$ be vector spaces over the same finite scalar field,
let $\overline{B}$ be a subspace of $B$, 
and let $f:A \rightarrow B$ be a linear function.
Then $f(t) \in \overline{B}$ on a subspace of $A$ of codimension at most the codimension of
$\overline{B}$.
\end{lem}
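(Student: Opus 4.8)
The plan is to realize the desired subspace of $A$ as a preimage under $f$ and then control its codimension using Lemma~\ref{lemma1}. Concretely, set $\overline{A} = f^{-1}(\overline{B}) = \{t \in A : f(t) \in \overline{B}\}$. This is a subspace of $A$ because $f$ is linear and $\overline{B}$ is a subspace of $B$: if $f(s), f(t) \in \overline{B}$ and $\gamma$ is a scalar, then $f(\gamma s + t) = \gamma f(s) + f(t) \in \overline{B}$. So the content of the lemma is entirely the codimension bound $\Codim_A(\overline{A}) \le \Codim_B(\overline{B})$.

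For the codimension bound, I would apply Lemma~\ref{lemma2} directly with this choice of $\overline{A}$. By construction, if $t \in A \setminus \overline{A}$, then $f(t) \notin \overline{B}$, i.e.\ $f(t) \in B \setminus \overline{B}$; this is exactly the hypothesis $f(A \setminus \overline{A}) \subseteq B \setminus \overline{B}$ required by Lemma~\ref{lemma2}. Hence $\Codim_A(\overline{A}) \le \Codim_B(\overline{B})$, which is precisely the conclusion we want, since $\overline{B}$ has codimension $\Codim_B(\overline{B})$ in $B$ and $f(t) \in \overline{B}$ holds on $\overline{A}$.

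Alternatively, one can give a direct rank-nullity style argument without invoking Lemma~\ref{lemma2}: the image $f(A)$ is a subspace of $B$, and $f$ restricted to $A$ induces an isomorphism $A / \ker(f) \cong f(A)$. Under this isomorphism, $\overline{A}/\ker(f)$ corresponds to $f(A) \cap \overline{B}$, so $\Codim_A(\overline{A}) = \Codim_{f(A)}(f(A) \cap \overline{B})$. Since $f(A) \cap \overline{B}$ has codimension in $f(A)$ at most the codimension of $\overline{B}$ in $B$ (intersecting with a subspace cannot increase codimension — a one-line consequence of Lemma~\ref{lemma1} applied inside $B$ with the two subspaces $f(A)$ and $\overline{B}$, or just directly), we get the bound. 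I expect the cleanest writeup is the first approach, since Lemma~\ref{lemma2} is already available and the verification of its hypothesis for $\overline{A} = f^{-1}(\overline{B})$ is immediate.

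The main obstacle here is essentially cosmetic rather than mathematical: there is no real difficulty, only the need to make sure the degenerate cases are not swept under the rug (e.g.\ when $f(A) \subseteq \overline{B}$, in which case $\overline{A} = A$ and the codimension is $0$, trivially satisfying the bound; or when $\overline{B} = B$, same thing). The statement ``$f(t) \in \overline{B}$ on a subspace of codimension at most $\Codim(\overline{B})$'' should be read as asserting the existence of such a subspace, and $f^{-1}(\overline{B})$ is the canonical — indeed the largest — witness, so once the codimension inequality is in hand the proof is complete.
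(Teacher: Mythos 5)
Your proposal is correct and follows the paper's own proof essentially verbatim: the paper also takes $\overline{A}=\{t\in A: f(t)\in\overline{B}\}$, observes $f(A\setminus\overline{A})\subseteq B\setminus\overline{B}$, and invokes Lemma~\ref{lemma2}. The alternative rank-nullity argument you sketch is fine but unnecessary; the first approach is exactly what the paper does.
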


\begin{proof}
Let $\overline{A} = \{t \in A: f(t) \in \overline{B}\}$.
Then $f(A\backslash \overline{A}) \subseteq B\backslash \overline{B}$
and the result follows from Lemma~\ref{lemma2}.
\end{proof}

\begin{lem}\label{lemma4}
\cite{rateregions-journal} Let $V$ be a finite dimensional vector space and let
$A_1,\ldots,A_k,B$ be subspaces of $V$.
Then for $i = 1,\dots,k$, 
there exist linear functions $f_i:B \rightarrow A_i$ such that $f_1 + \cdots + f_k = I$ on
a subspace of $B$ of codimension $H(B|A_1,\ldots,A_k)$.
\end{lem}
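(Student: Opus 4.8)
The plan is to build the maps $f_i$ one coordinate at a time, using the span structure of $\langle A_1,\dots,A_k\rangle$ and then pulling everything back to $B$ via Lemma~\ref{lemma3}. First I would work inside $W = \langle A_1,\dots,A_k\rangle$. Choose a basis of $W$ by selecting, for each $i$, vectors that extend a basis of $A_1 + \dots + A_{i-1}$ to a basis of $A_1 + \dots + A_i$; each such new basis vector lies in $A_i$. This gives a direct-sum decomposition of $W$ into subspaces $U_1,\dots,U_k$ with $U_i \subseteq A_i$ and $U_1 \oplus \dots \oplus U_k = W$. Let $\pi_i : W \to U_i \subseteq A_i$ be the associated projections; then $\pi_1 + \dots + \pi_k$ is the identity on all of $W$, and in particular on the subspace $\langle B \rangle \cap W$ restricted appropriately — more precisely, $\pi_1 + \dots + \pi_k = I_W$ everywhere on $W$.

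Next I would handle the part of $B$ that already lies in $W$. On the subspace $B \cap W$, the map $t \mapsto \pi_1(t) + \dots + \pi_k(t)$ equals $t$. The issue is that $B$ need not be contained in $W$; the codimension of $B \cap W$ in $B$ is exactly $H(B \mid A_1,\dots,A_k) = \Dim(B) - \Dim(B \cap W)$, since $H(A_1,\dots,A_k) = \Dim(W)$ and $B \cap \langle A_1,\dots,A_k\rangle$ is what ``$B$ given $A_1,\dots,A_k$'' refers to in the rank dictionary. So I would apply Lemma~\ref{lemma3} with the linear function being the inclusion-type map: take $\overline{B}$ there to be $W$ viewed inside the ambient $V$ — that is, apply Lemma~\ref{lemma3} to the identity map $V \to V$ (or the inclusion $B \hookrightarrow V$) and the subspace $W$, concluding that $t \in W$ on a subspace $\overline{B}$ of $B$ of codimension at most $\Codim_V(W)$. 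But that bound is too crude; instead I want the codimension bound relative to $B$ itself, which is precisely $\Dim(B) - \Dim(B\cap W) = H(B\mid A_1,\dots,A_k)$, and this is just the direct computation of the codimension of $B \cap W$ in $B$, no lemma needed.

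So the clean version is: set $\overline{B} = B \cap W$, which has codimension $H(B \mid A_1,\dots,A_k)$ in $B$ by the rank identities. For each $i$, define $f_i : B \to A_i$ by first projecting $B$ onto $\overline{B} = B\cap W \subseteq W$ — pick any linear retraction — wait, that retraction does not act as the identity on $\overline{B}$ unless we are careful; better: choose a linear map $g : B \to W$ that restricts to the inclusion on $\overline{B}$ (extend the inclusion $\overline{B}\hookrightarrow W$ to a linear map on all of $B$, possible since $\overline{B}$ is a direct summand of $B$), and set $f_i = \pi_i \circ g$. Then $f_i$ maps $B$ into $U_i \subseteq A_i$, each $f_i$ is linear, and for $t \in \overline{B}$ we have $g(t) = t \in W$, hence $(f_1 + \dots + f_k)(t) = (\pi_1 + \dots + \pi_k)(t) = t$. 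Thus $f_1 + \dots + f_k = I$ on $\overline{B}$, a subspace of $B$ of codimension $H(B \mid A_1,\dots,A_k)$, as required.

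The main obstacle, and the only place requiring care, is the bookkeeping that $\pi_1 + \dots + \pi_k$ is genuinely the identity on $W$ rather than merely surjecting onto $W$: this hinges on choosing the $U_i$ via the staircase refinement of bases so that they form an internal direct sum equal to $W$, and on choosing $g$ to restrict to the honest inclusion on $B\cap W$. Everything else is routine linear algebra and the rank-dictionary translation of $H(B\mid A_1,\dots,A_k)$ into $\Dim(B) - \Dim(B \cap \langle A_1,\dots,A_k\rangle)$.
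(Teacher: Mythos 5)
Your proposal is correct and takes essentially the same route as the paper: the distinguished subspace is $B \cap \langle A_1,\ldots,A_k\rangle$, whose codimension in $B$ is exactly $H(B|A_1,\ldots,A_k)$, and the maps $f_i$ are obtained by expressing vectors of the span as sums of elements of the $A_i$ and then extending linearly to all of $B$. The only cosmetic difference is that you construct global projections $\pi_i$ from a staircase direct-sum decomposition $U_1\oplus\cdots\oplus U_k$ of the whole span, whereas the paper simply decomposes each basis vector of the intersection individually and extends arbitrarily; your mid-proof detour through Lemma~\ref{lemma3} is unnecessary, as you yourself observe before discarding it.
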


\begin{proof}
Let $W$ be a subspace of $B$ defined by $W = \langle A_1, \ldots, A_k\rangle \cap B$.
The subspace on which this lemma holds is $W$.
If $H(W) = 0$ , then
the lemma would be trivially true.
So, assume that $H(W) > 0$, and let
$\{w_1,\ldots,w_n\}$ be a basis for $W$.
For each $j = 1,\ldots, n$, choose
$x_{i,j} \in A_i$ for $i = 1,\ldots, k$ such that $w_j = x_{1,j} + \dots + x_{k,j}$.
For each $i = 1,\ldots,k$, define a linear mapping $g_i : W \rightarrow A_i$ so that $g_i(w_j) = x_{i,j}$ for all $i$ and $j$.
Then extend
$g_i$ arbitrarily to $f_i:B\rightarrow A_i$.
Now we have linear functions
$f_1,\ldots,f_k$ such that $f_1 + \cdots + f_k = I$ on $W$.
The dimension of
$W$ is $H(W) = I(A_1,\ldots,A_k;B)$, so the codimension of $W$ is $H(B) - I(A_1,\ldots,A_k;B) = H(B|A_1,\ldots,A_k)$.
\end{proof}

\begin{lem}\label{lemma5}
\cite{rateregions-journal} Let $V$ be a finite-dimensional vector space and let $A,B$,
and $C$ be subspaces of $V$.
Let $f:A\rightarrow B$ and $g:A \rightarrow C$ be
linear functions such that $f+g = 0$ on $A$.
Then $f = g = 0$ on a subspace of
$A$ of codimension at most $I(B;C)$.

\begin{proof}
Let $K$ be the kernel of $f$.
Clearly, $f$ maps $A$ into $B\cap C$ and since $f$ is linear 
the rank of its domain is at most the sum of the ranks of its kernel and range, so
\begin{align*}
    \Codim(K) &=H(A) - H(K) \le I(B;C).
\end{align*}
\end{proof}
\end{lem}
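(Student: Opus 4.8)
The plan is to run the standard argument for a pair of linear maps whose sum is zero, with one extra observation: their common image is forced to lie in $B \cap C$. Since $f + g = 0$ on $A$, we have $g = -f$ on all of $A$. Hence for every $t \in A$ the vector $f(t)$ lies in $B$ (as $f : A \to B$), and $f(t) = -g(t)$ lies in the subspace $C$ (as $g : A \to C$); therefore $f(t) \in B \cap C$. In other words, $f(A) \subseteq B \cap C$.

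Next I would set $K = \ker f$. Then $f = 0$ on $K$, and since $g = -f$ we also have $g = 0$ on $K$, so $K$ is a subspace of $A$ on which both maps vanish; it is the subspace the lemma asks for. To bound its codimension, apply the rank--nullity theorem to $f$ (viewed as a surjection onto $f(A)$):
\begin{align*}
\Codim_A(K) &= \Dim(A) - \Dim(K) = \Dim\bigl(f(A)\bigr) \le \Dim(B \cap C) = I(B;C),
\end{align*}
the inequality being the containment established above. This is exactly the claimed bound.

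I do not expect a genuine obstacle here: the content reduces to noticing that $f$ and $g$ share the same kernel (since $g = -f$) and that $f(A) \subseteq B \cap C$. If one wishes to avoid citing rank--nullity explicitly, the codimension bound also follows from Lemma~\ref{lemma3}, applied with the codomain of $f$ replaced by $B \cap C$ and with $\overline{B}$ the zero subspace, whose codimension in $B \cap C$ equals $\Dim(B \cap C) = I(B;C)$; the resulting subspace $\{t \in A : f(t) = 0\}$ is then exactly $K$, and $g$ vanishes on it as well.
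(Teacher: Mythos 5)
Your argument is correct and is essentially the paper's own proof: both take $K=\ker f$, observe that $g=-f$ forces $f(A)\subseteq B\cap C$ (and that $g$ vanishes on $K$ too), and bound $\Codim_A(K)$ by $\Dim(f(A))\le\Dim(B\cap C)=I(B;C)$ via rank--nullity. No substantive difference.
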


\begin{lem}\label{lemma6}
\cite{rateregions-journal} Let $V$ be a finite dimensional vector space and let
$A,B_1,\dots,B_k$ be subspaces of $V$.
For each $i = 1,\ldots,k$ let $f_i:A \rightarrow B_i$ be a linear function such that $f_1 + \cdots + f_k = 0$
on $A$.
Then $f_1 = \cdots = f_k = 0$ on a subspace of $A$ of
codimension at most $H(B_1) + \cdots + H(B_k) - H(B_1,\ldots,B_k)$.
\end{lem}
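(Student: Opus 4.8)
The plan is to proceed by induction on $k$. For the base case, note that when $k=1$ the hypothesis says $f_1 = 0$ on all of $A$, so the conclusion holds on $A$ itself, which has codimension $0 = H(B_1) - H(B_1)$ in $A$; and the case $k = 2$ is precisely Lemma~\ref{lemma5}, since $I(B_1;B_2) = H(B_1) + H(B_2) - H(B_1,B_2)$. So I would assume $k \geq 3$ and that the statement has been established for any collection of $k-1$ subspaces.

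For the inductive step, I would group the sum as $(f_1 + \cdots + f_{k-1}) + f_k = 0$ and set $g = f_1 + \cdots + f_{k-1}$, which is a linear map from $A$ into the subspace $\langle B_1, \ldots, B_{k-1}\rangle$ of $V$. Since $g + f_k = 0$ on $A$, Lemma~\ref{lemma5} (with the roles of $B$ and $C$ played by $\langle B_1,\ldots,B_{k-1}\rangle$ and $B_k$) produces a subspace $A' \subseteq A$ with $\Codim_A(A') \leq I(\langle B_1,\ldots,B_{k-1}\rangle;\, B_k) = H(B_1,\ldots,B_{k-1}) + H(B_k) - H(B_1,\ldots,B_k)$, on which both $g = 0$ and $f_k = 0$. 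Restricting $f_1,\ldots,f_{k-1}$ to $A'$ gives linear maps $A' \to B_i$ whose sum is $g|_{A'} = 0$, so by the induction hypothesis there is a subspace $A'' \subseteq A'$ with $\Codim_{A'}(A'') \leq H(B_1) + \cdots + H(B_{k-1}) - H(B_1,\ldots,B_{k-1})$ on which $f_1 = \cdots = f_{k-1} = 0$. On $A''$ every $f_i$ then vanishes, since $f_k$ already vanished on the larger space $A'$.

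It remains to bound $\Codim_A(A'')$. Because $A'' \subseteq A' \subseteq A$, codimension is exactly additive along this chain: $\Codim_A(A'') = \Codim_A(A') + \Codim_{A'}(A'')$. Adding the two bounds above, the term $H(B_1,\ldots,B_{k-1})$ cancels and we are left with $\Codim_A(A'') \leq H(B_1) + \cdots + H(B_k) - H(B_1,\ldots,B_k)$, which is the desired conclusion. The only point worth a moment's care is that Lemma~\ref{lemma5} and the induction hypothesis are invoked with the ambient space taken to be $A$ (respectively $A'$) rather than $V$; this is legitimate because $g$, $f_k$, and the restricted maps $f_i$ genuinely are linear maps out of those spaces, and those lemmas require only that the domain be a finite-dimensional vector space. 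I do not anticipate any obstacle beyond this bookkeeping.
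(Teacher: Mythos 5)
Your proposal is correct and is essentially the paper's own argument: both rest on repeated application of Lemma~\ref{lemma5} with the codimension bounds telescoping via $I(\langle B_i,\ldots\rangle;B_j)=H(B_i,\ldots)+H(B_j)-H(B_i,\ldots,B_j)$. The only difference is cosmetic — you organize it as a formal induction splitting off $f_k$ from $f_1+\cdots+f_{k-1}$, while the paper iterates, splitting $f_1$ from $f_2+\cdots+f_k$, then $f_2$ from $f_3+\cdots+f_k$, and so on.
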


\begin{proof}
First we apply Lemma~\ref{lemma5} to $f_1$ and $(f_2 + \dots + f_k)$ to get
$f_1 = (f_2 + \dots + f_k) = 0$ on a subspace $A_1$ of $A$ of codimension at
most $I(B_1; B_2,\ldots,B_k) = H(B_1) + H(B_2,\ldots,B_k) - H(B_1,B_2,\ldots,B_k)$.
Then apply Lemma~\ref{lemma5} to $f_2$ and $(f_3 + \dots + f_k)$ to get $f_2 = (f_3 + \dots + f_k) = 0$ on a subspace $A_2$ of
$A_1$ of codimension at most $I(B_2; B_3,\ldots,B_k) = H(B_2) + H(B_3,\ldots,B_k) - H(B_2,B_3,\ldots,B_k)$.
Continue on until we apply Lemma
\ref{lemma5} to $f_{k-1}$ and $f_k$ to get $f_{k-1} = f_k = 0$ on a subspace
$A_{k-1}$ of $A_{k-2}$ of codimension at most $I(B_{k-1}; B_k) = H(B_{k-1}) + H(B_k) - H(B_{k-1},B_k)$.
Now $A_{k-1}$ is a subspace of $A$ of codimension at
most $H(B_1) + \dots + H(B_k) - H(B_1,\ldots,B_k)$, on which $f_1 = f_2 = \dots
= f_k = 0$.
\end{proof}

\begin{lem}\label{lem Inj}
Let $A,B,C,D,E$ be subspaces of a vector space $V$ and let $f_R,f_L,g_R$,
and $g_L$ be functions such that $f_R:A \rightarrow C, f_L:C \rightarrow A, g_R:B \rightarrow D,$ and $g_L:D \rightarrow E$.
If $f_Lf_R = I$ on $A$ and
$g_Lg_R$ is injective on $B$, then $g_Lf_R$ is injective on $f_L(f_RA \cap g_RB)$.
\end{lem}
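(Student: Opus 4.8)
The plan is to chase an arbitrary nonzero vector through the maps and show its image under $g_L f_R$ is nonzero. Let $t$ be a nonzero element of the subspace $f_L(f_R A \cap g_R B)$. By definition of this subspace, $t = f_L(v)$ for some $v \in f_R A \cap g_R B$, and moreover we may choose $v$ in the image of $f_L$'s restriction appropriately; the key point is that $v \in f_R A$, so $v = f_R(a)$ for some $a \in A$, and also $v \in g_R B$, so $v = g_R(b)$ for some $b \in B$.

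First I would use the hypothesis $f_L f_R = I$ on $A$ to recover $a$ from $t$: since $t = f_L(v) = f_L(f_R(a)) = a$, we get $a = t \neq 0$. Hence $v = f_R(a) = f_R(t)$ is the common value, and because $t = a \neq 0$ while $f_L f_R = I$ is injective on $A$, we also get $v = f_R(t) \neq 0$. Now apply $g_L$ to the representation $v = g_R(b)$: we have $g_L f_R(t) = g_L(v) = g_L(g_R(b)) = g_L g_R(b)$. Since $g_L g_R$ is injective on $B$, to conclude $g_L f_R(t) \neq 0$ it suffices to show $b \neq 0$. But $g_R(b) = v \neq 0$, so $b \neq 0$, and therefore $g_L g_R(b) \neq 0$, i.e. $g_L f_R(t) \neq 0$. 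Since $t$ was an arbitrary nonzero element of $f_L(f_R A \cap g_R B)$, the map $g_L f_R$ is injective on that subspace.

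The one subtlety to handle carefully — and the step I expect to require the most care rather than being purely routine — is the bookkeeping around the representative $v$: a given $t$ in $f_L(f_R A \cap g_R B)$ arises as $f_L(v)$ for $v$ in the \emph{intersection} $f_R A \cap g_R B$, and I must make sure I am evaluating $g_L f_R$ at $t$ (not at $v$) and that the chain $g_L f_R(t) = g_L(f_R(f_L(v)))$ simplifies correctly. The simplification works precisely because $v \in f_R A$ forces $v = f_R(a)$ with $f_L(v) = a$, so $f_R(f_L(v)) = f_R(a) = v$; that is, $f_R f_L$ acts as the identity on the image $f_R A$, which is the real content extracted from $f_L f_R = I$ on $A$. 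Once that identity on $f_R A$ is established, the rest is a direct substitution using injectivity of $g_L g_R$ on $B$ together with the observation that $v \neq 0$ (which holds since $t \neq 0$ and $v = f_R(t)$, with $f_R$ injective on $A$ as a consequence of $f_L f_R = I$).
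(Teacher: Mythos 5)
Your argument is correct in the setting where it is actually used, and its engine is the same as the paper's: from $f_Lf_R=I$ on $A$ you extract that $f_Rf_L$ is the identity on the image $f_RA$, write the chosen $v\in f_RA\cap g_RB$ both as $f_R(t)$ and as $g_R(b)$, and then invoke injectivity of $g_Lg_R$ on $B$. The one substantive difference is that you reformulate injectivity as ``nonzero elements have nonzero image,'' i.e.\ you argue via a trivial kernel, whereas the paper takes two arbitrary elements $x,y$ of $f_L(f_RA\cap g_RB)$ with $g_Lf_R(x)=g_Lf_R(y)$ and chases them back to $x=y$.

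That reformulation is where you silently add a hypothesis: the lemma as stated only says $f_R,f_L,g_R,g_L$ are \emph{functions}, and your steps ``$t\neq 0\Rightarrow v=f_R(t)\neq 0$,'' ``$v\neq 0\Rightarrow b\neq 0$,'' ``$b\neq 0\Rightarrow g_Lg_R(b)\neq 0$,'' and the final passage from trivial kernel to injectivity of $g_Lf_R$ all require the maps to be linear (so that they send $0$ to $0$ and so that kernel-triviality implies injectivity). The paper's two-point argument needs no linearity at all and proves the statement as written. In every application in this paper the maps come from Lemma~\ref{lemma4} and are linear, so your proof suffices for the way the lemma is used; but as a proof of the lemma itself you should either state the linearity assumption explicitly or switch to the two-element form of the injectivity argument, which costs nothing.
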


\begin{proof}
Let $x,y \in f_L(f_RA \cap g_RB)$.
We know $f_R f_L=I$  on $f_RA$ because $f_R f_L(f_R(w)) = f_R(f_L f_R(w)) = f_R(w)$
for all $w\in A$.
Since $x\in f_L(f_RA \cap g_RB)$, we know
$f_R(x) \in f_Rf_L(f_RA \cap g_RB) = f_RA \cap g_RB$, 
which implies $f_R(x) = g_R(b_x)$ for some $b_x\in B$.
Similarly, we know $f_R(y) = g_R(b_y)$ for some
$b_y \in B$.
So, we have $g_Lg_R(b_x) = g_Lf_R(x)$ and $g_Lg_R(b_y) = g_Lf_R(y)$.
If we assume $g_Lf_R(x) = g_Lf_R(y)$, then we have $g_Lg_R(b_x) = g_Lg_R(b_y)$.
Since $g_Lg_R$ is injective on $B$, we know $b_x = b_y$.
Thus
$f_R(x) = g_R(b_x) = g_R(b_y) = f_R(y)$, which implies $f_Lf_R(x) = f_Lf_R(y)$.
Since $f_Lf_R = I$ on $A$, we know $x = y$.
Thus $g_Lf_R$ is
injective on $f_L(f_RA \cap g_RB)$.
\end{proof}

\newpage
\section{A Linear Rank Inequality for fields of characteristic other than 3}

In this section, we use the known T8 matroid to construct a ``T8 network'',
and then in turn we use the T8 network to guide a construction of a 
``T8 linear rank inequality'' that is shown to hold for all vector spaces
having finite scalar fields of characteristic not equal to $3$.
Then we show that the T8 inequality does not necessarily hold when such scalar
fields have characteristic $3$.
Finally, we determine the exact coding capacity of the
T8 network and its linear coding capacity over finite field alphabets
of characteristic $3$,
as well as a linear capacity upper bound for finite field alphabets
whose characteristic is not $3$.

The T8 matroid \cite{matroidBook} is a vector matroid which
is represented by the following matrix, 
where column dependencies are over characteristic 3:

$$\bordermatrix{& A & B & C & D & W & X & Y & Z \cr
                    & 1 & 0 & 0 & 0 & 0 & 1 & 1 & 1\cr
                    & 0 & 1 & 0 & 0 & 1 & 0 & 1 & 1\cr
                    & 0 & 0 & 1 & 0 & 1 & 1 & 0 & 1\cr
                    & 0 & 0 & 0 & 1 & 1 & 1 & 1 & 0 }. $$

The T8 matroid is representable over a field if and only if the field is
of characteristic 3.
Figure~\ref{fig:T8} is a network whose dependencies and
independencies are consistent with the T8 matroid.
It was designed by the
construction process described in \cite{nonshannon}, and we will refer to it as
the T8 network.
Theorem~\ref{thm:T8} uses the T8 network as a guide to
derive a linear rank inequality valid for every characteristic except for 3.
We refer to the inequality in the following theorem as the \textit{T8 linear rank inequality}.

\begin{figure}
\begin{center}
\psfrag{n1}{\LARGE $n_{1}$}
\psfrag{n2}{\LARGE $n_{2}$}
\psfrag{n3}{\LARGE $n_{3}$}
\psfrag{n4}{\LARGE $n_{4}$}
\psfrag{n5}{\LARGE $n_{5}$}
\psfrag{n6}{\LARGE $n_{6}$}
\psfrag{n7}{\LARGE $n_{7}$}
\psfrag{n8}{\LARGE $n_{8}$}
\psfrag{n9}{\LARGE $n_{9}$}
\psfrag{n10}{\LARGE $n_{10}$}
\psfrag{n11}{\LARGE $n_{11}$}
\psfrag{n12}{\LARGE $n_{12}$}
\psfrag{n13}{\LARGE $n_{13}$}
\psfrag{n14}{\LARGE $n_{14}$}
\psfrag{n15}{\LARGE $n_{15}$}
\includegraphics[width=13.5cm]{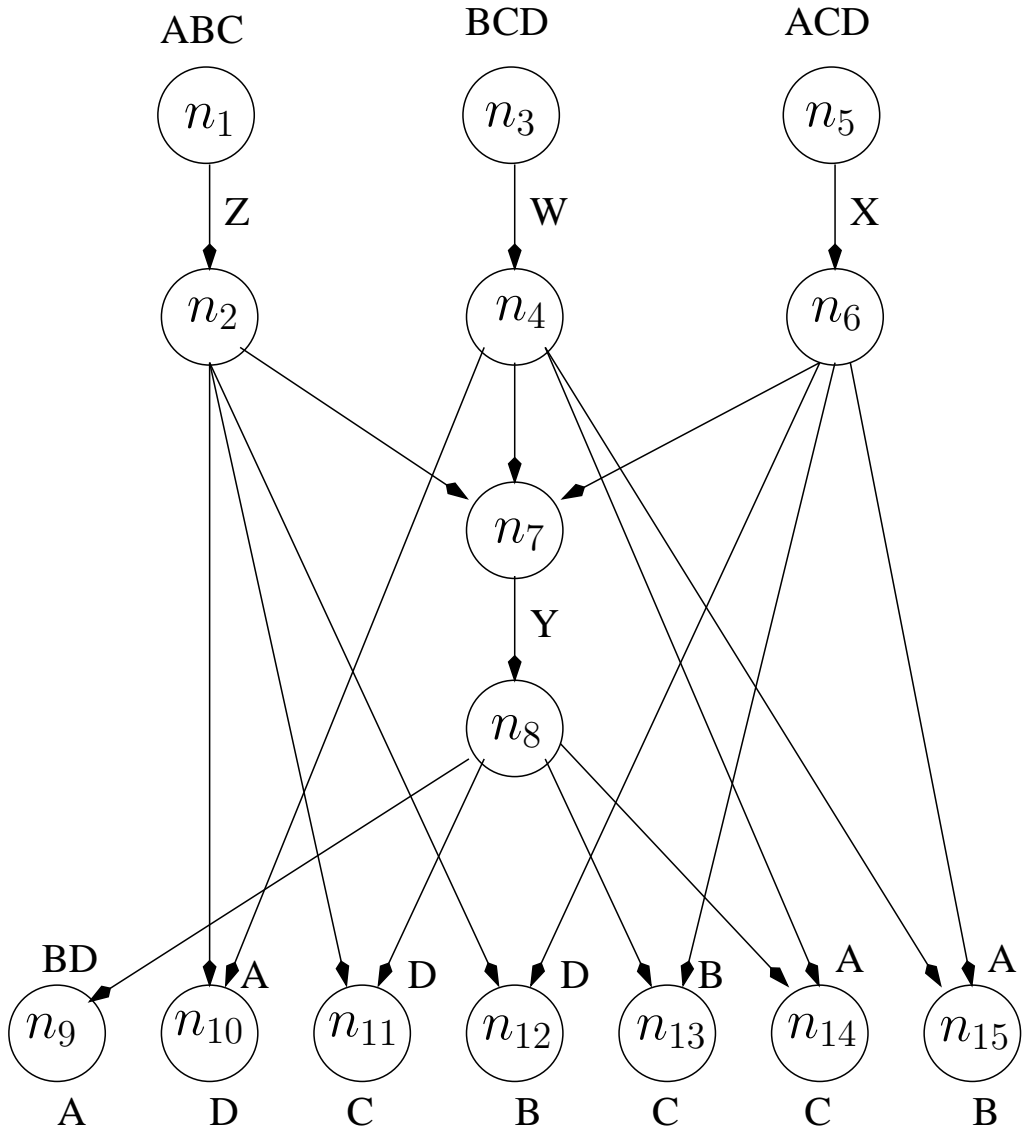}
\end{center}
\caption{ The T8 network has source messages $A,B,C,$ and $D$ generated at hidden source nodes with 
certain hidden out-edges pointing to corresponding displayed nodes
$n_1$, $n_3$, $n_5$, and $n_9$--$n_{15}$
(which are labeled by incoming messages above such nodes).
The nodes $n_9$--$n_{15}$ each demand one message, as labeled below such nodes.
}
\label{fig:T8}
\end{figure}

\begin{thm}\label{thm:T8}
Let $A,B,C,D,W,X,Y$, and $Z$ be subspaces of a vector space $V$ whose scalar field is finite
and of characteristic other than $3$.
Then the following is a linear rank inequality over $V$:
\begin{align*}
H(A) &\le 8H(Z) + 29H(Y) + 3H(X) + 8H(W) -6H(D) -17H(C) - 8H(B) - 17H(A)\\
    &\ \ \ + 55H(Z|A,B,C) + 35H(Y|W,X,Z) + 50H(X|A,C,D) + 49H(W|B,C,D)\\
    &\ \ \ + 18H(A|B,D,Y) + 7H(B|D,X,Z) + H(B|A,W,X) + 7H(C|D,Y,Z) \\
    &\ \ \ + 7H(C|B,X,Y) + 3H(C|A,W,Y) + 6H(D|A,W,Z)\\
    &\ \ \ + 49(H(A) + H(B) + H(C) + H(D) - H(A,B,C,D)).
\end{align*}
\end{thm}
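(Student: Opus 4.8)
The plan is to prove the inequality by exhibiting an injective linear map between two direct sums of (large subspaces of) the eight given subspaces, with the map following the information flow of the T8 network, and with the characteristic-$3$ defining circuit of T8 providing the one place where the hypothesis ``characteristic not $3$'' is used. Reading the right-hand side, I would take as target the direct sum $Z^{8}\oplus Y^{29}\oplus X^{3}\oplus W^{8}$ (direct sums with the indicated multiplicities; dimension $8H(Z)+29H(Y)+3H(X)+8H(W)$), and as source the direct sum $A^{18}\oplus B^{8}\oplus C^{17}\oplus D^{6}$ (dimension $18H(A)+8H(B)+17H(C)+6H(D)$); the goal is to inject a subspace of the source into the target, where the codimension of that subspace --- equivalently, the total slack accumulated in the construction --- is bounded by the remaining right-hand-side terms: the four encoding terms $H(Z|A,B,C),H(Y|W,X,Z),H(X|A,C,D),H(W|B,C,D)$ with coefficients $55,35,50,49$, the seven decoding terms $H(A|B,D,Y),H(B|D,X,Z),H(B|A,W,X),H(C|D,Y,Z),H(C|B,X,Y),H(C|A,W,Y),H(D|A,W,Z)$ with coefficients $18,7,1,7,7,3,6$, and $49$ copies of the independence defect $H(A)+H(B)+H(C)+H(D)-H(A,B,C,D)$.

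First I would apply Lemma~\ref{lemma4} to each of the four encoding constraints to obtain, for each edge subspace, linear maps realizing it --- on a subspace of codimension the corresponding conditional rank --- as a sum of images of its input subspaces; dually, Lemma~\ref{lemma4} (or Lemma~\ref{lemma3}) applied to the seven decoding constraints gives, on a subspace of the relevant source of the appropriate codimension, a linear reconstruction of that source as a signed sum of images of two edges and one other source, namely the relations $A=Y-B-D$, $B=Z-X+D$, $B=W-X+A$, $C=Z-Y+D$, $C=X-Y+B$, $C=W-Y+A$, $D=W-Z+A$ that the T8 column dependencies force. I would then reconstruct each source copy from edges alone by chaining decoding relations: following one into the next closes a short cycle among the sources (for instance $A=Y-B-D$, then $B=Z-X+D$, then $D=W-Z+A$), and collapsing the cycle yields, on a large subspace of $A$, the relation $3A=X+Y+Z-2W$, with analogous relations for $B$, $C$, $D$. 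Here is where the hypothesis enters: since $3$ is a unit in the field I may divide by $3$, obtaining an honest linear injection of a large subspace of each source into a direct sum of edge subspaces (over characteristic $3$ this combination degenerates to $0$ and carries no information). Injectivity is then propagated through the compositions by Lemma~\ref{lem Inj}, whose round-trip hypothesis $f_Lf_R=I$ is exactly what Lemma~\ref{lemma4} supplies for a message that is encoded onto an edge and later decoded off of it; and each time the construction needs $\langle A,B,C,D\rangle$ or $\langle W,X,Y,Z\rangle$ to behave like a direct sum I would pay the corresponding codimension via Lemma~\ref{lemma6} (bounded by the independence defect) while keeping the running target dimension under control with Lemma~\ref{lemma1}.

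Assembling all the copies, a subspace of $A^{18}\oplus B^{8}\oplus C^{17}\oplus D^{6}$ of the advertised codimension injects into $Z^{8}\oplus Y^{29}\oplus X^{3}\oplus W^{8}$. The resulting dimension inequality reads: $18H(A)+8H(B)+17H(C)+6H(D)$ minus that codimension is at most $8H(Z)+29H(Y)+3H(X)+8H(W)$; moving the codimension bound (the encoding terms, decoding terms, and $49$ defects) to the right and then moving $17H(A),8H(B),17H(C),6H(D)$ back to the right as $-17H(A),-8H(B),-17H(C),-6H(D)$ reproduces exactly the statement of Theorem~\ref{thm:T8}.

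The main obstacle is the bookkeeping rather than any single idea: the substantive content --- that the sole obstruction to representing T8 is the scalar $3$, so the relevant composite is multiplication by a unit and hence injective precisely when the characteristic is not $3$ --- is confined to one step, and everything else is an accounting of codimensions through Lemmas~\ref{lemma1}--\ref{lemma6} and~\ref{lem Inj}. One must order the chain of maps so that each decoding relation is used exactly as often as its coefficient (the consistency checks $7+1=8$, $7+7+3=17$, $6$, and $18$ matching the source multiplicities of $B,C,D,A$ are a good sign that such an ordering exists), so that each encoding relation and each application of Lemma~\ref{lemma6} contributes its codimension with the right multiplicity ($55$ for $H(Z|A,B,C)$, $49$ for the defect, and so on), and so that the running target space never exceeds $Z^{8}\oplus Y^{29}\oplus X^{3}\oplus W^{8}$. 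Showing that all these multiplicities close up simultaneously is the real work, and it is essentially what the T8 network was engineered to make possible.
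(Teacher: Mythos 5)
Your coefficient accounting is consistent with the statement (the rearrangement $18H(A)=H(A)+17H(A)$ and the multiplicity checks $7+1=8$, $7+7+3=17$ do match), and you correctly name the lemmas involved, but the core mechanism you propose has a genuine gap. The relations you want to chain ($A=Y-B-D$, $B=Z-X+D$, \dots, culminating in ``$3A=X+Y+Z-2W$, then divide by $3$'') are identities of the particular matrix representing T8; in the setting of the theorem the subspaces $A,\dots,Z$ are arbitrary, and Lemma~\ref{lemma4} supplies only \emph{some} linear maps $f_i$ with $\sum_i f_i=I$ on a subspace of controlled codimension, with no distinguished scalars, so there are no signed coordinate relations available to ``collapse.'' Making a cycle-collapse rigorous for arbitrary maps is precisely the hard part, and it needs the machinery your sketch defers: the injectivity of $f_{13},f_{14},f_{15}$ on a carefully constructed $\overline{A}^\ast$ (via Lemma~\ref{lem Inj} and the starred subspaces built from $\overline{B},\widehat{B},\overline{C},\widehat{C},\widetilde{C},\overline{D}$), the consistency identities coming from the extra decodings at $n_{13}$ and $n_{15}$, and the six membership conditions (\ref{p2:ass1})--(\ref{p2:ass6}). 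Moreover, what the collapse actually yields is not a reconstruction of $A$ from edge data but the identity $3t=0$ for every $t$ in that intersection: the residual term is annihilated by \eqref{eq:Abar3} ($f_Cf_{15}=0$ on $\overline{A}$), because in T8 the space $Y$ is constrained through $H(Y|W,X,Z)$. Over characteristic $\ne 3$ this forces $t=0$, and the inequality is exactly the statement that $\dim A$ is at most the sum of the codimension bounds of those six conditions. There is no step in which one divides by $3$ to recover a source from edges; that is how characteristic $\ne 3$ enters the solvability of the non-T8 network (Corollary~\ref{cor:nonT8-capacity}), not this theorem.

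The injection framing also cannot carry the weight you place on it. That a large subspace of $A$ injects into an edge space is trivial and characteristic-free (e.g.\ $f_{15}$ is injective on $\overline{A}$ by \eqref{eq:Abar1}); the strength of the inequality lies in the negative terms $-17H(A)-8H(B)-17H(C)-6H(D)$, i.e.\ in the source multiplicity $49$ exceeding the edge multiplicity $48$. An injection of a subspace of $A^{18}\oplus B^{8}\oplus C^{17}\oplus D^{6}$ of codimension at most the slack into $Z^{8}\oplus Y^{29}\oplus X^{3}\oplus W^{8}$ exists if and only if the numerical inequality holds, so asserting that ``the multiplicities close up'' restates the theorem rather than proving it. The actual content is the codimension bookkeeping that generates those coefficients --- the bounds $\Delta_{\overline{A}},\Delta_{\overline{B}},\Delta_{\widehat{B}},\Delta_{\overline{C}},\Delta_{\widehat{C}},\Delta_{\widetilde{C}},\Delta_{\overline{D}}$, then $\Delta_{\overline{A}^\ast}$ and the costs of (\ref{p2:ass1})--(\ref{p2:ass6}) --- and none of that is supplied by your sketch.
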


\begin{proof}
By Lemma~\ref{lemma4} we get linear functions:
\begin{center}
\begin{tabular}{ccc}
     $f_1:Z \rightarrow A$, & $f_2:Z \rightarrow B$, & $f_3:Z \rightarrow C$,\\
     $f_4:W \rightarrow B$, & $f_5:W \rightarrow C$, & $f_6:W \rightarrow D$,\\
     $f_7:X \rightarrow A$, & $f_8:X \rightarrow C$, & $f_9:X \rightarrow D$,\\
     $f_{10}:Y \rightarrow Z$, & $f_{11}:Y \rightarrow W$, & $f_{12}:Y \rightarrow X$,\\
     $f_{13}:A \rightarrow B$, & $f_{14}:A \rightarrow D$, & $f_{15}:A \rightarrow Y$,\\
     $f_{16}:D \rightarrow Z$, & $f_{17}:D \rightarrow W$, & $f_{18}:D \rightarrow A$,\\
     $f_{19}:C \rightarrow Z$, & $f_{20}:C \rightarrow Y$, & $f_{21}:C \rightarrow D$,\\
     $f_{22}:B \rightarrow Z$, & $f_{23}:B \rightarrow X$, & $f_{24}:B \rightarrow D$,\\
     $f_{25}:C \rightarrow Y$, & $f_{26}:C \rightarrow X$, & $f_{27}:C \rightarrow B$,\\
     $f_{28}:C \rightarrow Y$, & $f_{29}:C \rightarrow W$, & $f_{30}:C \rightarrow A$,\\
     $f_{31}:B \rightarrow W$, & $f_{32}:B \rightarrow X$, & $f_{33}:B \rightarrow A$
\end{tabular}
\end{center}
such that
\begin{align}
    f_1 + f_2 + f_3 &= I \mbox{ on a subspace of $Z$ of codimension $H(Z|A,B,C)$}\label{eq:1}\\
    f_4 + f_5 + f_6 &= I \mbox{ on a subspace of $W$ of codimension $H(W|B,C,D)$}\label{eq:2}\\
    f_7 + f_8 + f_9 &= I \mbox{ on a subspace of $X$ of codimension $H(X|A,C,D)$}\label{eq:3}\\
    f_{10} + f_{11} + f_{12} &= I \mbox{ on a subspace of $Y$ of codimension $H(Y|W,X,Z)$}\label{eq:4}\\
    f_{13} + f_{14} + f_{15} &= I \mbox{ on a subspace of $A$ of codimension $H(A|B,D,Y)$}\label{eq:5}\\
    f_{16} + f_{17} + f_{18} &= I \mbox{ on a subspace of $D$ of codimension $H(D|A,W,Z)$}\label{eq:6}\\
    f_{19} + f_{20} + f_{21} &= I \mbox{ on a subspace of $C$ of codimension $H(C|D,Y,Z)$}\label{eq:7}\\
    f_{22} + f_{23} + f_{24} &= I \mbox{ on a subspace of $B$ of codimension $H(B|D,X,Z)$}\label{eq:8}\\
    f_{25} + f_{26} + f_{27} &= I \mbox{ on a subspace of $C$ of codimension $H(C|B,X,Y)$}\label{eq:9}\\
    f_{28} + f_{29} + f_{30} &= I \mbox{ on a subspace of $C$ of codimension $H(C|A,W,Y)$}\label{eq:10}\\
    f_{31} + f_{32} + f_{33} &= I \mbox{ on a subspace of $B$ of codimension $H(B|A,W,X)$}.\label{eq:11}
\end{align}
Now let
\begin{align*}
f_A &\triangleq f_7f_{12} + f_1f_{10}\\
f_B &\triangleq f_4f_{11} + f_2f_{10}\\
f_C &\triangleq f_8f_{12} + f_5f_{11} + f_3f_{10}\\
f_D &\triangleq f_9f_{12} + f_6f_{11}.
\end{align*}
Combining the functions we obtained from Lemma~\ref{lemma4} gives new functions:
\begin{align*}
    f_A\Compose f_{15}                               &: A \rightarrow A \\
    f_B\Compose f_{15} + f_{13}                      &: A \rightarrow B\\
    f_C\Compose f_{15}                               &: A \rightarrow C\\
    f_D\Compose f_{15} + f_{14}                      &: A \rightarrow D.
\end{align*}
Using \eqref{eq:1} - \eqref{eq:5}, Lemma~\ref{lemma1}, and Lemma~\ref{lemma3}
we know the sum of these functions is equal to $I$ on a subspace of $A$ of
codimension at most $H(Z|A,B,C) + H(W|B,C,D) + H(X|A,C,D) + H(Y|W,X,Z) + H(A|B,D,Y)$.

Applying Lemma~\ref{lemma6} and Lemma~\ref{lemma1} to 
$f_A\Compose f_{15} - I$, $f_B\Compose f_{15} + f_{13}$, $f_C\Compose f_{15}$, and $f_D\Compose f_{15} + f_{14}$ 
we get a subspace $\overline{A}$ of $A$ of codimension at most
\begin{align*}
    \Delta_{\overline{A}} &= H(Z|A,B,C) + H(W|B,C,D) + H(X|A,C,D) + H(Y|W,X,Z) + H(A|B,D,Y)\\
    & + H(A) + H(B) + H(C) + H(D) - H(A,B,C,D)
\end{align*}
on which

\begin{align}
    f_A\Compose f_{15}                               &= I \label{eq:Abar1}\\
    f_B\Compose f_{15} + f_{13}                      &= 0 \label{eq:Abar2}\\
    f_C\Compose f_{15}                               &= 0 \label{eq:Abar3}\\
    f_D\Compose f_{15} + f_{14}                      &= 0. \label{eq:Abar4}
\end{align}
%
To see how the T8 network is used as a guide, 
consider receiver node $n_9$, which demands $A$.
Let 
$M_1$, 
$M_7$, 
$M_{10}$, 
$M_{12}$, 
$M_{15}$ 
be matrices corresponding to the transformations along the edges
$(A,Z)$,
$(A,X)$,
$(Z,Y)$,
$(X,Y)$,
$(Y,A)$,
respectively.
Using algebra to solve for $A$ one deduces that
\begin{align*}
M_{15}M_{10}M_{1} + M_{15}M_{12}M_{7} &= I.
\end{align*}
Equation \eqref{eq:Abar1} was designed to model this property.

Similarly, we get a subspace $\overline{B}$ of $B$ of codimension at most
\begin{align*}
    \Delta_{\overline{B}} &=  H(Z|A,B,C) + H(X|A,C,D) + H(B|D,X,Z)\\
     &\ \ \ + H(A) + H(B) + H(C) + H(D) - H(A,B,C,D)
\end{align*}
on which

\begin{align}
    f_7\Compose f_{23}  + f_1\Compose f_{22}                      &=  0 \label{eq:Bbar1}\\
    f_2\Compose f_{22}                                         &=  I \label{eq:Bbar2}\\
    f_8\Compose f_{23} + f_3\Compose f_{22}                       &=  0 \label{eq:Bbar3}\\
    f_{24} + f_9\Compose f_{23}                                &=  0. \label{eq:Bbar4}
\end{align}
We get a subspace $\widehat{B}$ of $B$ of codimension at most
\begin{align*}
    \Delta_{\widehat{B}} &=  H(W|B,C,D) + H(X|A,C,D) + H(B|A,W,X)\\
     &\ \ \ + H(A) + H(B) + H(C) + H(D) - H(A,B,C,D)
\end{align*}
on which
\begin{align}
    f_{33}  + f_{7}\Compose f_{32}                             &=  0 \label{eq:Bhat1}\\
    f_4\Compose f_{31}                                         &=  I \label{eq:Bhat2}\\
    f_8\Compose f_{32} + f_5\Compose f_{31}                       &=  0 \label{eq:Bhat3}\\
    f_9\Compose f_{32} + f_6\Compose f_{31}                       &=  0. \label{eq:Bhat4}
\end{align}
We get a subspace $\overline{C}$ of $C$ of codimension at most
\begin{align*}
    \Delta_{\overline{C}} &=  2H(Z|A,B,C) + H(W|B,C,D) + H(X|A,C,D) + H(Y|W,X,Z) + H(C|D,Y,Z)\\
     &\ \ \ + H(A) + H(B) + H(C) + H(D) - H(A,B,C,D)
\end{align*}
on which

\begin{align}
    f_A\Compose f_{20} + f_1\Compose f_{19}                        &=  0 \label{eq:Cbar1}\\
    f_B\Compose f_{20} + f_2\Compose f_{19}                        &=  0 \label{eq:Cbar2}\\
    f_C\Compose f_{20}  + f_3\Compose f_{19}     &=  I \label{eq:Cbar3}\\
    f_D\Compose f_{20} + f_{21}                                 &=  0. \label{eq:Cbar4}
\end{align}
We get a subspace $\widehat{C}$ of $C$ of codimension at most
\begin{align*}
    \Delta_{\widehat{C}} &=  H(Z|A,B,C) + H(W|B,C,D) + 2H(X|A,C,D) + H(Y|W,X,Z) + H(C|B,X,Y)\\
     &\ \ \ + H(A) + H(B) + H(C) + H(D) - H(A,B,C,D)
\end{align*}
on which

\begin{align}
    f_A\Compose f_{25} + f_7\Compose f_{26}                        &=  0 \label{eq:Chat1}\\
    f_B\Compose f_{25} + f_{27}                                 &=  0 \label{eq:Chat2}\\
    f_C\Compose f_{25}  + f_8\Compose f_{26}     &=  I \label{eq:Chat3}\\
    f_D\Compose f_{25} + f_9\Compose f_{26}                        &=  0. \label{eq:Chat4}
\end{align}
We get a subspace $\widetilde{C}$ of $C$ of codimension at most
\begin{align*}
    \Delta_{\widetilde{C}} &=  H(Z|A,B,C) + 2H(W|B,C,D) + H(X|A,C,D) + H(Y|W,X,Z) + H(C|A,W,Y)\\
     &\ \ \ + H(A) + H(B) + H(C) + H(D) - H(A,B,C,D)
\end{align*}
on which

\begin{align}
    f_A\Compose f_{28} + f_{30}                                 &=  0 \label{eq:Ctilde1}\\
    f_B\Compose f_{28} + f_4\Compose f_{29}                        &=  0 \label{eq:Ctilde2}\\
    f_C\Compose f_{28}  + f_5\Compose f_{29}     &=  I \label{eq:Ctilde3}\\
    f_D\Compose f_{28} + f_6\Compose f_{29}                        &=  0. \label{eq:Ctilde4}
\end{align}
We get a subspace $\overline{D}$ of $D$ of codimension at most
\begin{align*}
    \Delta_{\overline{D}} &=  H(Z|A,B,C) + H(W|B,C,D) + H(D|A,W,Z)\\
     &\ \ \ + H(A) + H(B) + H(C) + H(D) - H(A,B,C,D)
\end{align*}
on which

\begin{align}
    f_{18}  + f_{1}\Compose f_{16}                             &=  0 \label{eq:Dbar1}\\
    f_4\Compose f_{17} + f_2\Compose f_{16}                       &=  0 \label{eq:Dbar2}\\
    f_5\Compose f_{17} + f_3\Compose f_{16}                       &=  0 \label{eq:Dbar3}\\
    f_6\Compose f_{17}                                         &=  I. \label{eq:Dbar4}
\end{align}

First notice that \eqref{eq:Abar1} implies
\begin{align}
&f_{15} \mbox{ is injective on }\overline{A}. \label{p2:f15}
\end{align}
We need to define a subspace of $\overline{A}$ on which $f_{13}$ and $f_{14}$
are injective.
The justifications can be found on \eqref{p2:f13} and \eqref{p2:f14}.
Let
\begin{align*}
\overline{C}^\ast &\triangleq f_3(f_{19}(\overline{C} \cap f_{20}^{-1}f_{15}\overline{A}) \cap f_{22}\overline{B}) \subseteq \overline{C}\\
\widetilde{C}^\ast    &\triangleq f_5(f_{29}(\widetilde{C} \cap f_{28}^{-1}f_{15}\overline{A}) \cap f_{17}\overline{D}) \subseteq \widetilde{C}\\
\overline{A}^\ast &\triangleq f_A(f_{15}\overline{A} \cap f_{20}\overline{C}^\ast \cap f_{28}\widetilde{C}^\ast) \subseteq \overline{A}.
\end{align*}
To justify why $\overline{C}^\ast \subseteq \overline{C}$, by \eqref{eq:Abar3}
we know $f_Cf_{15} = 0$ on $\overline{A}$
and by \eqref{eq:Cbar3} we know $f_Cf_{20} + f_3f_{19} = I$.
Thus for each $\overline{c} \in \overline{C} \cap f_{20}^{-1}f_{15}\overline{A}$, 
we have $f_Cf_{20} = 0$ on $\overline{C}$
which gives
\begin{align}
f_3f_{19} &=  I \ \mbox{ on\ $\overline{C} \cap f_{20}^{-1}f_{15}\overline{A}$}. \label{p2:f3f19}
\end{align}
Using (\ref{eq:Abar3}) and (\ref{eq:Ctilde3}) we have
\begin{align}
f_5f_{29} &=  I \ \mbox{ on\ $\widetilde{C} \cap f_{28}^{-1}f_{15}\overline{A}$}. \label{p2:f5f29}
\end{align}
Using (\ref{eq:Abar3}) and (\ref{eq:Chat3}) we have
\begin{align}
f_8f_{26} &=  I \ \mbox{ on\ $\widehat{C} \cap f_{25}^{-1}f_{15}\overline{A}$}. \label{p2:f8f26}
\end{align}

We are now going to show $f_{13}$ is injective on $\overline{A}^\ast$.
First we
need to apply Lemma~\ref{lem Inj} to show $f_2f_{19}$ is injective on
$\overline{C}^\ast$ and then again to show $f_Bf_{15}$ is injective on
$\overline{A}^\ast$.
By (\ref{eq:Bbar2}) and (\ref{p2:f3f19}), we know
$f_2f_{22}$ is injective on $\overline{B}$ and $f_3f_{19} = I$ on $\overline{C} \cap f_{20}^{-1}f_{15}\overline{A}$.
So, we can apply Lemma~\ref{lem Inj} by
letting $g_L = f_2$, $g_R = f_{22}$, $f_L = f_3$, and $f_R = f_{19}$ to get
that $f_2f_{19}$ is injective on $\overline{C}^\ast$.
Then using
(\ref{eq:Cbar2}), we know $f_Bf_{20}$ is injective on $\overline{C}^\ast$.
Now
we can apply Lemma~\ref{lem Inj} again by using the fact that $f_Af_{15} = I$
on $\overline{A}$ and by letting $g_L = f_B$, $g_R = f_{20}$, $f_L = f_A$, and
$f_R = f_{15}$ to get $f_Bf_{15}$ is injective on $\overline{A}^\ast$.
Thus by
(\ref{eq:Abar2}),
\begin{align}
& f_{13} \mbox{ is injective on $\overline{A}^\ast$.}\label{p2:f13}
\end{align}

Similarly, we are going to show $f_{14}$ is injective on
$\overline{A}^\ast$.
We will first apply Lemma~\ref{lem Inj} to show
$f_6f_{29}$ is injective on $\widetilde{C}^\ast$ and then again to show
$f_Df_{15}$ is injective on $\overline{A}^\ast$.
By (\ref{eq:Dbar4}) and
(\ref{p2:f5f29}), we know $f_6f_{17}$ is injective on $\overline{D}$ and
$f_5f_{29} = I$ on $\widetilde{C} \cap f_{28}^{-1}f_{15}\overline{A}$.
So, we
can apply Lemma~\ref{lem Inj} by letting $g_L = f_6$, $g_R = f_{17}$, 
$f_L = f_5$, and $f_R = f_{29}$ to get that $f_6f_{29}$ is injective on
$\widetilde{C}^\ast$.
Then using (\ref{eq:Ctilde4}), we know $f_Df_{28}$ is
injective on $\widetilde{C}^\ast$.
Now we can apply Lemma~\ref{lem Inj} again
by using the fact that $f_Af_{15} = I$ on $\overline{A}$ and by letting 
$g_L = f_D$, $g_R = f_{28}$, $f_L = f_A$, and $f_R = f_{15}$ to get $f_Df_{15}$ is
injective on $\overline{A}^\ast$.
Thus by (\ref{eq:Abar4}),
\begin{align}
& f_{14}\mbox{ is injective on $\overline{A}^\ast$.}\label{p2:f14}
\end{align}

Now we are going to find an upper bound for $\Codim_A(\overline{A}^\ast)$.
First
we need to find upper bounds for $\Codim_C(\overline{C}^\ast)$ and
$\Codim_C(\widetilde{C}^\ast)$.
Using (\ref{p2:f15}) to show
$\Dim(f_{15}\overline{A}) = \Dim(\overline{A})$, and again using Lemma
\ref{lemma1} and Lemma~\ref{lemma3}, we have
\begin{align}
\Codim_C(\overline{C}^\ast) &=  H(C) - \Dim(\overline{C}^\ast)\nonumber\\
&=  H(C) - \Dim(f_3(f_{19}(\overline{C} \cap f_{20}^{-1}f_{15}\overline{A}) \cap f_{22}\overline{B})) \nonumber\\
&=  H(C) - \Dim(f_{19}(\overline{C} \cap f_{20}^{-1}f_{15}\overline{A}) \cap f_{22}\overline{B})\nonumber\\
&=  H(C) - H(Z) + \Codim_Z(f_{19}(\overline{C} \cap f_{20}^{-1}f_{15}\overline{A}) \cap f_{22}\overline{B})\nonumber\\
&\le  H(C) - H(Z) + \Codim_Z(f_{19}(\overline{C} \cap f_{20}^{-1}f_{15}\overline{A})) + \Codim_Z(f_{22}\overline{B})\nonumber\\
&=  H(C) - H(Z) + H(Z) - \Dim(f_{19}(\overline{C} \cap f_{20}^{-1}f_{15}\overline{A})) + H(Z) - \Dim(f_{22}\overline{B})\nonumber\\
&=  H(C) + H(Z) - \Dim(\overline{C} \cap f_{20}^{-1}f_{15}\overline{A}) - \Dim(\overline{B})\nonumber\\
&=  H(C) + H(Z) - H(C) + \Codim_C(\overline{C} \cap f_{20}^{-1}f_{15}\overline{A}) - H(B) + \Codim_B(\overline{B})\nonumber\\
&=  H(Z) - H(B) + \Codim_C(\overline{C} \cap f_{20}^{-1}f_{15}\overline{A}) + \Codim_B(\overline{B})\nonumber\\
&\le  H(Z) - H(B) + \Delta_{\overline{C}} + \Codim_C( f_{20}^{-1}f_{15}\overline{A}) + \Delta_{\overline{B}}\nonumber\\
&\le  H(Z) - H(B) + \Delta_{\overline{C}} + \Codim_Y( f_{15}\overline{A}) + \Delta_{\overline{B}}\nonumber\\
&\le  H(Z) - H(B) + \Delta_{\overline{C}} + H(Y) - \Dim( f_{15}\overline{A}) + \Delta_{\overline{B}}\nonumber\\
&=  H(Z) - H(B) + \Delta_{\overline{C}} + H(Y) - \Dim(\overline{A}) + \Delta_{\overline{B}}\nonumber\\
&=  H(Z) - H(B) + \Delta_{\overline{C}} + H(Y) - H(A) + \Codim_A(\overline{A}) + \Delta_{\overline{B}}\nonumber\\
&\le  H(Z) - H(B) + H(Y) - H(A) +\Delta_{\overline{C}} + \Delta_{\overline{A}} + \Delta_{\overline{B}}
\end{align}
\begin{align}
\Codim_C(\widetilde{C}^\ast) &=  H(C) - \Dim(\widetilde{C}^\ast)\nonumber\\
&=  H(C) - \Dim(f_5(f_{29}(\widetilde{C} \cap f_{28}^{-1}f_{15}\overline{A}) \cap f_{17}\overline{D}))\nonumber\\
&=  H(C) - \Dim(f_{29}(\widetilde{C} \cap f_{28}^{-1}f_{15}\overline{A}) \cap f_{17}\overline{D})\nonumber\\
&=  H(C) - H(W) + \Codim_W(f_{29}(\widetilde{C} \cap f_{28}^{-1}f_{15}\overline{A}) \cap f_{17}\overline{D})\nonumber\\
&\le  H(C) - H(W) + \Codim_W(f_{29}(\widetilde{C} \cap f_{28}^{-1}f_{15}\overline{A})) + \Codim_W(f_{17}\overline{D})\nonumber\\
&=  H(C) - H(W) + H(W) - \Dim(f_{29}(\widetilde{C} \cap f_{28}^{-1}f_{15}\overline{A})) + H(W) - \Dim(f_{17}\overline{D})\nonumber\\
&=  H(C) + H(W) - \Dim(\widetilde{C} \cap f_{28}^{-1}f_{15}\overline{A}) - \Dim(\overline{D})\nonumber\\
&=  H(C) + H(W) - H(C) + \Codim_C(\widetilde{C} \cap f_{28}^{-1}f_{15}\overline{A}) - H(D) + \Codim_D(\overline{D})\nonumber\\
&=  H(W) - H(D) + \Codim_C(\widetilde{C} \cap f_{28}^{-1}f_{15}\overline{A}) + \Codim_D(\overline{D})\nonumber\\
&\le  H(W) - H(D) + \Delta_{\widetilde{C}} + \Codim_C( f_{28}^{-1}f_{15}\overline{A}) + \Delta_{\overline{D}}\nonumber\\
&\le  H(W) - H(D) + \Delta_{\widetilde{C}} + \Codim_Y( f_{15}\overline{A}) + \Delta_{\overline{D}}\nonumber\\
&=  H(W) - H(D) + \Delta_{\widetilde{C}} + H(Y) - \Dim( f_{15}\overline{A}) + \Delta_{\overline{D}}\nonumber\\
&=  H(W) - H(D) + \Delta_{\widetilde{C}} + H(Y) - \Dim( \overline{A}) + \Delta_{\overline{D}}\nonumber\\
&=  H(W) - H(D) + \Delta_{\widetilde{C}} + H(Y) - H(A) +  \Codim_A(\overline{A}) + \Delta_{\overline{D}}\nonumber\\
&\le  H(W) - H(D) + H(Y) - H(A) + \Delta_{\widetilde{C}} +  \Delta_{\overline{A}} + \Delta_{\overline{D}}.
\end{align}
In the justification for (\ref{p2:f13}), we concluded that $f_Bf_{20}$ is
injective on $\overline{C}^\ast$, which implies $f_{20}$ is injective on
$\overline{C}^\ast$.
In the justification for (\ref{p2:f14}), we concluded that
$f_Df_{28}$ is injective on $\widetilde{C}^\ast$, which implies $f_{28}$ is
injective on $\widetilde{C}^\ast$.
These facts combined with (\ref{p2:f15})
will be used to arrive on line (\ref{app:inj1}).
\begin{align}
\Codim_A(\overline{A}^\ast) &=  H(A) - \Dim(f_A(f_{15}\overline{A} \cap f_{20}\overline{C}^\ast \cap f_{28}\widetilde{C}^\ast))\nonumber\\
&=  H(A) - \Dim(f_{15}\overline{A} \cap f_{20}\overline{C}^\ast \cap f_{28}\widetilde{C}^\ast)\nonumber\\
&=  H(A) - H(Y) + \Codim_Y(f_{15}\overline{A} \cap f_{20}\overline{C}^\ast \cap f_{28}\widetilde{C}^\ast)\nonumber\\
&\le  H(A) - H(Y) + \Codim_Y(f_{15}\overline{A}) + \Codim_Y(f_{20}\overline{C}^\ast) + \Codim_Y(f_{28}\widetilde{C}^\ast)\nonumber\\
&=  H(A) - H(Y) + H(Y) - \Dim(f_{15}\overline{A}) + H(Y) - \Dim(f_{20}\overline{C}^\ast) \nonumber\\
&\ \ \  + H(Y) - \Dim(f_{28}\widetilde{C}^\ast)\nonumber\\
&=  H(A) + 2H(Y) - \Dim(\overline{A})  - \Dim(\overline{C}^\ast) - \Dim(\widetilde{C}^\ast)\label{app:inj1}\\
&=  H(A) + 2H(Y) - H(A) + \Codim_A(\overline{A})  - H(C) + \Codim_C(\overline{C}^\ast) \nonumber\\
&\ \ \  - H(C) + \Codim_C(\widetilde{C}^\ast)\nonumber\\
&=  2H(Y) - 2H(C) + \Codim_A(\overline{A})  + \Codim_C(\overline{C}^\ast) + \Codim_C(\widetilde{C}^\ast)\nonumber\\
&\le  2H(Y) - 2H(C) + \Delta_{\overline{A}} \nonumber\\
&\ \ \ + H(Z) - H(B) + H(Y) - H(A) +\Delta_{\overline{C}} + \Delta_{\overline{A}} + \Delta_{\overline{B}}\nonumber\\
&\ \ \ + H(W) - H(D) + H(Y) - H(A) + \Delta_{\widetilde{C}} +  \Delta_{\overline{A}} + \Delta_{\overline{D}}\nonumber\\
&=  H(W) + 4H(Y) + H(Z) - 2H(A) - H(B) -2H(C) - H(D) \nonumber\\
&\ \ \ + 3\Delta_{\overline{A}} + \Delta_{\overline{B}} + \Delta_{\overline{C}} + \Delta_{\widetilde{C}}+ \Delta_{\overline{D}}\nonumber\\
&\triangleq \Delta_{\overline{A}^\ast}. \label{app:Abarstar}
\end{align}

Let $t \in A$.
We will next make a collection of assumptions on $t$ in (\ref{p2:ass1})--(\ref{p2:ass6}).
Each such assumption gives rise to an upper bound on the codimension of a particular subspace of $A$.
The justification of these upper bounds will be given in what follows.
Ultimately, we will show that these assumptions imply that $3t=0$ and
thus for field characteristics other than $3$,
no nonzero $t$ can satisfy this condition.
This in turn implies that the codimension of the intersection of the subspaces of $A$ in the upper bounds
of (\ref{p2:ass1})--(\ref{p2:ass6})
must be at least as big as the dimension of $A$,
which then yields the desired inequality.

\begin{align}
&\mbox{We will assume } t \in \overline{A}^\ast. 
 \mbox{ This is true on a subspace of $A$ of codimension at most } \Delta_{\overline{A}^\ast}. \label{p2:ass1}\\
&\mbox{We will assume } f_{10}f_{15}t \in f_{19}(\overline{C} \cap f_{20}^{-1}f_{15}\overline{A}^\ast). 
 \mbox{ This is true on a subspace of $A$ of  } \nonumber\\
&\qquad\qquad \mbox{ codimension at most } H(Z) - H(C) + H(Y) - H(A) + \Delta_{\overline{C}} + \Delta_{\overline{A}^\ast}.\label{p2:ass2}\\
&\mbox{We will assume } f_{11}f_{15}t \in f_{29}(\widetilde{C} \cap f_{28}^{-1}f_{15}\overline{A}^\ast). 
 \mbox{ This is true on a subspace of $A$ of } \nonumber\\
&\qquad\qquad \mbox{ codimension at most } H(W) -H(C) + H(Y) - H(A) + \Delta_{\widetilde{C}} + \Delta_{\overline{A}^\ast}.\label{p2:ass3}\\
&\mbox{We will assume } f_{12}f_{15}t \in f_{26}(\widehat{C} \cap f_{25}^{-1}f_{15}\overline{A}^\ast). 
 \mbox{ This is true on a subspace of $A$ of } \nonumber\\
&\qquad\qquad \mbox{ codimension at most } H(X)- H(C) + H(Y) - H(A) + \Delta_{\widehat{C}} + \Delta_{\overline{A}^\ast}.\label{p2:ass4}\\
&\mbox{We will assume } f_{10}f_{15}t \in f_{22}(\overline{B} \cap f_{23}^{-1}f_{26}(\widehat{C} \cap f_{25}^{-1}f_{15}\overline{A}^\ast)). 
 \mbox{ This is true on a subspace } \nonumber\\
&\qquad\qquad \mbox{ of $A$ of codimension at most }\nonumber\\
&\qquad\qquad H(Z) - H(B) + H(X) - H(C) + H(Y) - H(A) + \Delta_{\overline{A}^\ast} + \Delta_{\overline{B}} + \Delta_{\widehat{C}}.\label{p2:ass5}\\
&\mbox{We will assume } f_{11}f_{15}t \in f_{31}(\widehat{B} \cap f_{32}^{-1}f_{26}(\widehat{C} \cap f_{25}^{-1}f_{15}\overline{A}^\ast)).
 \mbox{ This is true on a subspace } \nonumber\\
&\qquad\qquad\mbox{ of $A$ of codimension at most }\nonumber\\
& \qquad \qquad H(W) - H(B) + H(X) - H(C) + H(Y) - H(A) + \Delta_{\overline{A}^\ast} + \Delta_{\widehat{B}} + \Delta_{\widehat{C}}. \label{p2:ass6}
\end{align}

To justify (\ref{p2:ass2}), first we know $f_{19}$ is injective on
$\overline{C} \cap f_{20}^{-1}f_{15}\overline{A}^\ast$ by
(\ref{p2:f3f19}).
Then by Lemma~\ref{lemma3}, we know 
$f_{10}f_{15}t \in f_{19}(\overline{C} \cap f_{20}^{-1}f_{15}\overline{A}^\ast)$ 
on a subspace of $A$ of codimension at most 
$H(Z) - H(C) + \Codim_C(\overline{C} \cap f_{20}^{-1}f_{15}\overline{A}^\ast)$.
By Lemma~\ref{lemma1}, we know
\begin{align*}
\Codim_C(\overline{C} \cap f_{20}^{-1}f_{15}\overline{A}^\ast) &\le  \Delta_{\overline{C}} + \Codim_C(f_{20}^{-1}f_{15}\overline{A}^\ast).
\end{align*}
Then using Lemma~\ref{lemma3} and (\ref{p2:f15}), we know
\begin{align}
\Codim_C(\overline{C} \cap f_{20}^{-1}f_{15}\overline{A}^\ast) &\le   \Delta_{\overline{C}} + \Codim_Y(f_{15}\overline{A}^\ast) \nonumber\\
                                                       &=      \Delta_{\overline{C}} + H(Y) - \Dim(f_{15}\overline{A}^\ast) \nonumber\\
                                                       &=      \Delta_{\overline{C}} + H(Y) - \Dim(\overline{A}^\ast) \nonumber\\
                                                       &\le      \Delta_{\overline{C}} + H(Y) - H(A) + \Delta_{\overline{A}^\ast}.
\end{align}
So, we have $f_{10}f_{15}t \in f_{19}(\overline{C} \cap f_{20}^{-1}f_{15}\overline{A}^\ast)$ on a subspace of $A$ of codimension at most 
$H(Z) - H(C) + H(Y) - H(A) + \Delta_{\overline{C}} + \Delta_{\overline{A}^\ast}$.

To justify (\ref{p2:ass3}), first we know $f_{29}$ is injective on
$\widetilde{C} \cap f_{28}^{-1}f_{15}\overline{A}^\ast$ by
(\ref{p2:f5f29}).
Then by Lemma~\ref{lemma3}, 
we know $f_{11}f_{15}t \in f_{29}(\widetilde{C} \cap f_{28}^{-1}f_{15}\overline{A}^\ast)$ on a subspace of
$A$ of codimension at most $H(Z) - H(C) + \Codim_C(\widetilde{C} \cap f_{28}^{-1}f_{15}\overline{A}^\ast)$.
By Lemma~\ref{lemma1}, we know
\begin{align*}
\Codim_C(\widetilde{C} \cap f_{28}^{-1}f_{15}\overline{A}^\ast) &\le  \Delta_{\widetilde{C}} + \Codim_C(f_{28}^{-1}f_{15}\overline{A}^\ast).
\end{align*}
Then using Lemma~\ref{lemma3} and (\ref{p2:f15}), we know
\begin{align}
\Codim_C(\widetilde{C} \cap f_{28}^{-1}f_{15}\overline{A}^\ast) &\le   \Delta_{\widetilde{C}} + \Codim_Y(f_{15}\overline{A}^\ast) \nonumber\\
                                                       &=      \Delta_{\widetilde{C}} + H(Y) - \Dim(f_{15}\overline{A}^\ast) \nonumber\\
                                                       &=      \Delta_{\widetilde{C}} + H(Y) - \Dim(\overline{A}^\ast) \nonumber\\
                                                       &\le      \Delta_{\widetilde{C}} + H(Y) - H(A) + \Delta_{\overline{A}^\ast}.
\end{align}
So, we have $f_{11}f_{15}t \in f_{29}(\widetilde{C} \cap f_{28}^{-1}f_{15}\overline{A}^\ast)$ on a subspace of $A$ of codimension at
most $H(Z) - H(C) + H(Y) - H(A) + \Delta_{\widetilde{C}} + \Delta_{\overline{A}^\ast}$.

To justify (\ref{p2:ass4}), first we know $f_{26}$ is injective on 
$\widehat{C} \cap f_{25}^{-1}f_{15}\overline{A}^\ast$ by (\ref{p2:f8f26}).
Then by Lemma
\ref{lemma3}, we know $f_{12}f_{15}t \in f_{26}(\widehat{C} \cap f_{25}^{-1}f_{15}\overline{A}^\ast)$ 
on a subspace of $A$ of codimension at
most $H(Z) - H(C) + \Codim_C(\widehat{C} \cap f_{25}^{-1}f_{15}\overline{A}^\ast)$.
By Lemma~\ref{lemma1}, we know
\begin{align*}
\Codim_C(\widehat{C} \cap f_{25}^{-1}f_{15}\overline{A}^\ast) &\le  \Delta_{\widehat{C}} + \Codim_C(f_{25}^{-1}f_{15}\overline{A}^\ast)
\end{align*}
Then using Lemma~\ref{lemma3} and (\ref{p2:f15}), we know
\begin{align}
\Codim_C(\widehat{C} \cap f_{25}^{-1}f_{15}\overline{A}^\ast) &\le   \Delta_{\widehat{C}} + \Codim_Y(f_{15}\overline{A}^\ast) \nonumber\\
                                                       &=      \Delta_{\widehat{C}} + H(Y) - \Dim(f_{15}\overline{A}^\ast) \nonumber\\
                                                       &=      \Delta_{\widehat{C}} + H(Y) - \Dim(\overline{A}^\ast) \nonumber\\
                                                       &\le      \Delta_{\widehat{C}} + H(Y) - H(A) + \Delta_{\overline{A}^\ast}.
\end{align}
So, we have $f_{12}f_{15}t \in f_{26}(\widehat{C} \cap f_{25}^{-1}f_{15}\overline{A}^\ast)$ on a subspace of $A$ of codimension at
most $H(Z) - H(C) + H(Y) - H(A) + \Delta_{\widehat{C}} + \Delta_{\overline{A}^\ast}$.

To justify (\ref{p2:ass5}), we first know $f_{22}$ is injective on
$\overline{B} \cap f_{23}^{-1}f_{26}(\widehat{C} \cap  f_{25}^{-1}f_{15}\overline{A}^\ast)$ by (\ref{eq:Bbar2}).
Then by Lemma
\ref{lemma3}, we know 
$f_{10}f_{15}t \in f_{22}(\overline{B} \cap f_{23}^{-1}f_{26}(\widehat{C} \cap f_{25}^{-1}f_{15}\overline{A}^\ast))$ 
on a subspace of $A$ of codimension at most 
$H(Z) - H(B) + \Codim_B(\overline{B} \cap  f_{23}^{-1}f_{26}(\widehat{C} \cap f_{25}^{-1}f_{15}\overline{A}^\ast))$.
Now
again we are going to use Lemma~\ref{lemma1}, Lemma~\ref{lemma3}, and
(\ref{p2:f15}).
Also on line (\ref{p2:condf26}) we will use the fact that
$f_{26}$ is injective on $\widehat{C} \cap f_{25}^{-1}f_{15}\overline{A}^\ast$
from (\ref{p2:f8f26}).
\begin{align}
\Codim_B(\overline{B} &\cap f_{23}^{-1}f_{26}(\widehat{C} \cap f_{25}^{-1}f_{15}\overline{A}^\ast)) \le \Delta_{\overline{B}} + \Codim_B(f_{23}^{-1}f_{26}(\widehat{C} \cap f_{25}^{-1}f_{15}\overline{A}^\ast))\nonumber\\
&\le \Delta_{\overline{B}} + \Codim_X(f_{26}(\widehat{C} \cap f_{25}^{-1}f_{15}\overline{A}^\ast))\nonumber\\
&= \Delta_{\overline{B}} + H(X) - \Dim(f_{26}(\widehat{C} \cap f_{25}^{-1}f_{15}\overline{A}^\ast))\nonumber\\
&= \Delta_{\overline{B}} + H(X) - \Dim(\widehat{C} \cap f_{25}^{-1}f_{15}\overline{A}^\ast)\label{p2:condf26}\\
&\le \Delta_{\overline{B}} + H(X) - H(C) + \Codim_C(\widehat{C}) + \Codim_C(f_{25}^{-1}f_{15}\overline{A}^\ast)\nonumber\\
&\le \Delta_{\overline{B}} + H(X) - H(C) + \Delta_{\widehat{C}} + \Codim_Y(f_{15}\overline{A}^\ast)\nonumber\\
&= \Delta_{\overline{B}} + H(X) - H(C) + \Delta_{\widehat{C}} + H(Y) - \Dim(f_{15}\overline{A}^\ast)\nonumber\\
&= \Delta_{\overline{B}} + H(X) - H(C) + H(Y) + \Delta_{\widehat{C}} - \Dim(\overline{A}^\ast)\nonumber\\
&= \Delta_{\overline{B}} + H(X) - H(C) + H(Y) + \Delta_{\widehat{C}} - H(A) + \Codim_A(\overline{A}^\ast)\nonumber\\
&\le \Delta_{\overline{B}} + H(X) - H(C) + H(Y) - H(A) + \Delta_{\widehat{C}} + \Delta_{\overline{A}^\ast}.\label{p2:condf262}
\end{align}
So, we have $f_{10}f_{15}t \in f_{22}(\overline{B} \cap f_{23}^{-1}f_{26}(\widehat{C} \cap f_{25}^{-1}f_{15}\overline{A}^\ast))$ on a
subspace of $A$ of codimension at most 
$H(Z) - H(B) + H(X) - H(C) + H(Y) - H(A) + \Delta_{\overline{A}^\ast} + \Delta_{\overline{B}} + \Delta_{\widehat{C}}$.

To justify (\ref{p2:ass6}), we first know $f_{31}$ is injective on 
$\widehat{B} \cap f_{32}^{-1}f_{26}(\widehat{C} \cap f_{25}^{-1}f_{15}\overline{A}^\ast)$ 
by (\ref{eq:Bhat2}).
Then by Lemma~\ref{lemma3}, we know 
$f_{11}f_{15}t \in f_{31}(\widehat{B} \cap f_{32}^{-1}f_{26}(\widehat{C} \cap  f_{25}^{-1}f_{15}\overline{A}^\ast))$ 
on a subspace of $A$ of codimension at
most $H(W) - H(B) + \Codim_B(\widehat{B} \cap f_{32}^{-1}f_{26}(\widehat{C} \cap  f_{25}^{-1}f_{15}\overline{A}^\ast))$.
Now again we are going to use Lemma
\ref{lemma1} and Lemma~\ref{lemma3},
\begin{align}
\Codim_B(\widehat{B} \cap f_{32}^{-1}f_{26}(\widehat{C} \cap f_{25}^{-1}f_{15}\overline{A}^\ast)) &\le  \Delta_{\widehat{B}} + \Codim_B(f_{32}^{-1}f_{26}(\widehat{C} \cap f_{25}^{-1}f_{15}\overline{A}^\ast))\nonumber\\
&\le  \Delta_{\widehat{B}} + \Codim_X(f_{26}(\widehat{C} \cap f_{25}^{-1}f_{15}\overline{A}^\ast))\nonumber\\
&\le  \Delta_{\widehat{B}} + H(X) - H(C) + H(Y) - H(A) + \Delta_{\widehat{C}} + \Delta_{\overline{A}^\ast}. \nonumber
\end{align}
The last line was derived by copying the argument from (\ref{p2:condf262}).
So, we have 
$f_{11}f_{15}t \in f_{31}(\widehat{B} \cap f_{32}^{-1}f_{26}(\widehat{C} \cap f_{25}^{-1}f_{15}\overline{A}^\ast))$ 
on a subspace of $A$ of codimension at most 
$H(W) - H(B) + H(X) - H(C) + H(Y) - H(A) + \Delta_{\overline{A}^\ast} + \Delta_{\widehat{B}} + \Delta_{\widehat{C}}$.

From (\ref{p2:ass2}) and (\ref{p2:ass5}) we know 
$\exists \overline{c} \in \overline{C}, \overline{b} \in \overline{B}$ such that
\begin{align}
& f_{10}f_{15}t = f_{19}\overline{c} = f_{22}\overline{b} \mbox{ where } f_{20}\overline{c} \in f_{15}\overline{A}^\ast \mbox{ and } f_{23}\overline{b} \in f_{26}(\widehat{C} \cap f_{25}^{-1}f_{15}\overline{A}^\ast). \label{p2:f10}
\end{align}
From (\ref{p2:ass3}) and (\ref{p2:ass6}) we know 
$\exists \widetilde{c} \in \widetilde{C}, \widehat{b} \in \widehat{B}$ such that
\begin{align}
& f_{11}f_{15}t = f_{29}\widetilde{c} = f_{31}\widehat{b} \mbox{ where } f_{28}\widetilde{c} \in f_{15}\overline{A}^\ast \mbox{ and } f_{32}\widehat{b} \in f_{26}(\widehat{C} \cap f_{25}^{-1}f_{15}\overline{A}^\ast). \label{p2:f11}
\end{align}
From (\ref{p2:ass4}) we know 
$\exists \widehat{c} \in \widehat{C}$ such that
\begin{align}
& f_{12}f_{15}t = f_{26}\widehat{c} \mbox{ where } f_{25}\widehat{c} \in f_{15}\overline{A}^\ast. \label{p2:f12}
\end{align}
From (\ref{eq:Abar1}) and (\ref{eq:Abar2}), we know
\begin{align}
f_Bf_{15} &=  -f_{13}    \mbox{ on $\overline{A}$} \nonumber\\
f_B       &=  -f_{13}f_A \mbox{ on $f_{15}\overline{A}$}. \label{p2:fB}
\end{align}
From (\ref{eq:Abar1}) and (\ref{eq:Abar4}), we know
\begin{align}
f_Df_{15} &=  -f_{14}    \mbox{ on $\overline{A}$} \nonumber\\
f_D       &=  -f_{14}f_A \mbox{ on $f_{15}\overline{A}$}. \label{p2:fD}
\end{align}
From (\ref{eq:Abar1}) we have
\begin{align*}
f_7f_{12}f_{15}t + f_1f_{10}f_{15}t &=  t.
\end{align*}
Then (\ref{p2:f12}), (\ref{p2:f10}), (\ref{eq:Chat1}), and (\ref{eq:Cbar1}) give
\begin{align}
f_7f_{12}f_{15}t + f_1f_{10}f_{15}t &=  t \nonumber\\
f_7f_{26}\widehat{c} + f_1f_{19}\overline{c} &=  t\nonumber\\
-f_Af_{25}\widehat{c} - f_Af_{20}\overline{c} &=  t \nonumber\\
f_Af_{25}\widehat{c} + f_Af_{20}\overline{c} &=  -t. \label{p2:I1}
\end{align}
From (\ref{eq:Abar2}) we have
\begin{align*}
f_4f_{11}f_{15}t + f_2f_{10}f_{15}t &=  -f_{13}t.
\end{align*}
Then (\ref{p2:f11}), (\ref{p2:f10}), (\ref{eq:Ctilde2}), and (\ref{eq:Cbar2}) give
\begin{align}
f_4f_{11}f_{15}t + f_2f_{10}f_{15}t &=  -f_{13}t \nonumber\\
f_4f_{29}\widetilde{c} + f_2f_{19}\overline{c} &=  -f_{13}t\nonumber\\
-f_Bf_{28}\widetilde{c}  -f_Bf_{20}\overline{c} &=  -f_{13}t. \nonumber
\end{align}
By (\ref{p2:f11}) and (\ref{p2:f10}), we know 
$f_{28}\widetilde{c} \in f_{15}\overline{A}^\ast$ and $f_{20}\overline{c} \in f_{15}\overline{A}^\ast$.
Now by (\ref{p2:fB}), we have
\begin{align}
-f_Bf_{28}\widetilde{c}  -f_Bf_{20}\overline{c} &=  -f_{13}t \nonumber\\
f_{13}f_Af_{28}\widetilde{c} + f_{13}f_Af_{20}\overline{c} &=  -f_{13}t. \nonumber
\end{align}
Then using (\ref{eq:Abar1}), we know 
$f_Af_{28}\widetilde{c} \in \overline{A}^\ast$ and $f_Af_{20}\overline{c} \in \overline{A}^\ast$.
By
(\ref{p2:f13}), we have
\begin{align}
f_{13}f_Af_{28}\widetilde{c} + f_{13}f_Af_{20}\overline{c} &=  -f_{13}t \nonumber\\
f_Af_{28}\widetilde{c} + f_Af_{20}\overline{c} &=  -t. \label{p2:I2}
\end{align}
From (\ref{eq:Abar4}) we have
\begin{align*}
f_9f_{12}f_{15}t + f_6f_{11}f_{15}t &=  -f_{14}t.
\end{align*}
Then (\ref{p2:f12}), (\ref{p2:f11}), (\ref{eq:Ctilde4}), and (\ref{eq:Chat4}) give
\begin{align}
f_9f_{12}f_{15}t + f_6f_{11}f_{15}t &=  -f_{14}t \nonumber\\
f_9f_{26}\widehat{c} + f_6f_{29}\widetilde{c} &=  -f_{14}t\nonumber\\
-f_Df_{25}\widehat{c} + -f_Df_{28}\widetilde{c} &=  -f_{14}t. \nonumber
\end{align}
By (\ref{p2:f12}) and (\ref{p2:f11}), we know 
$f_{25}\widehat{c} \in f_{15}\overline{A}^\ast$ and $f_{28}\widetilde{c} \in f_{15}\overline{A}^\ast$.
Now by (\ref{p2:fD}), we have
\begin{align}
-f_Df_{25}\widehat{c} + -f_Df_{28}\widetilde{c} &=  -f_{14}t \nonumber\\
f_{14}f_Af_{25}\widehat{c} + f_{14}f_Af_{28}\widetilde{c} &=  -f_{14}t. \nonumber
\end{align}
Then using (\ref{eq:Abar1}), we know 
$f_Af_{25}\widehat{c} \in \overline{A}^\ast$ and $f_Af_{28}\widetilde{c} \in \overline{A}^\ast$.
By (\ref{p2:f14}), we have
\begin{align}
f_{14}f_Af_{25}\widehat{c} + f_{14}f_Af_{28}\widetilde{c} &=  -f_{14}t \nonumber\\
f_Af_{25}\widehat{c} + f_Af_{28}\widetilde{c} &=  -t. \label{p2:I3}
\end{align}
From (\ref{eq:Cbar1}) and (\ref{p2:f3f19}), we know
\begin{align}
f_1f_{19} &=  -f_Af_{20} \mbox{ on $\overline{C}$} \nonumber\\
f_1 &=  -f_Af_{20}f_3\mbox{ on $f_{19}(\overline{C} \cap f_{20}^{-1}f_{15}\overline{A}^\ast)$}. \label{p2:f1}
\end{align}
From (\ref{eq:Chat1}) and (\ref{p2:f8f26}), we know
\begin{align}
f_7f_{26} &=  -f_Af_{25} \mbox{ on $\widehat{C}$} \nonumber\\ 
f_7 &=  -f_Af_{25}f_8\mbox{ on $f_{26}(\widehat{C} \cap f_{25}^{-1}f_{15}\overline{A}^\ast)$}. \label{p2:f7}
\end{align}
From (\ref{eq:Bbar1}), we have
\begin{align}
f_7f_{23}\overline{b} + f_1f_{22}\overline{b} &=  0. \nonumber
\end{align}
By (\ref{p2:f10}), we know $f_{23}\overline{b} \in f_{26}(\widehat{C} \cap f_{25}^{-1}f_{15}\overline{A}^\ast)$.
By (\ref{p2:f10}), we also know
$f_{22}\overline{b} = f_{19}\overline{c}$, which implies 
$f_{22}\overline{b} \in f_{19}(\overline{C} \cap f_{20}^{-1}f_{15}\overline{A}^\ast)$.
Now we can
apply (\ref{p2:f1}) and (\ref{p2:f7}) to give us
\begin{align}
f_7f_{23}\overline{b} + f_1f_{22}\overline{b} &=  0 \nonumber\\
-f_Af_{25}f_8f_{23}\overline{b} - f_Af_{20}f_3f_{22}\overline{b} &=  0. \nonumber
\end{align}
Now using (\ref{eq:Bbar3}), (\ref{p2:f10}), and (\ref{p2:f3f19}), we have
\begin{align}
-f_Af_{25}f_8f_{23}\overline{b} - f_Af_{20}f_3f_{22}\overline{b} &=  0 \nonumber\\
f_Af_{25}f_3f_{22}\overline{b} - f_Af_{20}f_3f_{22}\overline{b} &=  0 \nonumber\\
f_Af_{25}f_3f_{22}\overline{b} &=  f_Af_{20}f_3f_{22}\overline{b} \nonumber\\
f_Af_{25}f_{3}f_{19}\overline{c} &=  f_Af_{20}f_3f_{19}\overline{c}\nonumber\\
f_Af_{25}\overline{c} &=  f_Af_{20}\overline{c}.\label{p2:f20f25}
\end{align}
From (\ref{eq:Chat4}) and (\ref{p2:f8f26}), we know
\begin{align}
f_9f_{26} &=  -f_Df_{25} \mbox{ on $\widehat{C}$} \nonumber\\ 
f_9 &=  -f_Df_{25}f_{8} \mbox{ on $f_{26}(\widehat{C} \cap f_{25}^{-1}f_{15}\overline{A}^\ast)$}.\label{p2:f9}
\end{align}
From (\ref{eq:Ctilde4}) and (\ref{p2:f5f29}), we know
\begin{align}
f_6f_{29} &=  -f_Df_{28} \mbox{ on $\widetilde{C}$} \nonumber\\ 
f_6 &=  -f_Df_{28}f_{5} \mbox{ on $f_{29}(\widetilde{C} \cap f_{28}^{-1}f_{15}\overline{A}^\ast)$}.\label{p2:f6}
\end{align}
From (\ref{eq:Bhat4}), we have
\begin{align}
f_9f_{32}\widehat{b} + f_6f_{31}\widehat{b} &=  0. \nonumber
\end{align}
From (\ref{p2:f11}) we know $f_{31}\widehat{b} = f_{29}\widetilde{c}$ so
$f_{31}\widehat{b} \in f_{29}(\widetilde{C} \cap f_{28}^{-1}f_{15}\overline{A}^\ast)$.
From (\ref{p2:f11}) we also know that
$f_{32}\widehat{b} \in f_{26}(\widehat{C} \cap f_{25}^{-1}f_{15}\overline{A}^\ast)$, 
so (\ref{p2:f9}) and (\ref{p2:f6}) give us
\begin{align}
f_9f_{32}\widehat{b} + f_6f_{31}\widehat{b} &=  0 \nonumber\\
-f_Df_{25}f_8f_{32}\widehat{b} - f_Df_{28}f_5f_{31}\widehat{b} &=  0. \nonumber
\end{align}
From (\ref{p2:f11}), we know $f_{32}\widehat{b} \in f_{26}(\widehat{C} \cap f_{25}^{-1}f_{15}\overline{A}^\ast)$.
From (\ref{p2:f8f26}), we know $f_8f_{26} = I$ on $\widehat{C} \cap f_{25}^{-1}f_{15}\overline{A}^\ast$.
So
$f_8f_{32}\widehat{b} \in f_{25}^{-1}f_{15}\overline{A}^\ast$, which implies
$f_{25}f_8f_{32}\widehat{b} \in f_{15}\overline{A}^\ast$.
By (\ref{p2:f11}) and
(\ref{p2:f5f29}), we know 
$f_{28}f_5f_{31}\widehat{b} = f_{28}f_5f_{29}\widetilde{c} = f_{28}\widetilde{c} \in f_{15}\overline{A}^\ast$.
Now we can apply (\ref{p2:fD}) to give us
\begin{align}
-f_Df_{25}f_8f_{32}\widehat{b} - f_Df_{28}f_5f_{31}\widehat{b} &=  0 \nonumber\\
f_{14}f_Af_{25}f_8f_{32}\widehat{b} + f_{14}f_Af_{28}f_5f_{31}\widehat{b} &=  0. \nonumber
\end{align}
Since we already established that 
$f_{25}f_8f_{32}\widehat{b} \in f_{15}\overline{A}^\ast$ and $f_{28}f_5f_{31}\widehat{b} \in f_{15}\overline{A}^\ast$, 
by (\ref{eq:Abar1}) and (\ref{p2:f14}) we know
\begin{align}
f_{14}f_Af_{25}f_8f_{32}\widehat{b} + f_{14}f_Af_{28}f_5f_{31}\widehat{b} &=  0 \nonumber\\
f_Af_{25}f_8f_{32}\widehat{b} + f_Af_{28}f_5f_{31}\widehat{b} &=  0.\nonumber
\end{align}
Now by (\ref{eq:Bhat3})
\begin{align}
f_Af_{25}f_8f_{32}\widehat{b} + f_Af_{28}f_5f_{31}\widehat{b} &=  0\nonumber\\
-f_Af_{25}f_5f_{31}\widehat{b} + f_Af_{28}f_5f_{31}\widehat{b} &=  0\nonumber\\
f_Af_{25}f_5f_{31}\widehat{b} &=  f_Af_{28}f_5f_{31}\widehat{b}. \nonumber
\end{align}
By (\ref{p2:f11}) and (\ref{p2:f5f29}), we have
\begin{align}
f_Af_{25}f_5f_{31}\widehat{b} &=  f_Af_{28}f_5f_{31}\widehat{b} \nonumber\\
f_Af_{25}f_5f_{29}\widetilde{c} &=  f_Af_{28}f_5f_{29}\widetilde{c} \nonumber\\
f_Af_{25}\widetilde{c} &=  f_Af_{28}\widetilde{c}. \label{p2:f28f25}
\end{align}
Now adding (\ref{p2:I1}), (\ref{p2:I2}), and (\ref{p2:I3}), we have
\begin{align*}
-3t &=  2(f_Af_{20}\overline{c} + f_Af_{25}\widehat{c} + f_Af_{28}\widetilde{c}).
\end{align*}
Now using (\ref{p2:f20f25}) and (\ref{p2:f28f25}) we have
\begin{align}
-3t &=  2(f_Af_{25}\overline{c} + f_Af_{25}\widehat{c} + f_Af_{25}\widetilde{c})\nonumber\\
-3t &=  2f_Af_{25}(\overline{c} + \widehat{c} + \widetilde{c}).\nonumber
\end{align}
By (\ref{p2:f3f19}), (\ref{p2:f5f29}), and (\ref{p2:f8f26}) we know
\begin{align}
-3t &=  2f_Af_{25}(f_3f_{19}\overline{c} + f_8f_{26}\widehat{c} + f_5f_{29}\widetilde{c}).\nonumber
\end{align}
By (\ref{p2:f10}), (\ref{p2:f11}), (\ref{p2:f12}), and (\ref{eq:Abar3}), we have
\begin{align}
-3t &=  2f_Af_{25}(f_3f_{10}f_{15}t + f_8f_{12}f_{15}t + f_5f_{11}f_{15}t)\nonumber\\
-3t &=  2f_Af_{25}(0)\nonumber\\
3t &=  0.
\end{align}
Thus if the field is of characteristic other than 3, then no nonzero $t$ can
satisfy conditions (\ref{p2:ass1})--(\ref{p2:ass6}).
Therefore the sum of the
codimensions given in the assumptions must be at least the dimension of $A$.
So
we have a linear rank inequality for fields of characteristic other than 3:
\begin{align*}
    H(A) &\le   \Delta_{\overline{A}^\ast} + H(Z) - H(C) + H(Y) - H(A) + \Delta_{\overline{C}} + \Delta_{\overline{A}^\ast} \\
    &\ \ \ + H(W) - H(C) + H(Y) - H(A) + \Delta_{\widetilde{C}}+ \Delta_{\overline{A}^\ast} \\
    &\ \ \  + H(X) - H(C) + H(Y) - H(A) + \Delta_{\widehat{C}} + \Delta_{\overline{A}^\ast} \\
    &\ \ \  + H(Z) - H(B) + H(X) - H(C) + H(Y) - H(A) + \Delta_{\overline{A}^\ast} + \Delta_{\overline{B}} + \Delta_{\widehat{C}}\\
    &\ \ \  + H(W) - H(B) + H(X) - H(C) + H(Y) - H(A) + \Delta_{\overline{A}^\ast} + \Delta_{\widehat{B}} + \Delta_{\widehat{C}}\\
    &=  2H(Z) + 5H(Y) + 3H(X) + 2H(W) - 5H(A) -2H(B) - 5H(C)\\
    &\ \ \  + 6\Delta_{\overline{A}^\ast} + \Delta_{\overline{B}} + \Delta_{\widehat{B}} + \Delta_{\overline{C}} + \Delta_{\widetilde{C}} +  3\Delta_{\widehat{C}}\\
    &=  2H(Z) + 5H(Y) + 3H(X) + 2H(W) - 5H(A) -2H(B) - 5H(C)\\
    &\ \ \  + 6(  H(W) + 4H(Y) + H(Z) - 2H(A) - H(B) -2H(C) - H(D) )\\
    &\ \ \  + 6( 3\Delta_{\overline{A}} + \Delta_{\overline{B}} + \Delta_{\overline{C}} + \Delta_{\widetilde{C}}+ \Delta_{\overline{D}} )+ \Delta_{\overline{B}} + \Delta_{\widehat{B}} + \Delta_{\overline{C}} + \Delta_{\widetilde{C}} + 3\Delta_{\widehat{C}}\\
    &=  8H(Z) + 29H(Y) + 3H(X) + 8H(W) -6H(D) -17H(C) - 8H(B) - 17H(A)\\
    &\ \ \  + 18\Delta_{\overline{A}} + 7\Delta_{\overline{B}} + \Delta_{\widehat{B}} + 7\Delta_{\overline{C}} + 7\Delta_{\widetilde{C}} + 3\Delta_{\widehat{C}} + 6\Delta_{\overline{D}}\\
    &=  8H(Z) + 29H(Y) + 3H(X) + 8H(W) -6H(D) -17H(C) - 8H(B) - 17H(A)\\
    &\ \ \  + 55H(Z|A,B,C) + 35H(Y|W,X,Z) + 50H(X|A,C,D) + 49H(W|B,C,D)\\
    &\ \ \  + 18H(A|B,D,Y) + 7H(B|D,X,Z) + H(B|A,W,X) + 7H(C|D,Y,Z) \\
    &\ \ \  + 7H(C|B,X,Y) + 3H(C|A,W,Y) + 6H(D|A,W,Z)\\
    &\ \ \  + 49(H(A) + H(B) + H(C) + H(D) - H(A,B,C,D)).
\end{align*}
\end{proof}

The next theorem demonstates that the inequality in Theorem~\ref{thm:T8}
does not in general hold for vector spaces with finite fields of characteristic $3$.

\begin{thm}\label{thm:T8-char3}
There exists a vector space $V$ with a finite scalar field of characteristic $3$ 
such that the T8 inequality in Theorem~\ref{thm:T8} 
is not a linear rank inequality over $V$.
\end{thm}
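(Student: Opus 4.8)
The plan is to exhibit the counterexample directly from the T8 matroid that motivated the inequality. Since the T8 matroid is representable over $GF(3)$ by the displayed $4\times 8$ matrix, I would take $V = GF(3)^4$, a finite vector space whose scalar field has characteristic $3$, and set $A,B,C,D,W,X,Y,Z$ to be the one-dimensional subspaces spanned by the corresponding columns. Every one of the eight variables then has rank $1$, and since $\{A,B,C,D\}$ is a basis of $V$ we get $H(A,B,C,D)=4$; in particular the Ingleton-type term $H(A)+H(B)+H(C)+H(D)-H(A,B,C,D)$ equals $0$.

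The next step is to observe that each of the eleven conditional-rank terms on the right-hand side of Theorem~\ref{thm:T8} vanishes. A term $H(P\mid Q_1,Q_2,Q_3)$ with $P$ one-dimensional is zero precisely when $P\subseteq\langle Q_1,Q_2,Q_3\rangle$, i.e.\ when $\{P,Q_1,Q_2,Q_3\}$ is dependent in the T8 matroid; and these are exactly the dependencies that the T8 network, and hence the inequality, was constructed to model. Each one is a one-line check in $GF(3)^4$: for instance $W+X+Y+Z=0$ gives $H(Y\mid W,X,Z)=0$, the relation $Y=A+B+D$ gives $H(A\mid B,D,Y)=0$, $B=A+W-X$ gives $H(B\mid A,W,X)=0$, $C=A+W-Y$ gives $H(C\mid A,W,Y)=0$, $D=A+W-Z$ gives $H(D\mid A,W,Z)=0$, and the remaining terms $H(Z\mid A,B,C)$, $H(X\mid A,C,D)$, $H(W\mid B,C,D)$, $H(B\mid D,X,Z)$, $H(C\mid D,Y,Z)$, $H(C\mid B,X,Y)$ are handled by analogous explicit relations among the eight columns.

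With every conditional term and the Ingleton-type term equal to $0$, the right-hand side of the T8 inequality reduces to $8H(Z)+29H(Y)+3H(X)+8H(W)-6H(D)-17H(C)-8H(B)-17H(A)$, which equals $8+29+3+8-6-17-8-17=0$ since each of these ranks is $1$. The left-hand side is $H(A)=1$. So over this $V$ the inequality would read $1\le 0$, which is false; hence it is not a linear rank inequality over $V$, which proves the theorem.

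There is no genuine obstacle here: the content is a finite verification, and the only care needed is the bookkeeping that matches each conditional term of the inequality to an actual characteristic-$3$ column dependency of the T8 matroid. It is worth noting why this does not contradict Theorem~\ref{thm:T8}: that proof ended by deducing $3t=0$ on a subspace of $A$, which forces $t=0$ only when the characteristic is not $3$; over $GF(3)$ that last step is vacuous, and the T8 representation is essentially the assignment from which the whole construction was reverse-engineered, so it is exactly where the argument loses its only source of slack.
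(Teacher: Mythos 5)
Your proposal is correct and is essentially identical to the paper's own proof: both take $V=GF(3)^4$ with $A,B,C,D,W,X,Y,Z$ the spans of the eight columns of the T8 representation, verify that all eleven conditional terms and the term $H(A)+H(B)+H(C)+H(D)-H(A,B,C,D)$ vanish (with the characteristic-$3$ dependence entering only through $H(Y|W,X,Z)=0$, i.e.\ $W+X+Y+Z=0$), and observe that the right-hand side then collapses to $8+29+3+8-6-17-8-17=0$ while $H(A)=1$. No gaps; your concluding remark about where the proof of Theorem~\ref{thm:T8} loses traction over $GF(3)$ matches the paper's reasoning as well.
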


\begin{proof}
Let $V$ be the vector space of $4$-dimensional vectors whose components are from the field $GF(3)$,
and define the following subspaces of $V$:
\begin{align*}
    A = \langle (1,0,0,0)\rangle &\ \ \  B = \langle (0,1,0,0)\rangle\\
    C = \langle (0,0,1,0)\rangle &\ \ \  D = \langle (0,0,0,1)\rangle\\
    W = \langle (0,1,1,1)\rangle &\ \ \  X = \langle (1,0,1,1)\rangle\\
    Y = \langle (1,1,0,1)\rangle &\ \ \  Z = \langle (1,1,1,0)\rangle.
\end{align*}
We have:
\begin{align}
0           &= H(Z|A,B,C)\Comment{$(1,1,1,0)=(1,0,0,0)+(0,1,0,0)+(0,0,1,0)$}\nonumber\\  
            &= H(W|B,C,D)\Comment{$(0,1,1,1)=(0,1,0,0)+(0,0,1,0)+(0,0,0,1)$}\nonumber\\
            &= H(X|A,C,D)\Comment{$(1,0,1,1)=(1,0,0,0)+(0,0,1,0)+(0,0,0,1)$}\nonumber\\
            &= H(Y|W,X,Z)\Comment{$(1,1,0,1)=2^{-1}\cdot((0,1,1,1)+(1,0,1,1)+(1,1,1,0))$}\nonumber\\
            &= H(A|B,D,Y)\Comment{$(1,0,0,0)=(1,1,0,1)-(0,1,0,0)-(0,0,0,1)$}\nonumber\\
            &= H(D|A,W,Z)\Comment{$(0,0,0,1)=(0,1,1,1)+(1,0,0,0)-(1,1,1,0)$}\nonumber\\
            &= H(C|D,Y,Z)\Comment{$(0,0,1,0)=(1,1,1,0)+(0,0,0,1)-(1,1,0,1)$}\nonumber\\
            &= H(B|D,X,Z)\Comment{$(0,1,0,0)=(1,1,1,0)+(0,0,0,1)-(1,0,1,1)$}\nonumber\\
            &= H(C|B,X,Y)\Comment{$(0,0,1,0)=(1,0,1,1)+(0,1,0,0)-(1,1,0,1)$}\nonumber\\
            &= H(C|A,W,Y)\Comment{$(0,0,1,0)=(0,1,1,1)+(1,0,0,0)-(1,1,0,1)$}\nonumber\\
            &= H(B|A,W,X)\Comment{$(0,1,0,0)=(0,1,1,1)+(1,0,0,0)-(1,0,1,1)$}.\nonumber\\
\label{eq:zero-entropies-T8}
\end{align}
Note that the characteristic $3$ assumption is used above
in showing $H(Y|W,X,Z)=0$,
by using the fact that the ranks of $Y$ and $Y\cap \langle W, X, Z\rangle$ are both $1$,
since
$(1,1,0,1) = 2^{-1}\cdot ( (0,1,1,1) + (1,0,1,1) + (1,1,1,0))$,
which holds for scalar fields of characteristic $3$
(in fact, for all characteristics except $2$).

We know $H(A)=H(B)=H(C)=H(D)=H(W)=H(X)=H(Y)=H(Z)=1$.
Also, we have 
$$H(A) + H(B) + H(C) + H(D) = H(A,B,C,D).$$
So, if the inequality in 
Theorem~\ref{thm:T8} were to hold over $V$, 
then we would have
\begin{align*}
1
&= H(A)\\
&\le 8H(Z) + 29H(Y) + 3H(X) + 8H(W) -6H(D) -17H(C) - 8H(B) - 17H(A)\\
&= 8 + 29 + 3 + 8 -6 -17 - 8 - 17\\
&= 0
\end{align*}
which is impossible.
\end{proof}

Consider a network over finite field $F$ with a $(k,n)$ linear code.
The \textit{vector space associated with any message} is defined to be $F^k$.
The \textit{vector space associated with any edge} is defined to be the set of all
possible vectors from $F^n$ that can be carried on that edge
(i.e. taking into account the linear code).

Since each output of a network node is a function of the node's inputs,
the conditional entropy of the vector carried by a node's out-edge,
given the entropies of the vectors carried by the node's in-edges, is zero,
assuming the network messages are uniform random vectors.
The following lemma extends this idea from random variables to vector spaces
and will be useful for the proof of Corollary~\ref{cor:T8-capacity}.

\begin{lem}
Suppose a network has a node with an out-edge (or demand) $x$ 
and in-edges and messages (in some order) $y_1, \dots, y_m$.
Suppose the network has a finite field alphabet and a linear code.
Let us view $X,Y_1, \dots, Y_m$ as the vector spaces associated with 
$x,y_1, \dots, y_m$, respectively.
Then we have $H(X|Y_1, \dots, Y_m)=0$.
\label{lem:conditional-dimensions} 
\end{lem}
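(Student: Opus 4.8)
The plan is to reduce the statement to the elementary observation that a node's output is a (here, linear) function of its inputs, once we make explicit the translation between the linear code and the associated subspaces. First I would recall how the subspaces arise (following \cite{rateregions-journal}): fix the $(k,n)$ linear code over $F$, let $m_1,\dots,m_r$ be the source messages, and take as ambient space $V = F^{rk}$, the space of all message assignments. Processing the network in topological order shows that for each message or edge $u$ there is a linear map $\theta_u$ from $V$ to the coordinate space of $u$ sending a message assignment $\vec{m}$ to the vector carried (or held) by $u$; the subspace associated with $u$ is then realized inside $V$ as the row space of $\theta_u$, equivalently $(\ker\theta_u)^{\perp}$. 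Its dimension is the rank of $\theta_u$, i.e.\ the number (in logarithmic units) of distinct vectors $u$ can carry, which is exactly how $Y_i$ was described as ``the set of all possible vectors carried on $y_i$''. A routine orthogonal-complement computation gives, for any objects $u_1,\dots,u_t$, that $\dim\langle U_{u_1},\dots,U_{u_t}\rangle = \dim V - \dim\bigcap_j \ker\theta_{u_j}$, which is precisely the joint Shannon entropy $H(u_1,\dots,u_t)$ of the corresponding values when the messages are independent and uniform; in particular the two meanings of $H$ appearing in the conclusion of the lemma agree.

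With this dictionary in place, the key step is short. Since each output of a network node depends only on its inputs and the code is linear, the vector on the out-edge or demand $x$ is a fixed $F$-linear function (with matrix coefficients) of the vectors on/at the node's in-edges and messages $y_1,\dots,y_m$. Hence $\theta_x = g \circ (\theta_{y_1},\dots,\theta_{y_m})$ for a linear map $g$, so $\bigcap_{i=1}^m \ker\theta_{y_i} = \ker(\theta_{y_1},\dots,\theta_{y_m}) \subseteq \ker\theta_x$. Passing to orthogonal complements,
$$ X \;=\; (\ker\theta_x)^{\perp} \;\subseteq\; \Bigl(\bigcap_{i=1}^m \ker\theta_{y_i}\Bigr)^{\perp} \;=\; \sum_{i=1}^m (\ker\theta_{y_i})^{\perp} \;=\; \langle Y_1,\dots,Y_m\rangle . $$
Therefore $\langle X, Y_1,\dots,Y_m\rangle = \langle Y_1,\dots,Y_m\rangle$, and by the definition of the conditional-rank notation (extending \eqref{h1}), $H(X|Y_1,\dots,Y_m) = H(X,Y_1,\dots,Y_m) - H(Y_1,\dots,Y_m) = 0$.

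I expect the only genuine difficulty to be bookkeeping rather than mathematics: one must set up the correspondence between the code and the subspaces carefully enough that ``$H$ of a subspace'' and ``entropy of the corresponding message/edge value under uniform random messages'' literally coincide, so that the conclusion is even meaningful. Once that is pinned down, the argument is the one-line fact displayed above; equivalently, the random-variable statement ``$x$ is a deterministic function of $y_1,\dots,y_m$'' already gives $H(x|y_1,\dots,y_m)=0$, and the subspace model was built precisely so that this transfers verbatim to ranks. No deeper input, and none of the characteristic-dependent machinery of the earlier sections, is needed.
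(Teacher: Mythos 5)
Your proposal is correct and takes essentially the same route as the paper: the paper's proof is exactly the one-line observation that the vector carried on $x$ is a linear combination of the vectors carried on $y_1,\dots,y_m$, so $X\subseteq\langle Y_1,\dots,Y_m\rangle$ and hence $\dim(X)=\dim(X\cap\langle Y_1,\dots,Y_m\rangle)$, i.e.\ $H(X|Y_1,\dots,Y_m)=0$. Your kernel/row-space dictionary in the global message space is simply a careful formalization of the ambient-space bookkeeping that the paper leaves implicit in its definition of the associated subspaces.
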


\begin{proof}
The vector carried on the node's out-edge (or demand) $x$ is a linear combination
of the vectors carried on the node's in-edges and the node's messages 
$y_1, \dots, y_m$.
Thus, every vector appearing on the node's out-edge (or demand) lies in the span
of the subspaces $Y_1, \dots, Y_m$.
This implies 
$\Dim (X) = \Dim (X \cap \langle Y_1, \dots, Y_m\rangle)$,
or equivalently,
$H(X|Y_1, \dots, Y_m)=0$.
\end{proof}

The following corollary uses the T8 linear rank inequality to derive
capacities and a capacity bound on the T8 network.
Note that although the T8 network itself was used as a guide in obtaining the
T8 linear rank inequality, 
subsequently using the inequality to bound the network capacity is not circular reasoning.

The proof of Corollary~\ref{cor:T8-capacity} below makes use of the T8 linear rank inequality,
and resembles the example shown earlier in \eqref{rv-inequality} for computing the capacity of the Butterfly network
using information inequalities and random variables.

\begin{cor}
For the T8 network,
the linear coding capacity is at most $48/49$ 
over any finite field alphabet of characteristic not equal to $3$.
The linear coding capacity over finite field alphabets of characteristic $3$
and the coding capacity are both equal to $1$.
\label{cor:T8-capacity}
\end{cor}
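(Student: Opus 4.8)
The plan is to treat the two characteristic regimes separately. The bound $48/49$ for characteristic $\neq 3$ will come directly from plugging the subspace configuration of a linear solution into the T8 linear rank inequality (Theorem~\ref{thm:T8}), and the value $1$ in characteristic $3$ (together with the coding capacity) will come from a short Butterfly‑style information‑inequality computation combined with the known representability of the T8 matroid over characteristic $3$.

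For the characteristic $\neq 3$ bound I would fix an arbitrary $(k,n)$ linear solution of the T8 network over a finite field $F$ of characteristic $\neq 3$ and pass to the associated subspace configuration inside $V=F^{4k}$ (the global message space), as described just before Lemma~\ref{lem:conditional-dimensions}: the messages $A,B,C,D$ become independent $k$‑dimensional subspaces with $H(A)=H(B)=H(C)=H(D)=k$ and $H(A,B,C,D)=4k$, so the independence term $H(A)+H(B)+H(C)+H(D)-H(A,B,C,D)$ vanishes; the four coded edges become subspaces $W,X,Y,Z$ of $V$ with $H(W),H(X),H(Y),H(Z)\le n$; and each of the eleven conditional‑entropy terms appearing in Theorem~\ref{thm:T8} corresponds to a node of the T8 network — the four coding nodes producing $Z,W,X,Y$ from $\{A,B,C\}$, $\{B,C,D\}$, $\{A,C,D\}$, $\{W,X,Z\}$, and the seven demand nodes $n_9$--$n_{15}$ — and hence equals $0$ by Lemma~\ref{lem:conditional-dimensions}. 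Substituting all of this into the T8 linear rank inequality collapses it to
\begin{align*}
k \;=\; H(A) \;&\le\; 8H(Z)+29H(Y)+3H(X)+8H(W)-6H(D)-17H(C)-8H(B)-17H(A)\\
&\le\; (8+29+3+8)\,n-(6+17+8+17)\,k \;=\; 48n-48k ,
\end{align*}
whence $49k\le 48n$, i.e.\ $k/n\le 48/49$. Since this holds for every linear solution, the linear coding capacity over any finite field of characteristic $\neq 3$ is at most $48/49$.

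For the characteristic $3$ statements I would first record the lower bounds: the T8 matroid is representable over every finite field of characteristic $3$, so the matroidal construction of~\cite{nonshannon} yields a rate‑$1$ linear solution of the T8 network over each such field; hence the linear coding capacity over characteristic $3$ is at least $1$, and therefore so is the coding capacity (which by~\cite{Cannons-Dougherty-Freiling-Zeger05} does not depend on the alphabet). For the matching upper bound I would argue information‑theoretically, treating $A,B,C,D$ as independent uniform random vectors exactly as in the Butterfly computation~\eqref{rv-inequality}: using the demand node that recovers $A$ from $B,D,Y$ (so $H(A|B,D,Y)=0$, hence $H(A|B,C,D,Y)=0$ by~\eqref{h9}), independence of the messages, and $H(Y)\le n$,
\begin{align*}
4k \;=\; H(A,B,C,D) \;&\le\; H(A,B,C,D,Y)\\
&=\; H(B,C,D,Y)+H(A|B,C,D,Y) \;=\; H(B,C,D,Y)\\
&=\; H(B,C,D)+H(Y|B,C,D) \;\le\; 3k+H(Y) \;\le\; 3k+n ,
\end{align*}
so $k\le n$ and $k/n\le 1$ for every $(k,n)$ solution. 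Thus the coding capacity is at most $1$, hence equal to $1$; and the characteristic $3$ linear coding capacity, being at least $1$ and at most the coding capacity, equals $1$ as well.

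Once the dictionary between the eleven conditional‑entropy terms of Theorem~\ref{thm:T8} and the nodes of the T8 network is made explicit, the rest is bookkeeping. The one point needing real care is verifying that each of those nodes computes its output from \emph{exactly} the three variables named in the corresponding term, so that Lemma~\ref{lem:conditional-dimensions} produces a genuine $0$ rather than merely a nonnegative quantity, and likewise that the message subspaces in the associated configuration are independent; both are guaranteed by the way the T8 network was constructed from the T8 matroid in~\cite{nonshannon}.
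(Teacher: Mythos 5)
Your proposal is correct, and its core — the $48/49$ bound for characteristic $\ne 3$ — is exactly the paper's argument: specialize Theorem~\ref{thm:T8} to the subspace/random-variable configuration of a $(k,n)$ linear solution, kill the eleven conditional terms and the independence term via Lemma~\ref{lem:conditional-dimensions} and message independence, and read off $49k\le 48n$. You diverge from the paper in the two auxiliary steps. For the rate-$1$ lower bound in characteristic $3$, the paper simply exhibits the scalar-linear code ($Z=A+B+C$, $W=B+C+D$, $X=A+C+D$, $Y=W+X+Z$, with decoders using $2^{-1}$), whereas you appeal to representability of the T8 matroid over characteristic $3$ together with the construction of \cite{nonshannon}; that is morally right, but the paper never states a theorem that the constructed network inherits a scalar-linear solution from a representation, so as written this is the one step you should firm up — either cite the precise result in \cite{nonshannon} or, more simply, write down the explicit code as the paper does. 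For the coding-capacity upper bound, the paper uses a one-line cut observation (every path from source $A$ to $n_9$ passes through the single edge $(n_7,n_8)$), while you run a Butterfly-style entropy computation from $H(A|B,D,Y)=0$ at $n_9$ to get $4k\le 3k+n$; both are valid for arbitrary (not necessarily linear) codes, your version being essentially the entropic form of the same cut, at the cost of a slightly longer chain (and of noting that what $n_9$ actually receives is a function of $Y$). With the characteristic-$3$ solution made explicit or properly cited, your proof is complete and equivalent in strength to the paper's.
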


\begin{proof}
Let $F$ be a finite field alphabet.
Consider a $(k,n)$ linear solution of the T8 network over $F$, 
such that the characteristic of $F$ is not $3$.
Let $A$, $B$, $C$, $D$ be message random variables in the T8 network,
that are uniformly distributed over vectors in $F^k$.
Let $W$, $X$, $Y$, $Z$ be the resulting random variables
associated with the corresponding labeled edges of T8 in Figure~\ref{fig:T8}.

Equations \eqref{eq:zero-entropies-T8} now hold
with random variables $A,B,C,D,W,X,Y,Z$ 
(i.e. not as subspaces as in Theorem~\ref{thm:T8-char3})
by Lemma~\ref{lem:conditional-dimensions}:

\begin{align*}
0           &= H(Z|A,B,C)\Comment{$(n_1,n_2)$}\nonumber\\  
            &= H(W|B,C,D)\Comment{$(n_3,n_4)$}\nonumber\\
            &= H(X|A,C,D)\Comment{$(n_5,n_6)$}\nonumber\\
            &= H(Y|W,X,Z)\Comment{$(n_4,n_7)$}\nonumber\\
            &= H(A|B,D,Y)\Comment{$n_{ 9}$}\nonumber\\
            &= H(D|A,W,Z)\Comment{$n_{10}$}\nonumber\\
            &= H(C|D,Y,Z)\Comment{$n_{11}$}\nonumber\\
            &= H(B|D,X,Z)\Comment{$n_{12}$}\nonumber\\
            &= H(C|B,X,Y)\Comment{$n_{13}$}\nonumber\\
            &= H(C|A,W,Y)\Comment{$n_{14}$}\nonumber\\
            &= H(B|A,W,X)\Comment{$n_{15}$}
\end{align*}
and since the vector spaces $A,B,C,D$ are associated with independent random variables, 
we have 
$$H(A) + H(B) + H(C) + H(D) = H(A,B,C,D)$$
so the T8 inequality in Theorem~\ref{thm:T8} reduces to
\begin{align*}
H(A) &\le  8H(Z) + 29H(Y) + 3H(X) + 8H(W) -6H(D) -17H(C) - 8H(B) - 17H(A).
\end{align*}
Now since
$H(A) {=} H(B) {=} H(C) {=} H(D) {=} k$ and 
$H(W) {=} H(X) {=} H(Y) {=} H(Z) \le n$, 
we have
\begin{align*}
k &\le  8n + 29n + 3n + 8n - 6k -17k - 8k - 17k\\
k/n &\le  48/49.
\end{align*}
So, the linear coding capacity over every characteristic except for 3 is at most $48/49 < 1.$ 

The T8 network has a scalar linear solution over characteristic 3 by using the following
edge functions 
(here we are using the notations $A,B,C,D,W,X,Y,Z$ to denote edge variables rather than vector spaces):
\begin{align*}
Z &=  A + B + C\\
W &=  B + C + D\\
X &=  A + C + D\\
Y &=  W + X + Z.
\end{align*}
and decoding functions:
\begin{align*}
n_{ 9}: A &= (2^{-1}\cdot Y) - B - D\\
n_{10}: D &= W - Z + A\\
n_{11}: C &= Z - (2^{-1}\cdot Y) + D\\
n_{12}: B &= Z - X + D\\
n_{13}: C &= X - (2^{-1}\cdot Y) + B\\
n_{14}: C &= W - (2^{-1}\cdot Y) + A\\
n_{15}: B &= W - X + A
\end{align*}
%
Thus the linear coding capacity for characteristic 3 is at least 1.

We know the coding capacity is at most 1 because every path from
source $A$ to node $n_9$ passes through the single edge $(n_7,n_8)$.
Since the coding capacity is at least as large as the linear coding
capacity for characteristic 3,
we conclude that the coding capacity is exactly equal to 1.
\end{proof}

\newpage
\section{A Linear Rank Inequality for Fields of Characteristic 3}

In the T8 matroid, $W+X+Y+Z=(3,3,3,3)$, which equals $(0,0,0,0)$
in characteristic 3.
We define the \textit{non-T8 matroid} to be the T8 matroid except that
we force the T8's characteristic 3 circuit $\{W,X,Y,Z\}$ to be a base in the non-T8 matroid.
Figure~\ref{fig:nonT8}
is a network
that we call the \textit{non-T8 network},
whose dependencies and independencies are consistent with the non-T8 matroid.
The non-T8 network was designed by the construction process described in \cite{nonshannon}.
Theorem~\ref{thm:nonT8} uses the non-T8 network as a guide to derive a linear rank
inequality valid for characteristic 3.
The new linear rank inequality can then be used to prove the non-T8 network
has linear capacity less than 1 if the field characteristic is 3.

\begin{figure}
\begin{center}
\psfrag{n1}{\Large $n_{1}$}
\psfrag{n2}{\Large $n_{2}$}
\psfrag{n3}{\Large $n_{3}$}
\psfrag{n4}{\Large $n_{4}$}
\psfrag{n5}{\Large $n_{5}$}
\psfrag{n6}{\Large $n_{6}$}
\psfrag{n7}{\Large $n_{7}$}
\psfrag{n8}{\Large $n_{8}$}
\psfrag{n9}{\Large $n_{9}$}
\psfrag{n10}{\Large $n_{10}$}
\psfrag{n11}{\Large $n_{11}$}
\psfrag{n12}{\Large $n_{12}$}
\psfrag{n13}{\Large $n_{13}$}
\psfrag{n14}{\Large $n_{14}$}
\psfrag{n15}{\Large $n_{15}$}
\includegraphics[width=13.5cm]{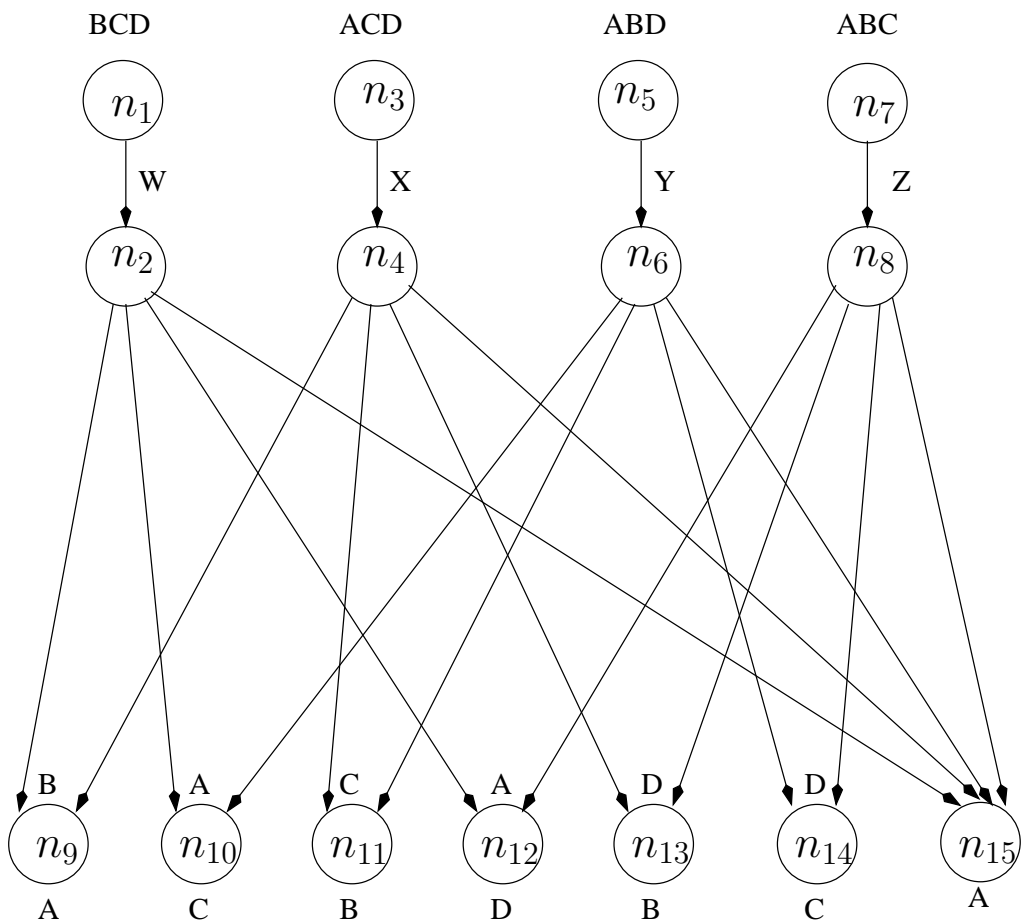}
\end{center}
\caption{ The Non-T8 Network
has source messages $A,B,C,$ and $D$ generated at hidden source nodes with 
certain hidden out-edges pointing to corresponding displayed nodes
$n_1$, $n_3$, $n_5$, $n_7$, and $n_9$--$n_{14}$
(which are labeled by incoming messages above such nodes).
The nodes $n_9$--$n_{15}$ each demand one message, as labeled below such nodes.
}
\label{fig:nonT8}
\end{figure}
\begin{thm}\label{thm:nonT8}
Let $A,B,C,D,W,X,Y$, and $Z$ be subspaces of a vector space $V$ whose scalar field is finite
and of characteristic 3.
Then the following is a linear rank inequality over $V$:
\begin{align*}
H(A) &\le  9H(Z) + 8H(Y) + 5H(X) + 6H(W) - 4H(D) - 12H(C) - 11H(B) -H(A)\\
&\ \ \  +19H(Z|A,B,C) + 17H(Y|A,B,D) + 13H(X|A,C,D) + 11H(W|B,C,D)\\
&\ \ \  +H(A|W,X,Y,Z) + H(A|B,W,X) + 7H(B|D,X,Z) + 4H(B|C,X,Y)\\
&\ \ \  +7H(C|D,Y,Z) + 5H(C|A,W,Y) + 4H(D|A,W,Z)\\
&\ \ \  + 29( H(A) + H(B) + H(C) + H(D) - H(A,B,C,D) ).
\end{align*}
\end{thm}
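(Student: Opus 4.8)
The plan is to mirror the argument of Theorem~\ref{thm:T8} almost line for line, with the non-T8 network of Figure~\ref{fig:nonT8} playing the role of the combinatorial guide. First I would invoke Lemma~\ref{lemma4} once for each local relation the network imposes. The four generative relations $Z=A+B+C$, $Y=A+B+D$, $X=A+C+D$, $W=B+C+D$ yield triples of linear maps summing to the identity on subspaces of $Z,Y,X,W$ of codimensions $H(Z|A,B,C)$, $H(Y|A,B,D)$, $H(X|A,C,D)$, $H(W|B,C,D)$ respectively; and each of the seven demands (recover $A$ from $W,X,Y,Z$; recover $A$ from $B,W,X$; $B$ from $D,X,Z$; $B$ from $C,X,Y$; $C$ from $D,Y,Z$; $C$ from $A,W,Y$; $D$ from $A,W,Z$) yields a tuple of maps summing to the identity on a subspace of the demanded message whose codimension is the matching conditional-entropy term. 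As in Theorem~\ref{thm:T8}, the messages $B$ and $C$ get decomposed several times, once for each demand that uses them.

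Next I would assemble the composite maps $f_A,f_B,f_C,f_D$ that track how $A,B,C,D$ propagate along the directed paths of the non-T8 network, the key one being the composite that models recovery of $A$ from $\{W,X,Y,Z\}$ --- the recovery that, in a genuine linear solution, would require dividing by $3$ (since $3A=X+Y+Z-2W$), which is exactly the feature the characteristic-$3$ hypothesis will exploit at the end. Using Lemma~\ref{lemma6} together with Lemma~\ref{lemma1}, exactly as in the passage from the raw maps to $\overline A$ in Theorem~\ref{thm:T8}, I would pass to subspaces $\overline A,\overline B,\widehat B,\overline C,\widetilde C,\overline D$ of the messages on which the relevant bundles of identities ($f_A$ composed with the appropriate map equals $I$, and the complementary relations vanish) hold exactly, recording each codimension bound $\Delta_{\overline A},\Delta_{\overline B},\dots$ as an explicit sum of conditional-entropy terms plus a copy of the four-variable Ingleton term $H(A)+H(B)+H(C)+H(D)-H(A,B,C,D)$.

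Then I would introduce the starred subspaces $\overline C^\ast$, $\widetilde C^\ast$, $\overline A^\ast$ (plus whatever $\widehat C$/$\widehat B$ analogues are needed), defined by intersecting appropriate images and preimages just as in Theorem~\ref{thm:T8}, and use Lemma~\ref{lem Inj} to show that the ``downward'' maps out of $A$ are injective on $\overline A^\ast$. The codimensions $\Codim(\overline C^\ast)$, $\Codim(\widetilde C^\ast)$, $\Codim(\overline A^\ast)$ are then bounded by the same codimension-chasing template (repeated use of Lemmas~\ref{lemma1} and~\ref{lemma3}, together with injectivity of the map $A\to Y$ on $\overline A$), yielding $\Delta_{\overline A^\ast}$ as a fixed linear combination of the $\Delta$'s and of $H(W),H(X),H(Y),H(Z),H(A),H(B),H(C),H(D)$. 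Finally, taking $t\in A$, I would impose a chain of membership assumptions analogous to (p2:ass1)--(p2:ass6), each valid off a subspace of $A$ of controlled codimension, and run the resulting algebra: combining the exact recovery identities carried by the starred subspaces and simplifying via the injectivity facts should collapse everything to a single scalar identity in $t$ which, once the characteristic-$3$ relation $3=0$ is invoked (playing the role opposite to the invocation of $3\neq 0$ at the end of the proof of Theorem~\ref{thm:T8}), forces $t=0$. Hence the intersection of all the assumption subspaces is $\{0\}$, so $H(A)=\Dim A$ is at most the sum of their codimensions, and collecting terms produces exactly the stated coefficients.

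The conceptual content is parallel to Theorem~\ref{thm:T8}; the real difficulty is twofold. First, one must choose the composites and, crucially, the order of the membership assumptions so that the terminal algebra becomes characteristic-$3$-sensitive rather than an unconditional identity, and one must verify that \emph{every} earlier manipulation is valid over an arbitrary finite field --- this, together with Theorem~\ref{thm:non-T8-non-char3}, is what makes the inequality genuinely characteristic-dependent. Second, the codimension bookkeeping must be carried through without slippage so that the final linear combination matches the coefficients $9,8,5,6,-4,-12,-11,-1$ on the rank terms, the coefficients $19,17,13,11,1,1,7,4,7,5,4$ on the conditional-entropy terms, and $29$ on the Ingleton term. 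I expect the bookkeeping to be the longer part, but arranging the assumption chain so that the characteristic-$3$ collapse actually occurs is the delicate part.
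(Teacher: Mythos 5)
Your outline does track the paper's strategy: Lemma~\ref{lemma4} applied to the four edge relations and seven demands of the non-T8 network, Lemmas~\ref{lemma6} and~\ref{lemma1} to pass to subspaces of the messages on which the recovery identities hold exactly (each with an Ingleton-type correction), Lemma~\ref{lem Inj} plus codimension chasing via Lemmas~\ref{lemma1} and~\ref{lemma3} for the starred subspaces, then a chain of membership assumptions on $t\in A$ whose codimensions are summed at the end. The gap is that the plan stops exactly where the proof's real content begins. The only point where characteristic $3$ enters is the terminal identity $t=3f_{10}f_{34}t$, and the paper gets it by proving two concrete equalities on the intersection of the assumption subspaces, namely $f_4f_{32}t=f_{10}f_{34}t$ and $f_7f_{33}t=f_{10}f_{34}t$, which in turn rest on deriving $f_{33}t=f_{28}f_{12}f_{34}t+f_{18}f_2f_{31}t$ and $f_{33}t=f_{21}g_{20}f_{32}t+f_{18}f_2f_{31}t$ (and the analogous pair for $f_{32}t$) and cancelling via the injectivity facts. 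This collapse is \emph{not} a line-for-line transcription of the T8 argument: it requires the partial inverses $g_{14}=(f_{14}\,|\,\overline{A}^\ast)^{-1}$ and $g_{20}=(f_{20}\,|\,\overline{B}^\ast)^{-1}$, which have no T8 counterpart, and a simultaneous two-condition assumption (the analogue of \eqref{p3:ass7}) without which $f_{15}g_{14}f_{31}t$ is not even defined. Your proposal neither constructs this chain nor anticipates these ingredients; it only asserts that the algebra ``should collapse,'' and you yourself flag arranging the collapse as the delicate part you have not done.

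Because of that, the proposal does not establish the characteristic-$3$ dependence (which is the theorem's point, given Theorem~\ref{thm:non-T8-non-char3}), nor can it certify the stated coefficients: the constants $9,8,5,6,-4,-12,-11,-1$, the conditional-entropy weights $19,17,13,11,1,1,7,4,7,5,4$, and the factor $29$ are determined by exactly which $\Delta$-terms enter which assumption and with what multiplicity, so they cannot be recovered without the explicit assumption list and its justifications. In short: same approach as the paper in outline, but the crux --- the specific starred subspaces, the inverse maps, the simultaneous assumption, and the two equality claims yielding $t=3f_{10}f_{34}t$ --- is missing, so the attempt as written does not prove the inequality.
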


\begin{proof}
By Lemma~\ref{lemma4} we get linear functions:
\begin{center}
\begin{tabular}{ccc}
     $f_1:W \rightarrow B$, & $f_2:W \rightarrow C$, & $f_3:W \rightarrow D$,  \\
     $f_4:X \rightarrow A$, & $f_5:X \rightarrow C$, & $f_6:X \rightarrow D$,  \\
     $f_7:Y \rightarrow A$, & $f_8:Y \rightarrow B$, & $f_9:Y \rightarrow D$,  \\
     $f_{10}:Z \rightarrow A$, & $f_{11}:Z \rightarrow B$, & $f_{12}:Z \rightarrow C$,  \\
     $f_{13}:A \rightarrow B$, & $f_{14}:A \rightarrow W$, & $f_{15}:A \rightarrow X$,  \\
     $f_{16}:C \rightarrow A$, & $f_{17}:C \rightarrow W$, & $f_{18}:C \rightarrow Y$,  \\
     $f_{19}:B \rightarrow C$, & $f_{20}:B \rightarrow X$, & $f_{21}:B \rightarrow Y$, \\
     $f_{22}:D \rightarrow W$, & $f_{23}:D \rightarrow A$, & $f_{24}:D \rightarrow Z$,  \\
     $f_{25}:B \rightarrow X$, & $f_{26}:B \rightarrow D$, & $f_{27}:B \rightarrow Z$,  \\
     $f_{28}:C \rightarrow Y$, & $f_{29}:C \rightarrow Z$, & $f_{30}:C \rightarrow D$,
\end{tabular}
\begin{tabular}{cccc}
     $f_{31}:A \rightarrow W$, & $f_{32}:A \rightarrow X$, & $f_{33}:A \rightarrow Y$, & $f_{34}:A \rightarrow Z$
\end{tabular}
\end{center}
such that
\begin{align}
    f_1 + f_2 + f_3 &=  I \mbox{ on a subspace of $W$ of codimension $H(W|B,C,D)$}\label{p3:eq:1}\\
    f_4 + f_5 + f_6 &=  I \mbox{ on a subspace of $X$ of codimension $H(X|A,C,D)$}\label{p3:eq:2}\\
    f_7 + f_8 + f_9 &=  I \mbox{ on a subspace of $Y$ of codimension $H(Y|A,B,D)$}\label{p3:eq:3}\\
    f_{10} + f_{11} + f_{12} &=  I \mbox{ on a subspace of $Z$ of codimension $H(Z|A,B,C)$}\label{p3:eq:4}\\
    f_{13} + f_{14} + f_{15} &=  I \mbox{ on a subspace of $A$ of codimension $H(A|B,W,X)$}\label{p3:eq:5}\\
    f_{16} + f_{17} + f_{18} &=  I \mbox{ on a subspace of $C$ of codimension $H(C|A,W,Y)$}\label{p3:eq:6}\\
    f_{19} + f_{20} + f_{21} &=  I \mbox{ on a subspace of $B$ of codimension $H(B|C,X,Y)$}\label{p3:eq:7}\\
    f_{22} + f_{23} + f_{24} &=  I \mbox{ on a subspace of $D$ of codimension $H(D|A,W,Z)$}\label{p3:eq:8}\\
    f_{25} + f_{26} + f_{27} &=  I \mbox{ on a subspace of $B$ of codimension $H(B|D,X,Z)$}\label{p3:eq:9}\\
    f_{28} + f_{29} + f_{30} &=  I \mbox{ on a subspace of $C$ of codimension $H(C|D,Y,Z)$}\label{p3:eq:10}\\
    f_{31} + f_{32} + f_{33} + f_{34} &=  I \mbox{ on a subspace of $A$ of codimension $H(A|W,X,Y,Z)$}.\label{p3:eq:11}
\end{align}
Now combining some functions we obtained from Lemma \ref{lemma4} gives four new functions:
\begin{align*}
    f_{4} f_{32} + f_{7} f_{33} + f_{10} f_{34}  &: A
\rightarrow A \\
    f_{1} f_{31} + f_{8} f_{33} + f_{11} f_{34}  &: A
\rightarrow B \\
    f_{2} f_{31} + f_{5} f_{32} + f_{12} f_{34}  &: A
\rightarrow C \\
    f_{3} f_{31} + f_{6} f_{32} + f_{9} f_{33}  &: A
\rightarrow D.
\end{align*}
Using (\ref{p3:eq:1})--(\ref{p3:eq:4}), (\ref{p3:eq:11}), Lemma~\ref{lemma1},
and Lemma~\ref{lemma3} we know the sum of these four functions is equal to $I$ on a
subspace of $A$ of codimension at most 
$H(W|B,C,D) + H(X|A,C,D) + H(Y|A,B,D) + H(Z|A,B,C) + H(A|W,X,Y,Z)$.

Now applying Lemma~\ref{lemma6} and Lemma~\ref{lemma1} to the functions
$f_{4}\Compose f_{32} + f_{7}\Compose f_{33} + f_{10}\Compose f_{34} - I$, 
$f_{1}\Compose f_{31} + f_{8}\Compose f_{33} + f_{11}\Compose f_{34}$, 
$f_{2}\Compose f_{31} + f_{5}\Compose f_{32} + f_{12}\Compose f_{34}$, 
and 
$f_{3}\Compose f_{31} + f_{6}\Compose f_{32} + f_{9}\Compose f_{33}$,
we get a subspace $\widehat{A}$ of $A$ of codimension at most
\begin{align*}
    \Delta_{\widehat{A}} &=  H(W|B,C,D) + H(X|A,C,D) + H(Y|A,B,D) + H(Z|A,B,C) + H(A|W,X,Y,Z)\\
    & + H(A) + H(B) + H(C) + H(D) - H(A,B,C,D)
\end{align*}
on which
\begin{align}
    f_{4}\Compose f_{32} + f_{7}\Compose f_{33} + f_{10}\Compose f_{34}     &=  I \label{p3:eq:Ahat1}\\
    f_{1}\Compose f_{31} + f_{8}\Compose f_{33} + f_{11}\Compose f_{34}     &=  0 \label{p3:eq:Ahat2}\\
    f_{2}\Compose f_{31} + f_{5}\Compose f_{32} + f_{12}\Compose f_{34}     &=  0 \label{p3:eq:Ahat3}\\
    f_{3}\Compose f_{31} + f_{6}\Compose f_{32} + f_{9}\Compose f_{33}      &=  0. \label{p3:eq:Ahat4}
\end{align}
Similarly, we get a subspace $\overline{A}$ of $A$ of codimension at most
\begin{align*}
    \Delta_{\overline{A}} &=  H(W|B,C,D) + H(X|A,C,D) + H(A|B,W,X)\\
    & + H(A) + H(B) + H(C) + H(D) - H(A,B,C,D)
\end{align*}
on which
\begin{align}
    f_4\Compose f_{15}                               &=  I \label{p3:eq:Abar1}\\
    f_{13} + f_1\Compose f_{14}                      &=  0 \label{p3:eq:Abar2}\\
    f_2\Compose f_{14} + f_5\Compose f_{15}             &=  0 \label{p3:eq:Abar3}\\
    f_3\Compose f_{14} + f_6\Compose f_{15}             &=  0. \label{p3:eq:Abar4}
\end{align}
We get a subspace $\overline{B}$ of $B$ of codimension at most
\begin{align*}
    \Delta_{\overline{B}} &=   H(X|A,C,D) + H(Y|A,B,D)+ H(B|C,X,Y)\\
    & + H(A) + H(B) + H(C) + H(D) - H(A,B,C,D)
\end{align*}
on which
\begin{align}
    f_4\Compose f_{20} + f_7\Compose f_{21}              &=  0 \label{p3:eq:Bbar1}\\
    f_8\Compose f_{21}                                &=  I \label{p3:eq:Bbar2}\\
    f_{19} + f_5\Compose f_{20}                       &=  0 \label{p3:eq:Bbar3}\\
    f_6\Compose f_{20} + f_9\Compose f_{21}              &=  0. \label{p3:eq:Bbar4}
\end{align}
We get a subspace $\widehat{B}$ of $B$ of codimension at most
\begin{align*}
    \Delta_{\widehat{B}} &=  H(X|A,C,D) + H(Z|A,B,C) + H(B|D,X,Z)\\
    & + H(A) + H(B) + H(C) + H(D) - H(A,B,C,D)
\end{align*}
on which
\begin{align}
    f_4\Compose f_{25} + f_{10}\Compose f_{27}                &=  0 \label{p3:eq:Bhat1}\\
    f_{11}\Compose f_{27}                                  &=  I \label{p3:eq:Bhat2}\\
    f_{5}\Compose f_{25} + f_{12}\Compose f_{27}              &=  0 \label{p3:eq:Bhat3}\\
    f_{6}\Compose f_{25} + f_{26}                          &=  0. \label{p3:eq:Bhat4}
\end{align}
We get a subspace $\overline{C}$ of $C$ of codimension at most
\begin{align*}
    \Delta_{\overline{C}} &=  H(W|B,C,D) + H(Y|A,B,D) + H(C|A,W,Y)\\
    & + H(A) + H(B) + H(C) + H(D) - H(A,B,C,D)
\end{align*}
on which
\begin{align}
    f_{16} + f_7\Compose f_{18}                      &=  0 \label{p3:eq:Cbar1}\\
    f_{1}\Compose f_{17} + f_{8}\Compose f_{18}         &=  0 \label{p3:eq:Cbar2}\\
    f_2\Compose f_{17}                               &=  I \label{p3:eq:Cbar3}\\
    f_3\Compose f_{17} + f_9\Compose f_{18}             &=  0. \label{p3:eq:Cbar4}
\end{align}
We get a subspace $\widehat{C}$ of $C$ of codimension at most
\begin{align*}
    \Delta_{\widehat{C}} &=  H(Y|A,B,D) + H(Z|A,B,C) + H(C|D,Y,Z)\\
    & + H(A) + H(B) + H(C) + H(D) - H(A,B,C,D)
\end{align*}
on which
\begin{align}
    f_7\Compose f_{28} + f_{10}\Compose f_{29}           &=  0 \label{p3:eq:Chat1}\\
    f_{8}\Compose f_{28} + f_{11}\Compose f_{29}         &=  0 \label{p3:eq:Chat2}\\
    f_{12}\Compose f_{29}                             &=  I \label{p3:eq:Chat3}\\
    f_9\Compose f_{28} + f_{30}                       &=  0. \label{p3:eq:Chat4}
\end{align}
We get a subspace $\overline{D}$ of $D$ of codimension at most
\begin{align*}
    \Delta_{\overline{D}} &=  H(W|B,C,D) + H(Z|A,B,C) + H(D|A,W,Z)\\
    & + H(A) + H(B) + H(C) + H(D) - H(A,B,C,D)
\end{align*}
on which
\begin{align}
    f_{23} + f_{10}\Compose f_{24}                   &=  0 \label{p3:eq:Dbar1}\\
    f_{1}\Compose f_{22} + f_{11}\Compose f_{24}        &=  0 \label{p3:eq:Dbar2}\\
    f_2\Compose f_{22} + f_{12}\Compose f_{24}          &=  0 \label{p3:eq:Dbar3}\\
    f_3\Compose f_{22}                               &=  I. \label{p3:eq:Dbar4}
\end{align}

Let $\widehat{B}^\ast = f_{11}(f_{27}\widehat{B} \cap f_{29}\widehat{C}) \subseteq \widehat{B}$.
Considering (\ref{p3:eq:Bhat2}) and
(\ref{p3:eq:Chat3}), we can apply Lemma~\ref{lem Inj} to show that
$f_{12}f_{27}$ is injective on $\widehat{B}^\ast$.
By (\ref{p3:eq:Bhat3}), we
know
\begin{align}
& f_{5}f_{25}\mbox{ is injective on $\widehat{B}^\ast$.}\label{p3:f5f25}
\end{align}

Let $\widehat{C}^\ast = f_{12}( f_{29}\widehat{C} \cap f_{27}\widehat{B}) \subseteq \widehat{C}$.
Considering again (\ref{p3:eq:Bhat2}) and
(\ref{p3:eq:Chat3}), we can apply Lemma~\ref{lem Inj} to show that
$f_{11}f_{29}$ is injective on $\widehat{C}^\ast$.
By (\ref{p3:eq:Chat2}), we know
\begin{align}
& f_{8}f_{28}\mbox{ is injective on $\widehat{C}^\ast$.}\label{p3:f8f28}
\end{align}

Let $\overline{A}^\ast = f_4(f_{15}\overline{A} \cap f_{25}\widehat{B}^\ast) \subseteq \overline{A}$.
Considering (\ref{p3:eq:Abar1}) and (\ref{p3:f5f25}),
we can apply Lemma~\ref{lem Inj} to show that $f_5f_{15}$ is injective on
$\overline{A}^\ast$.
By (\ref{p3:eq:Abar3}), we know $f_2f_{14}$ is injective
on $\overline{A}^\ast$ which implies
\begin{align}
&\mbox{$f_{14}$ is injective on $\overline{A}^\ast$.}\label{p3:f14}
\end{align}

Let $\overline{C}^\ast = f_2(f_{17}\overline{C} \cap f_{22}\overline{D}) \subseteq \overline{C}$.
Considering (\ref{p3:eq:Cbar3}) and
(\ref{p3:eq:Dbar4}),we can apply Lemma~\ref{lem Inj} to show that
$f_3f_{17}$ is injective on $\overline{C}^\ast$.
Then by (\ref{p3:eq:Cbar4}),
we know
\begin{align}
& \mbox{ $f_9f_{18}$ is injective on $\overline{C}^\ast$.}\label{p3:f9f18}
\end{align}

Let $\overline{B}^\ast = f_8(f_{21}\overline{B} \cap f_{18}\overline{C}^\ast) \subseteq \overline{B}$.
Considering (\ref{p3:eq:Bbar2}) and \eqref{p3:f9f18}, 
we can apply Lemma~\ref{lem  Inj} to show that
\begin{align}
& f_9f_{21}\mbox{ is injective on $\overline{B}^\ast$.}\label{p3:f9f21}
\end{align}
By (\ref{p3:eq:Bbar4}), we know
\begin{align}
&\mbox{$f_6f_{20}$ is injective on $\overline{B}^\ast$}\label{p3:f6f20}
\end{align}
which implies
\begin{align}
&\mbox{$f_{20}$ is injective on $\overline{B}^\ast$.}\label{p3:f20}
\end{align}

Let us define the functions
\begin{align*}
 g_{14} &= (f_{14} | \overline{A}^\ast)^{-1}\\
 g_{20} &= (f_{20} | \overline{B}^\ast)^{-1}
\end{align*}
where 
$f_{14} | \overline{A}^\ast$
and
$f_{20} | \overline{B}^\ast$
are the restrictions of the functions $f_{14}$ and $f_{20}$
to the sets 
$\overline{A}^\ast$
and
$\overline{B}^\ast$,
respectively.
Now, considering 
(\ref{p3:eq:Bbar2}), (\ref{p3:eq:Bhat2}), (\ref{p3:eq:Cbar3}), and (\ref{p3:eq:Chat3}) we have

\begin{align}
f_1    &= -f_8f_{18}f_2 \mbox{ on $f_{17}\overline{C}$} & \Comment{(\ref{p3:eq:Cbar2})} \label{p3:f1}\\
f_2    &= -f_5f_{15}g_{14} \mbox{ on $f_{14}\overline{A}^\ast$} & \Comment{(\ref{p3:eq:Abar3})} \label{p3:f2}\\
f_3    &= -f_6f_{15}g_{14} \mbox{ on $f_{14}\overline{A}^\ast$ and } f_3 = -f_9f_{18}f_2 \mbox{ on $f_{17}\overline{C}$ } 
           & \Comment{(\ref{p3:eq:Abar4}), (\ref{p3:eq:Cbar4})} \label{p3:f3}\\
f_4    &= -f_7f_{21}g_{20} \mbox{ on $f_{20}\overline{B}^\ast$} & \Comment{(\ref{p3:eq:Bbar1})} \label{p3:f4}\\
f_6    &= -f_9f_{21}g_{20} \mbox{ on $f_{20}\overline{B}^\ast$} & \Comment{(\ref{p3:eq:Bbar4})} \label{p3:f6}\\
f_7    &= -f_4f_{20}f_8 \mbox{ on $f_{21}\overline{B}$} & \Comment{(\ref{p3:eq:Bbar1})} \label{p3:f7}\\
f_9    &= -f_6f_{20}f_8 \mbox{ on $f_{21}\overline{B}$} & \Comment{(\ref{p3:eq:Bbar4})} \label{p3:f9}\\
f_{10} &= -f_4f_{25}f_{11} \mbox{ on $f_{27}\widehat{B}$ and }f_{10} = -f_7f_{28}f_{12} \mbox{ on $f_{29}\widehat{C}$} 
           & \Comment{(\ref{p3:eq:Bhat1}), (\ref{p3:eq:Chat1})} \label{p3:f10}\\
f_{11} &= -f_8f_{28}f_{12} \mbox{ on $f_{29}\widehat{C}$} & \Comment{(\ref{p3:eq:Chat2})} \label{p3:f11}\\
f_{12} &= -f_5f_{25}f_{11} \mbox{ on $f_{27}\widehat{B}$}. & \Comment{(\ref{p3:eq:Bhat3})} \label{p3:f12}
\end{align}

Next, we provide upper bounds for the codimensions of 
$\overline{A}^\ast$, 
$\widehat{B}^\ast$, 
$\overline{B}^\ast$, 
$\widehat{C}^\ast$,
and 
$\overline{C}^\ast$.
From
(\ref{p3:eq:Bhat2}), we know $f_{11}$ is injective on $f_{27}\widehat{B}$ and
$f_{27}$ is injective on $\widehat{B}$.
These facts will be used to arrive on
lines (\ref{p3:Bhatstar:1}) and (\ref{p3:Bhatstar:2}).
From
(\ref{p3:eq:Chat3}), we know $f_{29}$ is injective on $\widehat{C}$, which will
also be used to arrive on line (\ref{p3:Bhatstar:2}).
Lemma~\ref{lemma1}
will be used to arrive on (\ref{p3:Bhatstar:3}).
\begin{align}
\Codim_B\widehat{B}^\ast &=  H(B) - \Dim(\widehat{B}^\ast)\nonumber\\
&=  H(B) - \Dim(f_{11}(f_{27}\widehat{B} \cap f_{29}\widehat{C}))\nonumber\\
&=  H(B) - \Dim(f_{27}\widehat{B} \cap f_{29}\widehat{C})\label{p3:Bhatstar:1}\\
&=  H(B) - H(Z) + \Codim_Z(f_{27}\widehat{B} \cap f_{29}\widehat{C})\nonumber\\
&\le  H(B) - H(Z) + \Codim_Z(f_{27}\widehat{B}) + \Codim_Z( f_{29}\widehat{C})\label{p3:Bhatstar:3}\\
&=  H(B) - H(Z) + H(Z) - \Dim(f_{27}\widehat{B}) + H(Z) - \Dim( f_{29}\widehat{C})\nonumber\\
&=  H(B) + H(Z) - \Dim(\widehat{B}) - \Dim(\widehat{C})\label{p3:Bhatstar:2}\\
&\le  H(B) + H(Z) - H(B) + \Delta_{\widehat{B}} - H(C) + \Delta_{\widehat{C}}\\
&\le  H(Z) - H(C) + \Delta_{\widehat{B}} + \Delta_{\widehat{C}}\\
&\triangleq \Delta_{\widehat{B}^\ast}. \nonumber
\end{align}

From (\ref{p3:eq:Abar1}), we know $f_4$ is injective on $f_{15}\overline{A}$
and $f_{15}$ is injective on $\overline{A}$.
These facts will be used on lines
(\ref{p3:Abarstar:1}) and (\ref{p3:Abarstar:2}).
From (\ref{p3:f5f25}), we know
$f_{25}$ is injective on $\widehat{B}^\ast$, which will also be used to arrive
on line (\ref{p3:Abarstar:2}).
Lemma~\ref{lemma1} will be used to arrive on
(\ref{p3:Abarstar:3}).
\begin{align}
\Codim_A\overline{A}^\ast &=  H(A) - \Dim(\overline{A}^\ast)\nonumber\\
&=  H(A) - \Dim( f_4(f_{25}\widehat{B}^\ast \cap f_{15}\overline{A}) )\nonumber\\
&=  H(A) - \Dim( f_{25}\widehat{B}^\ast \cap f_{15}\overline{A} )\label{p3:Abarstar:1}\\
&=  H(A) - H(X) +  \Codim_X( f_{25}\widehat{B}^\ast \cap f_{15}\overline{A} )\nonumber\\
&\le  H(A) - H(X) +  \Codim_X( f_{25}\widehat{B}^\ast) + \Codim_X( f_{15}\overline{A} )\label{p3:Abarstar:3}\\
&=  H(A) +  H(X) - \Dim( f_{25}\widehat{B}^\ast) - \Dim( f_{15}\overline{A} )\nonumber\\
&=  H(A) +  H(X) - \Dim( \widehat{B}^\ast) - \Dim( \overline{A} )\label{p3:Abarstar:2}\\
&\le  H(A) +  H(X) - H(B) + \Delta_{\widehat{B}^\ast} - H(A) + \Delta_{\overline{A}} \nonumber\\
&=  H(X) - H(B) + H(Z) - H(C) + \Delta_{\widehat{B}} + \Delta_{\widehat{C}} + \Delta_{\overline{A}} \nonumber\\
&\triangleq \Delta_{\overline{A}^\ast}. \nonumber
\end{align}

From (\ref{p3:eq:Cbar3}), we know $f_2$ is injective on $f_{17}\overline{C}$
and $f_{17}$ is injective on $\overline{C}$.
These facts will be used to arrive
on lines (\ref{p3:Cbarstar:1}) and (\ref{p3:Cbarstar:2}).
From
(\ref{p3:eq:Dbar4}), we know $f_{22}$ is injective on $\overline{D}$, which
will also be used on line (\ref{p3:Cbarstar:2}).
Lemma~\ref{lemma1} will be
used to arrive on (\ref{p3:Cbarstar:3}).
\begin{align}
\Codim_C\overline{C}^\ast &=  H(C) - \Dim(\overline{C}^\ast)\nonumber\\
&=  H(C) - \Dim( f_2(f_{17}\overline{C} \cap f_{22}\overline{D}) )\nonumber\\
&=  H(C) - \Dim( f_{17}\overline{C} \cap f_{22}\overline{D} )\label{p3:Cbarstar:1}\\
&=  H(C) - H(W) + \Codim_W( f_{17}\overline{C} \cap f_{22}\overline{D} )\nonumber\\
&\le  H(C) - H(W) + \Codim_W( f_{17}\overline{C}) + \Codim_W( f_{22}\overline{D} )\label{p3:Cbarstar:3}\\
&=  H(C) - H(W) + H(W) - \Dim( f_{17}\overline{C}) + H(W) - \Dim( f_{22}\overline{D} )\nonumber\\
&=  H(C) + H(W) - \Dim( \overline{C}) - \Dim( \overline{D} )\label{p3:Cbarstar:2}\\
&\le  H(C) + H(W) - H(C) + \Delta_{\overline{C}} - H(D) + \Delta_{\overline{D}} \nonumber\\
&=  H(W) - H(D) + \Delta_{\overline{C}} + \Delta_{\overline{D}} \nonumber\\
&\triangleq \Delta_{\overline{C}^\ast}. \nonumber
\end{align}

From (\ref{p3:eq:Bbar2}), we know $f_8$ is injective on $f_{21}\overline{B}$
and $f_{21}$ is injective on $\overline{B}$.
These facts will be used to arrive
on lines (\ref{p3:Bbarstar:1}) and (\ref{p3:Bbarstar:2}).
From
(\ref{p3:f9f18}), we know $f_{18}$ is injective on $\overline{C}^\ast$, which
will also be used on line (\ref{p3:Bbarstar:2}).
Lemma~\ref{lemma1} will be
used to arrive on (\ref{p3:Bbarstar:3}).
\begin{align}
\Codim_B\overline{B}^\ast &=  H(B) - \Dim(\overline{B}^\ast)\nonumber\\
&=  H(B) - \Dim( f_8(f_{21}\overline{B} \cap f_{18}\overline{C}^\ast) )\nonumber\\
&=  H(B) - \Dim( f_{21}\overline{B} \cap f_{18}\overline{C}^\ast )\label{p3:Bbarstar:1}\\
&=  H(B) - H(Y) + \Codim_Y( f_{21}\overline{B} \cap f_{18}\overline{C}^\ast )\nonumber\\
&\le  H(B) - H(Y) + \Codim_Y( f_{21}\overline{B}) + \Codim_Y( f_{18}\overline{C}^\ast )\label{p3:Bbarstar:3}\\
&=  H(B) - H(Y) + H(Y) - \Dim( f_{21}\overline{B}) + H(Y) - \Dim( f_{18}\overline{C}^\ast )\nonumber\\
&=  H(B) + H(Y) - \Dim( \overline{B}) - \Dim( \overline{C}^\ast )\label{p3:Bbarstar:2}\\
&\le  H(B) + H(Y) - H(B) + \Delta_{\overline{B}} - H(C) + \Delta_{\overline{C}^\ast} \nonumber\\
&=  H(Y) - H(C) + \Delta_{\overline{B}} + \Delta_{\overline{C}^\ast} \nonumber\\
&=  H(Y) - H(C) + H(W) - H(D) + \Delta_{\overline{C}} + \Delta_{\overline{D}} + \Delta_{\overline{B}} \nonumber\\
&\triangleq \Delta_{\overline{B}^\ast}. \nonumber
\end{align}

From (\ref{p3:eq:Chat3}), we know $f_{12}$ is injective on $f_{29}\widehat{C}$
and $f_{29}$ is injective on $\widehat{C}$.
These facts will be used to arrive
on lines (\ref{p3:Chatstar:1}) and (\ref{p3:Chatstar:2}).
From
(\ref{p3:eq:Bhat2}), we know $f_{27}$ is injective on $\widehat{B}$, which will
also be used on line (\ref{p3:Chatstar:2}).
Lemma~\ref{lemma1} will be used
to arrive on (\ref{p3:Chatstar:3}).
\begin{align}
\Codim_C\widehat{C}^\ast &=  H(C) - \Dim(\widehat{C}^\ast)\nonumber\\
&=  H(C) - \Dim( f_{12}(f_{27}\widehat{B} \cap f_{29}\widehat{C}) )\nonumber\\
&=  H(C) - \Dim( f_{27}\widehat{B} \cap f_{29}\widehat{C} )\label{p3:Chatstar:1}\\
&=  H(C) - H(Z) + \Codim_Z( f_{27}\widehat{B} \cap f_{29}\widehat{C} )\nonumber\\
&\le  H(C) - H(Z) + \Codim_Z( f_{27}\widehat{B}) + \Codim_Z( f_{29}\widehat{C} )\label{p3:Chatstar:3}\\
&=  H(C) - H(Z) + H(Z) - \Dim( f_{27}\widehat{B}) + H(Z) - \Dim( f_{29}\widehat{C} )\nonumber\\
&=  H(C) + H(Z) - \Dim( \widehat{B}) - \Dim( \widehat{C} )\label{p3:Chatstar:2}\\
&\le  H(C) + H(Z) - H(B) + \Delta_{\widehat{B}} - H(C) + \Delta_{\widehat{C}}\nonumber\\
&=  H(Z) - H(B) + \Delta_{\widehat{B}} + \Delta_{\widehat{C}}\nonumber\\
&\triangleq \Delta_{\widehat{C}^\ast}. \nonumber
\end{align}

Let $t \in A$.
Now, we will assume $t$ satisfies conditions (\ref{p3:ass1})--(\ref{p3:ass7}).
The justifications can be found below.
\begin{align}
& t \in \widehat{A}\mbox{   ; this is true on a subspace of $A$ of codimension at most } \Delta_{\widehat{A}}\label{p3:ass1}\\
& f_{32}t \in f_{20}\overline{B}^\ast \cap f_{25}\widehat{B}^\ast \mbox{   ; this is true on a subspace of $A$ of codimension at most }\nonumber\\
&\qquad\qquad\qquad\qquad\qquad\qquad\qquad\qquad 2H(X) - 2H(B) + \Delta_{\overline{B}^\ast} + \Delta_{\widehat{B}^\ast}\label{p3:ass3}\\
& f_{33}t \in f_{28}\widehat{C}^\ast \cap f_{21}\overline{B}^\ast \mbox{    ; this is true on a subspace of $A$ of codimension at most }\nonumber\\
&\qquad\qquad\qquad\qquad\qquad\qquad\qquad\qquad 2H(Y) - H(B) - H(C) + \Delta_{\overline{B}^\ast} + \Delta_{\widehat{C}^\ast}\label{p3:ass4}\\
& f_{34}t \in f_{29}\widehat{C}^\ast \cap f_{27}\widehat{B}^\ast \mbox{   ; this is true on a subspace of $A$ of codimension at most } \nonumber\\
&\qquad\qquad\qquad\qquad\qquad\qquad\qquad\qquad 2H(Z) - H(C) - H(B) + \Delta_{\widehat{C}^\ast} + \Delta_{\widehat{B}^\ast}\label{p3:ass5}\\
& f_{18}f_2f_{31}t \in f_{21}\overline{B}^\ast \cap f_{28}\widehat{C}^\ast \mbox{   ; this is true on a subspace of $A$ of codimension at most } \nonumber\\
&\qquad\qquad\qquad\qquad\qquad\qquad\qquad\qquad 2H(Y) - H(B) - H(C) + \Delta_{\overline{B}^\ast} + \Delta_{\widehat{C}^\ast}\label{p3:ass6}
\end{align}
Now, we need to make two assumptions on $t$ simultaneously.
\begin{align}
& f_{31}t \in f_{17}\overline{C} \cap f_{14}\overline{A}^\ast 
  \mbox{ and }f_{15}g_{14}f_{31}t \in f_{20}\overline{B}^\ast \cap f_{25}\widehat{B}^\ast ; \nonumber\\
&  \mbox{this is true on a subspace of $A$ of codimension at most}\nonumber\\
&\qquad\qquad 2H(X) - 2H(B) + 2H(W) - H(C) - H(A) + \Delta_{\overline{C}} + \Delta_{\overline{A}^\ast} + \Delta_{\overline{B}^\ast} + \Delta_{\widehat{B}^\ast}\label{p3:ass7}
\end{align}

To justify (\ref{p3:ass3}), first we know $f_{20}$ is injective on
$\overline{B}^\ast$ by (\ref{p3:f6f20}).
Then by Lemma~\ref{lemma3}, we
know $f_{32}t \in f_{20}\overline{B}^\ast$ on a subspace of $A$ of codimension
at most $H(X) - H(B) + \Codim_B(\overline{B}^\ast) \le H(X) - H(B) + \Delta_{\overline{B}^\ast}$.
By (\ref{p3:f5f25}), we also know $f_{25}$ is
injective on $\widehat{B}^\ast$.
Then by Lemma~\ref{lemma3}, we know
$f_{32}t \in f_{25}\widehat{B}^\ast$ on a subspace of $A$ of codimension at
most $H(X) - H(B) + \Codim_B(\widehat{B}^\ast) \le H(X) - H(B) + \Delta_{\widehat{B}^\ast}$.
Then using Lemma~\ref{lemma1}, we have $f_{32}t \in f_{20}\overline{B}^\ast \cap f_{25}\widehat{B}^\ast$ on a subspace of $A$
of codimension at most $2H(X) - 2H(B) \Delta_{\overline{B}^\ast} + \Delta_{\widehat{B}^\ast}$.
Conditions (\ref{p3:ass4})--(\ref{p3:ass6}) can be justified similarly.

To justify (\ref{p3:ass7}), first we know $f_{17}$ is injective on
$\overline{C}$ by (\ref{p3:eq:Cbar3}).
Then by Lemma~\ref{lemma3}, we know
$f_{31}t \in f_{17}\overline{C}$ on a subspace of $A$ of codimension at most
$H(W) - H(C) + \Codim_C(\overline{C}) \le H(W) - H(C) + \Delta_{\overline{C}}$.
By (\ref{p3:f14}), we also know $f_{14}$ is injective
on $\overline{A}^\ast$.
Then by Lemma~\ref{lemma3}, we know $f_{31}t \in f_{14}\overline{A}^\ast$ on a subspace of $A$ of codimension at most $H(W) -
H(A) + \Codim_A(\overline{A}^\ast) \le H(W) - H(A) + \Delta_{\overline{A}^\ast}$.
Then using Lemma~\ref{lemma1}, we have
\begin{align*}
&\mbox{$f_{31}t \in f_{17}\overline{C} \cap f_{14}\overline{A}^\ast$}
\end{align*}
on a subspace, $S$, of $A$ of codimension at most $2H(W) - H(C) - H(A) + \Delta_{\overline{C}} + \Delta_{\overline{A}^\ast}$.
Since $f_{14}$ is
injective on $\overline{A}^\ast$, the function $f_{15}g_{14}f_{31}$ is
defined on $S$.
Using the same technique as before we can show that
\begin{align*}
f_{15}g_{14}f_{31}t \in f_{20}\overline{B}^\ast \cap f_{25}\widehat{B}^\ast
\end{align*}
on a subspace, $\overline{S}$, of codimension with respect to $S$ at most
$2H(X) - 2H(B) + \Delta_{\overline{B}^\ast} + \Delta_{\widehat{B}^\ast}$.
Thus
both conditions are true on $\overline{S}$, which has codimension with respect
to $A$ at most 
$\Codim_{S}\overline{S} + \Codim_A{S} \le 2H(X) - 2H(B) + 2H(W) - H(C) - H(A) 
  + \Delta_{\overline{C}} + \Delta_{\overline{A}^\ast} + \Delta_{\overline{B}^\ast} + \Delta_{\widehat{B}^\ast}$.

Our final goal is to show that $t = 3x$ for some $x$ so that we may conclude
that $t = 0$ if the characteristic is 3.
We will accomplish this by using
(\ref{p3:eq:Ahat1}) and by proving that $f_4f_{32}t = f_7f_{33}t =
f_{10}f_{34}t$.
\begin{claim}\label{p3:claim1}
$f_4f_{32}t = f_{10}f_{34}t$
\end{claim}

\begin{proof}
First we must show that $f_{28}f_{12}f_{34}t = f_{21}g_{20}f_{32}t$.
By (\ref{p3:eq:Ahat2}), we know
\begin{align*}
f_8f_{33}t &=  -f_{11}f_{34}t - f_1f_{31}t.
\end{align*}
Then by using (\ref{p3:f11}) and condition (\ref{p3:ass5}), we have
\begin{align*}
f_8f_{33}t &=  f_8f_{28}f_{12}f_{34}t - f_1f_{31}t.
\end{align*}
Now, by using (\ref{p3:f1}) and condition (\ref{p3:ass7}), we have
\begin{align*}
f_8f_{33}t &=  f_8f_{28}f_{12}f_{34}t + f_8f_{18}f_2f_{31}t.
\end{align*}
By (\ref{p3:f8f28}), we know $f_8$ is injective on $f_{28}\widehat{C}^\ast$.
By condition (\ref{p3:ass4}), we know $f_{33}t \in f_{28}\widehat{C}^\ast$.
By condition (\ref{p3:ass6}), we know $f_{18}f_2f_{31}t \in f_{28}\widehat{C}^\ast$.
By condition (\ref{p3:ass5}), we know $f_{34}t \in f_{29}\widehat{C}^\ast$.
Using (\ref{p3:eq:Chat3}), we know $f_{12}f_{34}t \in \widehat{C}^\ast$.
Thus, we have
\begin{align}
f_{33}t &=  f_{28}f_{12}f_{34}t + f_{18}f_2f_{31}t. \label{p3:claim1:1}
\end{align}
By (\ref{p3:eq:Ahat4}), we have
\begin{align*}
f_9f_{33}t &=  -f_6f_{32}t - f_3f_{31}t.
\end{align*}
Then by using (\ref{p3:f6}) and condition (\ref{p3:ass3}), we have
\begin{align*}
f_9f_{33}t &=  f_9f_{21}g_{20}f_{32}t - f_3f_{31}t.
\end{align*}
Now, by using (\ref{p3:f3}) and condition (\ref{p3:ass7}), we have
\begin{align*}
f_9f_{33}t &=  f_9f_{21}g_{20}f_{32}t + f_9f_{18}f_2f_{31}t.
\end{align*}
By (\ref{p3:f9f21}), we know $f_9$ is injective on $f_{21}\overline{B}^\ast$.
By condition (\ref{p3:ass4}), we know $f_{33}t \in f_{21}\overline{B}^\ast$.
By condition (\ref{p3:ass3}), we know $f_{32}t \in f_{20}\overline{B}^\ast$ so $f_{21}g_{20}f_{32}t \in f_{21}\overline{B}^\ast$.
By condition (\ref{p3:ass6}), we know $f_{18}f_2f_{31}t \in f_{21}\overline{B}^\ast$.
Thus, we have
\begin{align}
f_{33}t &=  f_{21}g_{20}f_{32}t + f_{18}f_2f_{31}t. \label{p3:claim1:2}
\end{align}
Now, setting (\ref{p3:claim1:1}) and (\ref{p3:claim1:2}) equal to each other, we have
\begin{align}
f_{21}g_{20}f_{32}t &=  f_{28}f_{12}f_{34}t. \label{p3:claim1:3}
\end{align}
By (\ref{p3:f4}) and condition (\ref{p3:ass3}), we know
\begin{align*}
f_4f_{32}t &=  -f_7f_{21}g_{20}f_{32}t.
\end{align*}
Using (\ref{p3:claim1:3}), we have
\begin{align*}
f_4f_{32}t &=  -f_7f_{28}f_{12}f_{34}t.
\end{align*}
Then using (\ref{p3:f10}) and condition (\ref{p3:ass5}), we know
\begin{align*}
f_4f_{32}t &=  f_{10}f_{34}t.
\end{align*}
\end{proof}

\begin{claim}\label{p3:claim2}
$f_7f_{33}t = f_{10}f_{34}t$.
\end{claim}

\begin{proof}
First we must show that $f_{25}f_{11}f_{34}t = f_{20}f_8f_{33}t$.
By (\ref{p3:eq:Ahat3}), we know
\begin{align*}
f_5f_{32}t &=  -f_{12}f_{34}t - f_2f_{31}t.
\end{align*}
Then by using (\ref{p3:f12}) and condition (\ref{p3:ass5}), we have
\begin{align*}
f_5f_{32}t &=  f_5f_{25}f_{11}f_{34}t - f_2f_{31}t.
\end{align*}
Now, by using (\ref{p3:f2}) and condition (\ref{p3:ass7}), we have
\begin{align*}
f_5f_{32}t &=  f_5f_{25}f_{11}f_{34}t + f_5f_{15}g_{14}f_{31}t.
\end{align*}
By (\ref{p3:f5f25}), we know $f_5$ is injective on $f_{25}\widehat{B}^\ast$.
By condition (\ref{p3:ass3}), we know $f_{32}t \in f_{25}\widehat{B}^\ast$.
By condition (\ref{p3:ass5}), we know $f_{34}t \in f_{27}\widehat{B}^\ast$.
Now, using (\ref{p3:eq:Bhat2}), we know $f_{11}f_{34}t \in \widehat{B}^\ast$.
By condition (\ref{p3:ass7}), we know $f_{15}g_{14}f_{31}t \in f_{25}\widehat{B}^\ast$.
Thus, we have
\begin{align}
f_{32}t &=  f_{25}f_{11}f_{34}t + f_{15}g_{14}f_{31}t. \label{p3:claim2:1}
\end{align}
By (\ref{p3:eq:Ahat4}), we have
\begin{align*}
f_6f_{32}t &=  -f_9f_{33}t - f_3f_{31}t.
\end{align*}
Then using (\ref{p3:f9}) and condition (\ref{p3:ass4}), we have
\begin{align*}
f_6f_{32}t &=  f_6f_{20}f_{8}f_{33}t - f_3f_{31}t.
\end{align*}
Now, by using (\ref{p3:f3}) and condition (\ref{p3:ass7}), we have
\begin{align*}
f_6f_{32}t &=  f_6f_{20}f_{8}f_{33}t + f_6f_{15}g_{14}f_{31}t.
\end{align*}
By (\ref{p3:f6f20}), we know that $f_{6}$ is injective on $f_{20}\overline{B}^\ast$.
By condition (\ref{p3:ass3}), we know $f_{32}t \in f_{20}\overline{B}^\ast$.
By condition (\ref{p3:ass4}), we know $f_{33}t \in f_{21}\overline{B}^\ast$.
Now, using (\ref{p3:eq:Bbar2}), we know $f_8f_{33}t \in \overline{B}^\ast$.
By condition (\ref{p3:ass7}), we know $f_{15}g_{14}f_{31}t \in f_{20}\overline{B}^\ast$.
Thus, we have
\begin{align}
f_{32}t &=  f_{20}f_{8}f_{33}t + f_{15}g_{14}f_{31}t. \label{p3:claim2:2}
\end{align}
Now, setting (\ref{p3:claim2:1}) and (\ref{p3:claim2:2}) equal to each other, we have
\begin{align}
f_{25}f_{11}f_{34}t &=  f_{20}f_{8}f_{33}t. \label{p3:claim2:3}
\end{align}
By (\ref{p3:f7}) and condition (\ref{p3:ass4}), we know
\begin{align*}
f_7f_{33}t &=  -f_4f_{20}f_{8}f_{33}t.
\end{align*}
Using (\ref{p3:claim2:3}), we have
\begin{align*}
f_7f_{33}t &=  -f_4f_{25}f_{11}f_{34}t.
\end{align*}
Then using (\ref{p3:f10}) and condition (\ref{p3:ass5}), we know
\begin{align*}
f_7f_{33}t &=  f_{10}f_{34}t.
\end{align*}

\end{proof}

Now, by \eqref{p3:eq:Ahat1} and the two claims,
we have
\begin{align*}
t &=  f_4f_{32}t + f_7f_{33}t + f_{10}f_{34}t\\
  &=  f_{10}f_{34}t + f_{10}f_{34}t + f_{10}f_{34}t\\
  &=  3f_{10}f_{34}t.
\end{align*}
Thus if the field has characteristic 3, then
\begin{align}
t &=  0.
\end{align}
No nonzero $t$ can satisfy all of the conditions (\ref{p3:ass1})--(\ref{p3:ass7}), 
so we must have
\begin{align*}
H(A) &\le  \Delta_{\widehat{A}} + 2H(W) - H(C) - H(A) + \Delta_{\overline{C}} + \Delta_{\overline{A}^\ast} \\
&\ \ \  + 2H(X) - 2H(B) + \Delta_{\overline{B}^\ast} + \Delta_{\widehat{B}^\ast}\\
&\ \ \  + 2H(Y) - H(B) - H(C) + \Delta_{\overline{B}^\ast} + \Delta_{\widehat{C}^\ast}\\
&\ \ \  + 2H(Z) - H(C) - H(B) + \Delta_{\widehat{C}^\ast} + \Delta_{\widehat{B}^\ast}\\
&\ \ \  + 2H(Y) - H(B) - H(C) + \Delta_{\overline{B}^\ast} + \Delta_{\widehat{C}^\ast}\\
&\ \ \  + 2H(X) - 2H(B) + \Delta_{\overline{B}^\ast} + \Delta_{\widehat{B}^\ast}\\
&=  2H(Z) + 4H(Y) + 4H(X) + 2H(W) - 4H(C) - 7H(B) - H(A) \\
&\ \ \  + \Delta_{\overline{A}^\ast} + 4\Delta_{\overline{B}^\ast} + 3\Delta_{\widehat{B}^\ast} + 3\Delta_{\widehat{C}^\ast} + \Delta_{\widehat{A}}+ \Delta_{\overline{C}}\\
&=  2H(Z) + 4H(Y) + 4H(X) + 2H(W) - 4H(C) - 7H(B) - H(A) \\
&\ \ \  + H(X) - H(B) + H(Z) - H(C) + \Delta_{\widehat{B}} + \Delta_{\widehat{C}} + \Delta_{\overline{A}}\\
&\ \ \  + 4( H(Y) - H(C) + H(W) - H(D) + \Delta_{\overline{C}} + \Delta_{\overline{D}} + \Delta_{\overline{B}} )\\
&\ \ \  + 3( H(Z) - H(C) + \Delta_{\widehat{B}} + \Delta_{\widehat{C}} )\\
&\ \ \  + 3( H(Z) - H(B) + \Delta_{\widehat{B}} + \Delta_{\widehat{C}} )\\
&\ \ \  + \Delta_{\widehat{A}}+ \Delta_{\overline{C}}\\
&=  9H(Z) + 8H(Y) + 5H(X) + 6H(W) - 4H(D) - 12H(C) - 11H(B) -H(A)\\
&\ \ \  + \Delta_{\widehat{A}} + \Delta_{\overline{A}} + 7\Delta_{\widehat{B}} + 4\Delta_{\overline{B}} + 7\Delta_{\widehat{C}} + 5\Delta_{\overline{C}} + 4\Delta_{\overline{D}} \\
&=  9H(Z) + 8H(Y) + 5H(X) + 6H(W) - 4H(D) - 12H(C) - 11H(B) -H(A)\\
&\ \ \  + H(W|B,C,D) + H(X|A,C,D) + H(Y|A,B,D) + H(Z|A,B,C) \\
&\ \ \  + H(A|W,X,Y,Z)+ H(W|B,C,D) + H(X|A,C,D) + H(A|B,W,X) \\
&\ \ \  + 7( H(X|A,C,D) + H(Z|A,B,C) + H(B|D,X,Z) )\\
&\ \ \  + 4( H(X|A,C,D) + H(Y|A,B,D)+ H(B|C,X,Y) )\\
&\ \ \  + 7( H(Y|A,B,D) + H(Z|A,B,C) + H(C|D,Y,Z) )\\
&\ \ \  + 5( H(W|B,C,D) + H(Y|A,B,D) + H(C|A,W,Y) )\\
&\ \ \  + 4( H(W|B,C,D) + H(Z|A,B,C) + H(D|A,W,Z) )\\
&\ \ \  + 29( H(A) + H(B) + H(C) + H(D) - H(A,B,C,D) )\\
&=  9H(Z) + 8H(Y) + 5H(X) + 6H(W) - 4H(D) - 12H(C) - 11H(B) -H(A)\\
&\ \ \  +19H(Z|A,B,C) + 17H(Y|A,B,D) + 13H(X|A,C,D) + 11H(W|B,C,D)\\
&\ \ \  +H(A|W,X,Y,Z) + H(A|B,W,X) + 7H(B|D,X,Z) + 4H(B|C,X,Y)\\
&\ \ \  +7H(C|D,Y,Z) + 5H(C|A,W,Y) + 4H(D|A,W,Z)\\
&\ \ \  + 29( H(A) + H(B) + H(C) + H(D) - H(A,B,C,D) ).
\end{align*}
\end{proof}

The next theorem demonstates that the inequality in Theorem~\ref{thm:nonT8}
does not in general hold for vector spaces with finite fields of characteristic other than $3$.

\begin{thm}\label{thm:non-T8-non-char3}
For each prime number $p\ne 3$
there exists a vector space $V$ with a finite scalar field of characteristic $p$ 
such that the non-T8 inequality in Theorem~\ref{thm:nonT8} 
is not a linear rank inequality over $V$.
\end{thm}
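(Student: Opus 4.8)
The plan is to mirror, almost verbatim, the falsification argument in the proof of Theorem~\ref{thm:T8-char3}. Fix a prime $p\ne 3$, let $F$ be a finite field of characteristic $p$ (for concreteness $F=GF(p)$ suffices), and let $V=F^4$. I would assign the eight subspaces exactly as in Theorem~\ref{thm:T8-char3}: take $A,B,C,D$ to be the coordinate lines $\langle e_1\rangle,\langle e_2\rangle,\langle e_3\rangle,\langle e_4\rangle$, and $W,X,Y,Z$ to be the lines spanned by $e_2+e_3+e_4$, $e_1+e_3+e_4$, $e_1+e_2+e_4$, $e_1+e_2+e_3$, respectively. Each of the eight subspaces is one-dimensional, and since $A,B,C,D$ span $V$ we have $H(A)=\cdots=H(Z)=1$ and $H(A,B,C,D)=4$, so the term $H(A)+H(B)+H(C)+H(D)-H(A,B,C,D)$ vanishes.

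Next I would check that every conditional-rank term on the right-hand side of the inequality in Theorem~\ref{thm:nonT8} is zero for this assignment. Ten of the eleven such terms are witnessed by an integer linear identity among $e_1,\dots,e_4$ valid over every field: for instance $Z=A+B+C$, $Y=A+B+D$, $X=A+C+D$, $W=B+C+D$, $A=B-W+X$, $B=D-X+Z$, $B=C-X+Y$, $C=D-Y+Z$, $C=A+W-Y$, and $D=A+W-Z$, which give $H(Z|A,B,C)=H(Y|A,B,D)=H(X|A,C,D)=H(W|B,C,D)=H(A|B,W,X)=H(B|D,X,Z)=H(B|C,X,Y)=H(C|D,Y,Z)=H(C|A,W,Y)=H(D|A,W,Z)=0$. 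The one remaining term, $H(A|W,X,Y,Z)$, is where the hypothesis $p\ne 3$ enters: the matrix whose columns are $W,X,Y,Z$ equals $J-I$, the all-ones $4\times 4$ matrix minus the identity, whose determinant is $-3$ (its eigenvalues are $3$ once and $-1$ three times). Hence when $\mathrm{char}\,F\ne 3$ the vectors $W,X,Y,Z$ form a basis of $V$, so $A\in\langle W,X,Y,Z\rangle$ and $H(A|W,X,Y,Z)=0$. (In characteristic $3$ this determinant vanishes and $A\notin\langle W,X,Y,Z\rangle$, which is exactly why the inequality of Theorem~\ref{thm:nonT8} is not contradicted there; this is the precise analogue of the ``characteristic $3$ assumption is used'' remark accompanying Theorem~\ref{thm:T8-char3}.)

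Finally, substituting these values into the inequality of Theorem~\ref{thm:nonT8}, the entire right-hand side collapses to $9H(Z)+8H(Y)+5H(X)+6H(W)-4H(D)-12H(C)-11H(B)-H(A)=9+8+5+6-4-12-11-1=0$, while the left-hand side is $H(A)=1$. Since $1\le 0$ is false, the claimed inequality fails over $V$, proving the theorem.

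There is essentially no hard step: the whole argument is routine bookkeeping once the subspace assignment is written down. The only point requiring a moment's thought is recognizing that $H(A|W,X,Y,Z)$ is the unique obstruction and that it vanishes precisely when $\det(J-I)=-3\ne 0$, i.e. in every characteristic other than $3$; everything else is the same universal configuration used for Theorem~\ref{thm:T8-char3}.
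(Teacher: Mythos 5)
Your proposal is correct and is essentially the paper's own proof: the same one-dimensional subspace assignment in $GF(p)^4$, the same vanishing of all conditional-rank terms and of the independence term, and the same collapse of the inequality to the contradiction $1\le 0$. The only cosmetic difference is that you certify $H(A|W,X,Y,Z)=0$ via $\det(J-I)=-3\ne 0$ in characteristic $p\ne 3$, whereas the paper writes the explicit identity $(1,0,0,0)=3^{-1}\bigl((1,0,1,1)+(1,1,0,1)+(1,1,1,0)-2(0,1,1,1)\bigr)$; these are equivalent justifications of the same fact.
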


\begin{proof}
Let $V$ be the vector space of $4$-dimensional vectors whose components are from $GF(p)$,
and define the following subspaces of $V$:
\begin{align*}
    A = \langle (1,0,0,0)\rangle & B = \langle (0,1,0,0)\rangle\\
    C = \langle (0,0,1,0)\rangle & D = \langle (0,0,0,1)\rangle\\
    W = \langle (0,1,1,1)\rangle & X = \langle (1,0,1,1)\rangle\\
    Y = \langle (1,1,0,1)\rangle & Z = \langle (1,1,1,0)\rangle.
\end{align*}

We have:
\begin{align}
0
            &= H(W|B,C,D)    & \Comment{$(0,1,1,1)=(0,1,0,0)+(0,0,1,0)+(0,0,0,1)$}\nonumber\\
            &= H(X|A,C,D)    & \Comment{$(1,0,1,1)=(1,0,0,0)+(0,0,1,0)+(0,0,0,1)$}\nonumber\\
            &= H(Y|A,B,D)    & \Comment{$(1,1,0,1)=(1,0,0,0)+(0,1,0,0)+(0,0,0,1)$}\nonumber\\
            &= H(Z|A,B,C)    & \Comment{$(1,1,1,0)=(1,0,0,0)+(0,1,0,0)+(0,0,1,0)$}\nonumber\\
            &= H(A|B,W,X)    & \Comment{$(1,0,0,0)=(1,0,1,1)+(0,1,0,0)-(0,1,1,1)$}\nonumber\\
            &= H(C|A,W,Y)    & \Comment{$(0,0,1,0)=(0,1,1,1)+(1,0,0,0)-(1,1,0,1)$}\nonumber\\
            &= H(B|C,X,Y)    & \Comment{$(0,1,0,0)=(1,1,0,1)+(0,0,1,0)-(1,0,1,1)$}\nonumber\\
            &= H(D|A,W,Z)    & \Comment{$(0,0,0,1)=(0,1,1,1)+(1,0,0,0)-(1,1,1,0)$}\nonumber\\
            &= H(B|D,X,Z)    & \Comment{$(0,1,0,0)=(1,1,1,0)+(0,0,0,1)-(1,0,1,1)$}\nonumber\\
            &= H(C|D,Y,Z)    & \Comment{$(0,0,1,0)=(1,1,1,0)+(0,0,0,1)-(1,1,0,1)$}\nonumber\\
            &= H(A|W,X,Y,Z)  & \Comment{$(1,0,0,0)=3^{-1}( (1,0,1,1){+}(1,1,0,1){+}(1,1,1,0) {-} 2(0,1,1,1))$}.
\label{eq:zero-entropies-nonT8}
\end{align}
We know $H(A)=H(B)=H(C)=H(D)=H(W)=H(X)=H(Y)=H(Z)=1$,
Also, we have 
$$H(A) + H(B) + H(C) + H(D) = H(A,B,C,D).$$
So, if the inequality in 
Theorem~\ref{thm:nonT8} were to hold over $V$, 
then we would have
\begin{align*}
1 
&= H(A) \\
&\le 9H(Z) + 8H(Y) + 5H(X) + 6H(W) - 4H(D) - 12H(C) - 11H(B) -H(A)\\
&=  9 + 8 + 5 + 6 -4 -12 - 11 - 1\\
&= 0
\end{align*}
which is impossible.
\end{proof}

\begin{cor}
For the non-T8 network,
the linear coding capacity is at most $28/29$ 
over any finite field alphabet of characteristic equal to $3$.
The linear coding capacity over finite field alphabets of characteristic not $3$
and the coding capacity are all equal to $1$.
\label{cor:nonT8-capacity}
\end{cor}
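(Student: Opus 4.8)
The plan is to imitate the proof of Corollary~\ref{cor:T8-capacity} essentially line for line, using Theorem~\ref{thm:nonT8} and \eqref{eq:zero-entropies-nonT8} in place of Theorem~\ref{thm:T8} and \eqref{eq:zero-entropies-T8}. For the characteristic-$3$ bound, fix a finite field $F$ of characteristic $3$ and a $(k,n)$ linear solution of the non-T8 network over $F$. Let $A,B,C,D$ be the source messages, taken independent and uniform over $F^k$, and let $W,X,Y,Z$ be the random variables on the correspondingly labeled edges. Each conditional-entropy term occurring in Theorem~\ref{thm:nonT8} --- the four encoding terms $H(W|B,C,D),H(X|A,C,D),H(Y|A,B,D),H(Z|A,B,C)$ and the seven decoding terms $H(A|B,W,X),H(C|A,W,Y),H(B|C,X,Y),H(D|A,W,Z),H(B|D,X,Z),H(C|D,Y,Z),H(A|W,X,Y,Z)$ --- is an ``output given inputs'' quantity at some node of the non-T8 network (these are exactly the nodes named in \eqref{eq:zero-entropies-nonT8}, now read as encoding nodes $n_1,n_3,n_5,n_7$ and demand nodes $n_9,\dots,n_{15}$), so Lemma~\ref{lem:conditional-dimensions} makes all of them vanish. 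Source independence gives $H(A)+H(B)+H(C)+H(D)=H(A,B,C,D)$, killing the final term, and Theorem~\ref{thm:nonT8} collapses to
\begin{align*}
H(A) &\le 9H(Z)+8H(Y)+5H(X)+6H(W)-4H(D)-12H(C)-11H(B)-H(A).
\end{align*}
Putting $H(A)=H(B)=H(C)=H(D)=k$ and $H(W)=H(X)=H(Y)=H(Z)\le n$ gives $k\le 28n-28k$, i.e. $k/n\le 28/29$; since this holds for every linear solution, the linear coding capacity over any characteristic-$3$ field is at most $28/29<1$.

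For achievability over characteristics other than $3$, I would exhibit a scalar ($k=n=1$) linear solution valid over every field $F$ with $\mathrm{char}(F)\ne 3$. Use the encoding functions
\begin{align*}
W &= B+C+D,\qquad X = A+C+D,\qquad Y = A+B+D,\qquad Z = A+B+C.
\end{align*}
Every demand node then recovers its message by a fixed linear combination with coefficients in $\{-1,0,1\}$, hence legitimate over every field: $A=X-W+B$, $C=W-Y+A$, $B=Y-X+C$, $D=W-Z+A$, $B=Z-X+D$, $C=Z-Y+D$ at the six three-input demand nodes, together with $A=3^{-1}(X+Y+Z-2W)$ at the node demanding $A$ from $W,X,Y,Z$ (since $X+Y+Z-2W=3A$). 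The last decoding step is the only place the hypothesis enters, so this solution works precisely when $3$ is invertible in $F$; this is dual to the role played by characteristic $3$ in Corollary~\ref{cor:T8-capacity}. Hence the linear coding capacity over every field of characteristic other than $3$ is at least $1$.

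Finally, the coding capacity is at most $1$ by the same bottleneck observation used for the T8 network: reading off Figure~\ref{fig:nonT8}, every path from the source of $A$ to demand node $n_9$ (whose only $A$-dependent input is $X$, since its other inputs are a direct non-$A$ source and $W=B+C+D$) must pass through the single edge that carries $X$ out of the node computing it, so that source--demand pair is separated by a cut of value one and $\mathcal{C}\le 1$. Combined with the scalar solution this gives $\mathcal{C}=1$, and since the linear coding capacity over any characteristic $\ne 3$ lies between $1$ and $\mathcal{C}=1$, it equals $1$ as well. I expect the only points needing care to be the bookkeeping in the first part --- matching each conditional-entropy term of Theorem~\ref{thm:nonT8} to the correct node of Figure~\ref{fig:nonT8} so that Lemma~\ref{lem:conditional-dimensions} applies --- and pinpointing the value-one cut in Figure~\ref{fig:nonT8} for the last part; the decoding verifications in the characteristic-$\ne 3$ solution are routine once the encoding functions are fixed.
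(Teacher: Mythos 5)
Your proposal is correct and follows the paper's own proof essentially step for step: the same specialization of Theorem~\ref{thm:nonT8} via Lemma~\ref{lem:conditional-dimensions} and source independence to get $k/n\le 28/29$, the same scalar solution with $A=3^{-1}(X+Y+Z-2W)$ for characteristic $\ne 3$, and the same single-edge bottleneck between the source of $A$ and node $n_9$ (the paper phrases it as the unique path through $n_4$) to cap the coding capacity at $1$. The only cosmetic caution is that in the general (nonlinear) cut argument you should cite only the topological fact that $W$ is generated from $B,C,D$, not the specific formula $W=B+C+D$.
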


\begin{proof}
Let $F$ be a finite field alphabet.
Consider a $(k,n)$ linear solution of the non-T8 network over $F$, 
such that the characteristic of $F$ is $3$.
Let $A$, $B$, $C$, $D$ be message random variables in the T8 network,
that are uniformly distributed over vectors in $F^k$.
Let $W$, $X$, $Y$, $Z$ be the resulting random variables
associated with the corresponding labeled edges of T8 in Figure~\ref{fig:nonT8}.

Equations \eqref{eq:zero-entropies-nonT8} now hold 
with random variables $A,B,C,D,W,X,Y,Z$ are taken as random variables 
(i.e. not as subspaces as in Theorem~\ref{thm:non-T8-non-char3})
by Lemma~\ref{lem:conditional-dimensions}:
\begin{align*}
0
            &= H(W|B,C,D)    & \Comment{$(n_1,n_2)$}\nonumber\\
            &= H(X|A,C,D)    & \Comment{$(n_3,n_4)$}\nonumber\\
            &= H(Y|A,B,D)    & \Comment{$(n_5,n_6)$}\nonumber\\
            &= H(Z|A,B,C)    & \Comment{$(n_7,n_8)$}\nonumber\\
            &= H(A|B,W,X)    & \Comment{$n_{ 9}$}\nonumber\\
            &= H(C|A,W,Y)    & \Comment{$n_{10}$}\nonumber\\
            &= H(B|C,X,Y)    & \Comment{$n_{11}$}\nonumber\\
            &= H(D|A,W,Z)    & \Comment{$n_{12}$}\nonumber\\
            &= H(B|D,X,Z)    & \Comment{$n_{13}$}\nonumber\\
            &= H(C|D,Y,Z)    & \Comment{$n_{14}$}\nonumber\\
            &= H(A|W,X,Y,Z)  & \Comment{$n_{15}$}
\end{align*}
and since the source message $A,B,C,D$ are independent random variables, 
we have 
$$H(A) + H(B) + H(C) + H(D) = H(A,B,C,D)$$
so the non-T8 inequality in Theorem~\ref{thm:nonT8} reduces to
\begin{align*}
H(A) &\le  9H(Z) + 8H(Y) + 5H(X) + 6H(W) - 4H(D) - 12H(C) - 11H(B) -H(A).
\end{align*}
Now, since 
$H(A) = H(B) = H(C) = H(D) = k$ and 
$H(W) = H(X) = H(Y) = H(Z) \le n$, 
we have
\begin{align*}
k &\le  9n + 8n + 5n + 6n - 4k -12k - 11k - k\\
k/n &\le  28/29.
\end{align*}
So, the linear coding capacity over characteristic 3 is at most $28/29 < 1.$ 

The non-T8 network 
has a scalar linear solution  over every characteristic except for 3 by using the
following edge functions
(here we are using the notations $A,B,C,D,W,X,Y,Z$ to denote edge variables rather than vector spaces):
\begin{align*}
W &=  B + C + D\\
X &=  A + C + D\\
Y &=  A + B + D\\
Z &=  A + B + C
\end{align*}
and decoding functions:
\begin{align*}
n_{ 9}: A &= X - W + B\\
n_{10}: C &= W - Y + A\\
n_{11}: B &= Y - X + C\\
n_{12}: D &= W - Z + A\\
n_{13}: B &= Z - X + D\\
n_{14}: C &= Z - Y + D\\
n_{15}: A &= 3^{-1} \cdot ( X + Y + Z - 2W ).
\end{align*}
We know the coding capacity is at most 1 because there is a unique path from
source $A$ to node $n_9$ (through node $n_4$).
Since the coding capacity is at least as large as the linear coding
capacity for characteristics other than 3,
we conclude that the coding capacity is exactly equal to 1.
\end{proof}


\newpage

\end{document}